\newcommand{\R}{\mathbb{R}}
\newcommand{\N}{\mathbb{N}}
\newcommand{\NN}{\mathcal{N}}
\newcommand{\MN}{\mathcal{MN}}
\newcommand{\ME}{\mathcal{ME}}
\newcommand{\norm}[1]{\left\lVert#1\right\rVert}
\newcommand{\abs}[1]{\left|#1\right|}
\newcommand{\xhat}[1]{\hat{\bm{x}}^{#1}}
\newcommand{\xhatj}[2]{\hat{x}_{#2}^{#1}}
\DeclareMathOperator\md{MD}
\DeclareMathOperator\mmd{MMD}
\DeclareMathOperator\diag{diag}
\DeclareMathOperator\tr{tr}
\DeclareMathOperator\vect{vec}
\DeclareMathOperator\cov{cov}
\DeclareMathOperator\old{old}
\DeclareMathOperator\new{new}
\DeclareMathOperator\row{row}
\DeclareMathOperator\col{col}
\DeclareMathOperator\vol{vol}
\DeclareMathOperator\fix{fix}
\DeclareMathOperator\mix{mix}
\DeclareMathOperator\rnd{rnd}
\DeclareMathOperator\mle{MLE}
\DeclareMathOperator\mcd{MCD}
\DeclareMathOperator\mmcd{MMCD}
\DeclareMathOperator\pds{PDS}
\DeclareMathOperator\rank{rank}
\DeclareMathOperator\M{M}
\DeclareMathOperator*{\argmin}{arg\,min}
\newtheorem{definition}{Definition}[subsection]
\newtheorem{remark}{Remark}[subsection]
\theoremstyle{plain}
\newtheorem{theorem}{Theorem}[subsection]
\newtheorem{proposition}{Proposition}[subsection]
\newtheorem{lemma}{Lemma}[subsection]
\title{Robust covariance estimation and explainable outlier detection for matrix-valued data}
\author{\href{https://orcid.org/0000-0002-3430-8308}{\includegraphics[scale=0.06]{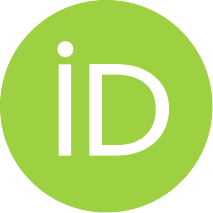}\hspace{1mm}Marcus Mayrhofer}\\
	TU Wien\\
	\texttt{marcus.mayrhofer@tuwien.ac.at} \\
        \And
	\href{https://orcid.org/0000-0003-0329-0595}{\includegraphics[scale=0.06]{orcid.pdf}\hspace{1mm}Una Radojičić} \\
	TU Wien\\
	\texttt{una.radojicic@tuwien.ac.at} \\
	\And
	\href{https://orcid.org/0000-0002-8014-4682}{\includegraphics[scale=0.06]{orcid.pdf}\hspace{1mm}Peter Filzmoser} \\
	TU Wien\\
	\texttt{peter.filzmoser@tuwien.ac.at} \\        
}
\begin{document}
\maketitle

\begin{abstract}
This work introduces the Matrix Minimum Covariance Determinant (MMCD) method, a novel robust location and covariance estimation procedure designed for data that are naturally represented in the form of a matrix. Unlike standard robust multivariate estimators, which would only be applicable after a vectorization of the matrix-variate samples leading to high-dimensional datasets, the MMCD estimators account for the matrix-variate data structure and consistently estimate the mean matrix, as well as the rowwise and columnwise covariance matrices in the class of matrix-variate elliptical distributions. Additionally, we show that the MMCD estimators are matrix affine equivariant and achieve a higher breakdown point than the maximal achievable one by any multivariate, affine equivariant location/covariance estimator when applied to the vectorized data. An efficient algorithm with convergence guarantees is proposed and implemented. As a result, robust Mahalanobis distances based on MMCD estimators offer a reliable tool for outlier detection. Additionally, we extend the concept of Shapley values for outlier explanation to the matrix-variate setting, enabling the decomposition of the squared Mahalanobis distances into contributions of the rows, columns, or individual cells of matrix-valued observations. Notably, both the theoretical guarantees and simulations show that the MMCD estimators outperform robust estimators based on vectorized observations, offering better computational efficiency and improved robustness. Moreover, real-world data examples demonstrate the practical relevance of the MMCD estimators and the resulting robust Shapley values.
\end{abstract}

\keywords{Matrix-variate distributions \and Covariance with Kronecker structure \and Explainable artificial intelligence \and Image data \and Minimum covariance determinant \and Shapley values}

\section{Introduction}\label{sec:Introduction}
Thanks to modern data collection tools, the amount and complexity of available information are increasing rapidly, and matrix-valued data are often observed. Compared to classical multivariate observations, where values for $p$ variables are recorded for one subject, matrix-valued observations are recorded on a grid of $p \times q$ variables. These are then naturally represented as a matrix with $p$ rows and $q$ columns. Some examples include image data, where $p$ and $q$ are given by the resolution of the image, or multivariate data measured on $p$ variables, where the measurements for a subject are available for $q$ replications (e.g., different time points, different spatial locations, different experimental conditions, etc.). Frequently, matrix-valued data are analyzed as classical multivariate data by stacking the matrix columns (or rows) to a vector of length $p\cdot q$. Thus, if $n$ observations are available, the data are arranged in a matrix of dimension $n \times pq$. Depending on the dimensions, this can create high-dimensional data, possibly with a sample size lower than the resulting dimensionality, which constitutes a limitation for multivariate statistical methods. 

As an alternative to vectorizing matrix-valued observations, we model them under the assumption that they originate from a certain matrix-variate distribution. As in the multivariate setting, the class of matrix-elliptical distributions~\citep{gupta2012elliptically}, serves as a natural ground for studying covariance estimation. The matrix-elliptical family is a semi-parametric class of distributions parametrized by the mean $\bm{\M} \in \R^{p \times q}$, row covariance $\bm{\Sigma}^{\row} \in \pds(p)$, column covariance $\bm{\Sigma}^{\col} \in \pds(q)$, and the so-called density generator function $g: [0,\infty) \to \R$.  
Here, $\pds(a)$, with $a \in \N$, denotes the class of all positive definite symmetric $a \times a$ matrices.
More specifically, a random matrix $\bm{X}$ with an absolutely continuous distribution has an elliptical distribution, denoted  $\ME(\bm{\M}, \bm{\Sigma}^{\row}, \bm{\Sigma}^{\col},g)$, if its density can be written as
\begin{align}
    f(\bm{X}) = \det(\bm{\Sigma}^{\row})^{\nicefrac{-q}{2}}\det(\bm{\Sigma}^{\col})^{\nicefrac{-p}{2}} g(\tr(\bm{\Omega}^{\col}(\bm{X}-\bm{\M})'\bm{\Omega}^{\row}(\bm{X}-\bm{\M}))), \label{eq:matrix_elliptical}
\end{align}
with $\bm{\Omega}^{\row} = (\bm{\Sigma}^{\row})^{-1}$ and $\bm{\Omega}^{\col} = (\bm{\Sigma}^{\col})^{-1}$ denoting the precision matrices among the rows and columns, respectively. Matrix elliptical distributions can also be related to their multivariate counterparts. Formally, a random matrix $\bm{X}$ follows a matrix elliptical distribution $\ME(\bm{\M}, \bm{\Sigma}^{\row}, \bm{\Sigma}^{\col},g)$ if and only if its vectorized version $\vect{\bm{X}}$ follows a multivariate elliptical distribution $\mathcal{E}(\vect(\bm{\M}),\bm{\Sigma}^{\col} \otimes \bm{\Sigma}^{\row},g)$ \citep{gupta2012elliptically}. Here, $\vect(\cdot)$ is the vectorization operator, stacking the columns of a matrix on top of each other, $\otimes$ is the Kronecker product.
Probably the most studied matrix elliptical distribution is the matrix normal distribution \citep{dawid1981matrix}, denoted $\MN(\bm{\M}, \bm{\Sigma}^{\row}, \bm{\Sigma}^{\col})$, with density
\begin{align}
    f(\bm{X}|\bm{\M}, \bm{\Sigma}^{\row}, \bm{\Sigma}^{\col}) = \frac{\exp(-\frac{1}{2}\tr(\bm{\Omega}^{\col} (\bm{X}-\bm{\M})' \bm{\Omega}^{\row}(\bm{X}-\bm{\M})))}{(2\pi)^{\nicefrac{pq}{2}} \det(\bm{\Sigma}^{\col})^{\nicefrac{p}{2}} \det(\bm{\Sigma}^{\row})^{\nicefrac{q}{2}}}. \label{eq:matrix_normal_density}
\end{align}

Regarding the estimation of location and covariance for an i.i.d. sample $\bm{\mathfrak{X}} = (\bm{X}_1,\ldots,\bm{X}_n) \in \R^{n \times p \times q}$, with $\bm{X}_i \sim \MN(\bm{\M}, \bm{\Sigma}^{\row}, \bm{\Sigma}^{\col})$, we can either work with the vectorized observations or directly with the matrices. 
In the former setting, the existence and uniqueness of the maximum likelihood estimator (MLE) for the covariance is guaranteed almost surely if $n \geq pq + 1$. However, this approach does not take advantage of the Kronecker structure of the covariance matrix and instead directly estimates the entire $pq$-dimensional matrix $\bm{\Sigma}$. 
In contrast, if we utilize the knowledge of the inherent data structure, we only need to estimate the $p$-dimensional rowwise covariance matrix $\bm{\Sigma}^{\row}$ and the $q$-dimensional columnwise covariance matrix $\bm{\Sigma}^{\col}$. For the matrix-variate sample $\bm{\mathfrak{X}}$, the MLEs for the mean, as well as for the rowwise and columnwise covariance, are given by \citep{Dutilleul1999}:
\begin{align}
    \hat{\bm{\M}} &= \frac{1}{n}\sum_{i = 1}^n \bm{X}_i \label{eq:matrix_mean_mle}\\
    \hat{\bm{\Sigma}}^{\row} &= \frac{1}{qn} \sum_{i = 1}^n (\bm{X}_i - \hat{\bm{\M}})\hat{\bm{\Omega}}^{\col} (\bm{X}_i - \hat{\bm{\M}})'\label{eq:matrix_cov_row_mle}\\
    \hat{\bm{\Sigma}}^{\col} &= \frac{1}{pn} \sum_{i = 1}^n (\bm{X}_i - \hat{\bm{\M}})'\hat{\bm{\Omega}}^{\row} (\bm{X}_i - \hat{\bm{\M}})\label{eq:matrix_cov_col_mle}
\end{align}
\cite{soloveychik2016gaussian} showed that for $n$ i.i.d. samples from a continuous $p \times q$ matrix-variate distribution, there exists no unique maximum of the matrix normal likelihood function if $n < \max(\nicefrac{p}{q}, \nicefrac{q}{p}) + 1$, and that a unique maximum exists almost surely if $n \geq \lfloor \nicefrac{p}{q} + \nicefrac{q}{p} \rfloor + 2$.
Although there are no closed-form solutions for the maximum likelihood estimates (MLEs) of $\bm{\Sigma}^{\row}$ and $\bm{\Sigma}^{\col}$, \cite{Dutilleul1999} proposed an iterative estimation procedure. 
The idea of the so-called \emph{flip-flop} algorithm is to alternate between the computation of $\hat{\bm{\Sigma}}^{\row}$ and $\hat{\bm{\Sigma}}^{\col}$ based on Equations~\eqref{eq:matrix_cov_row_mle} and \eqref{eq:matrix_cov_col_mle}, respectively, until a convergence criterion is met. The algorithm is constructed such that positive definite estimates of subsequent iterations are nondecreasing in likelihood \citep{lu2005likelihood}, and it converges almost surely to the unique maximum from any symmetric positive definite initialization of either $\hat{\bm{\Sigma}}^{\row}$ or $\hat{\bm{\Sigma}}^{\col}$, if $n \geq \lfloor \nicefrac{p}{q} + \nicefrac{q}{p} \rfloor + 2$ \citep{soloveychik2016gaussian}. 

Existing proposals for robust covariance estimation include a generalization of Tyler's M-estimator \citep{tyler1987distribution} introduced by \cite{soloveychik2016gaussian}, a robust estimator for structured covariance matrices with Kronecker structure as a particular case \citep{sun2016robust}, distribution-free robust covariance estimation \citep{zhang2022covariance}, and ML estimation for the matrix t-distribution \citep{thompson2020classification}. 

We propose novel robust estimators for the parameters $\bm{\M}$, $\bm{\Sigma}^{\row}$, and $\bm{\Sigma}^{\col}$, termed the matrix minimum covariance determinant (MMCD) estimators. These estimators generalize the minimum covariance determinant (MCD) approach \citep{Rousseeuw1985}, one of the most widely used approaches for robustly estimating the mean and covariance of multivariate (vector-valued) data. We show that the MMCD estimators are equivariant under matrix affine transformations and surpass the maximal attainable breakdown point of any multivariate, affine equivariant location/covariance estimator when applied to the vectorized data, such as the MCD estimator.
Additionally, we show that the MMCD estimators are consistent for the finite-dimensional parameters $(\bm{\M},\bm{\Sigma}^{\row},\bm{\Sigma}^{\col})$ of the matrix elliptical distribution,  thus bridging a gap between the individual, distribution-specific, estimators in the elliptical family.
Furthermore, a concentration step (C-step) algorithm is developed to efficiently compute the MMCD estimators; see \cite{Rousseeuw1999} for more details on C-step for MCD. Additionally, we introduce a reweighting step that preserves the properties of the MMCD estimators and greatly increases finite-sample efficiency.  

The robust MMCD estimators can then be employed for outlier detection using the Mahalanobis distances~\citep{mahalanobis1936} for matrix-valued observations. Because it is essential to understand the reasons for the outlyingness, we extend the concept of Shapley values introduced in \cite{mayrhofer2022} for outlier explanation in the multivariate case to the matrix-variate setting. Shapley values \citep{shapley1953} are well-known from explainable AI \citep{lundberg2017}, but their computation is usually time-consuming. Our proposal is computationally efficient, and the resulting Shapley values preserve their attractive properties~\citep{shapley1953}. 

The paper is organized as follows. In Section~\ref{section:MMCD}, we introduce the MMCD estimators, then proceed to derive their theoretical properties in Section~\ref{sec:section 3}. Section~\ref{section:MMCD_cstep} is devoted to computational details for the MMCD estimators. 
In Section~\ref{section:shapley}, we propose Shapley values for outlier explanation and present their properties.
In Sections~\ref{section:simulations} and~\ref{section:examples}, we illustrate the performance of the proposed methods on numerical simulations and real-world examples. Section~\ref{section:conclusion} concludes our findings.
The supplementary materials contain more information on the theoretical background in this context, proofs, technical derivations, code, and additional numerical results. 

\section{The MMCD estimators}
\label{section:MMCD}
The MLEs given in Equations~\eqref{eq:matrix_mean_mle}-\eqref{eq:matrix_cov_col_mle}, much like the multivariate normal MLEs, i.e. sample mean and covariance, also serve as valid (consistent) parameter estimators in the class of elliptical distributions; see Remark \ref{rem:MLE_consistency}. However, just like their multivariate counterparts, these are not robust against outlying observations. In order to obtain robust estimators for the finite-dimensional parameters $(\bm{\M},\bm{\Sigma}^{\row}, \bm{\Sigma}^{\col})$ in Equation~\eqref{eq:matrix_elliptical}, we optimize the weighted version of the matrix-normal (log-)likelihood function.
This principle has been similarly used in the context of other robust estimators~\citep[e.g.,][]{neykov2007,garcia2010,kurnaz2018}, and in particular, \cite{Raymaekers2022} show that the MCD estimator can be reformulated in terms of likelihood; the objective of the MCD estimator is to identify the subset of $h$ out of $n$ samples ($\nicefrac{n}{2} \leq h \leq n$) with the smallest determinant of the sample covariance matrix. This is equivalent to determining a subset of size $h$ that maximizes the multivariate normal (log-)likelihood function.

Extending the concept of the multivariate MCD approach, we introduce weights $\bm{w} = (w_1,\dots,w_n) \in \R^{n}$ for a given sample $\bm{\mathfrak{X}} = (\bm{X}_1,\ldots ,\bm{X}_n)$ that is independently drawn from $\MN(\bm{\M}, \bm{\Sigma}^{\row}, \bm{\Sigma}^{\col})$ to formulate the weighted log-likelihood function $l(\bm{w},\bm{\M}, \bm{\Sigma}^{\row}, \bm{\Sigma}^{\col}|\bm{\mathfrak{X}})$ as
\begin{align} 
    \label{eq:loglik}
    -\frac{1}{2} \sum_{i=1}^n w_i \Bigl( 
    p\ln(\det(\bm{\Sigma}^{\col})) + q\ln(\det(\bm{\Sigma}^{\row})) + \mmd^2(\bm{X}) + pq\ln(2\pi) \Bigr),
\end{align}
where $\mmd^2(\bm{X})$ denotes the squared matrix Mahlanobis distance defined as
\begin{align}
    \label{eq:MD=MMD}
    \mmd^2(\bm{X})  := \mmd^2(\bm{X};\bm{\M},\bm{\Sigma}^{\row},\bm{\Sigma}^{\col}) = \tr(\bm{\Omega}^{\col}(\bm{X}-\bm{\M})'\bm{\Omega}^{\row}(\bm{X}-\bm{\M})).
\end{align}

Setting $w_i=1$ for all $i=1,\ldots, n$, yields the traditional log-likelihood function, and its maximization yields the MLEs of Equation~\eqref{eq:matrix_mean_mle}-\eqref{eq:matrix_cov_col_mle}.
However, by taking binary weights, $w_i \in \{0,1\}$, with the constraint that $\sum_{i = 1}^n w_i = h$, we see that $n-h$ contributions are trimmed. 
Since contributions from outliers should be trimmed, the task is to identify the subset of regular observations $H \subset \{1,\ldots,n\}$ with $|H|=h$, where $w_i=1$ for $i\in H$ and $0$ otherwise. 
The resulting constrained optimization problem of finding the weighted MLE can be written as
\begin{align} \label{eq:max_loglik}
    \begin{split}
        &\max_{\bm{w},\bm{\M}, \bm{\Sigma}^{\row}, \bm{\Sigma}^{\col}} l(\bm{w},\bm{\M}, \bm{\Sigma}^{\row}, \bm{\Sigma}^{\col}|\bm{\mathfrak{X}}) \\
        & \text{subject to} \quad w_i \in \{0,1\} \text{ for all } i = 1,\dots,n \quad \text{and} \quad \sum_{i = 1}^n w_i = h.
    \end{split}    
\end{align}

To improve clarity, we will use the following notation for subsamples of $\bm{\mathfrak{X}}$ and estimators based on it: Let $H \subseteq \{1,\dots,n\}$ be a subset of size $h = \abs{H}$, then $\bm{\mathfrak{X}}_{H} := (\bm{X}_i)_{i \in H}$ denotes an $h$-subset of $\bm{\mathfrak{X}}$. An estimator for a parameter $\theta$ based on the sample $\bm{\mathfrak{X}}$ is denoted as $\hat{\theta}_{\bm{\mathfrak{X}}}$ or simply as $\hat{\theta}$ if it is clear on which sample the estimator is computed. Similarly, if an estimator is based on an $h$-subset, it is denoted as $\hat{\theta}_{H}$ or as $\hat{\theta}_{\bm{\mathfrak{X}}_{H}}$.

\begin{proposition}
\label{proposition:MCD_likelihood}
Let $\bm{\mathfrak{X}} = (\bm{X}_1,\dots,\bm{X}_n)$, $\nicefrac{n}{2} \leq h \leq n$ and $h \geq \lfloor \nicefrac{p}{q} + \nicefrac{q}{p} \rfloor + 2$, be an i.i.d. sample from $\MN(\bm{\M}, \bm{\Sigma}^{\row}, \bm{\Sigma}^{\col})$. Maximizing the weighted log-likelihood function~\eqref{eq:max_loglik} is equivalent to minimizing
\begin{align}
    \ln(\det(\hat{\bm{\Sigma}}^{\col}_{H} \otimes \hat{\bm{\Sigma}}^{\row}_{H})) = p\ln(\det(\hat{\bm{\Sigma}}^{\col}_{H})) + q\ln(\det(\hat{\bm{\Sigma}}^{\row}_{H}))
    \label{eq:MMCD_cov_determinant}
\end{align}
across all subsets $H \subset \{1,\dots,n\}$ with $|H| = h$. In Equation~\eqref{eq:MMCD_cov_determinant}, 
\begin{align}
    \hat{\bm{\M}}_{H} &= \frac{1}{h}\sum_{i \in H} \bm{X}_i,  \label{eq:MCD_Mean}\\
    \hat{\bm{\Sigma}}^{\row}_{H} &= \frac{1}{qh} \sum_{i \in H} (\bm{X}_i - \hat{\bm{\M}}_{H})\hat{\bm{\Omega}}^{\col}_{H} (\bm{X}_i - \hat{\bm{\M}}_{H})' \label{eq:MCD_Sigma_r} \text{, and}\\
    \hat{\bm{\Sigma}}^{\col}_{H} &= \frac{1}{ph} \sum_{i \in H} (\bm{X}_i - \hat{\bm{\M}}_{H})'\hat{\bm{\Omega}}^{\row}_{H} (\bm{X}_i - \hat{\bm{\M}}_{H})  \label{eq:MCD_Sigma_c}
\end{align}
denote the MLEs based on the observations in $H$, and $\hat{\bm{\Omega}}^{\row}_{H}=(\hat{\bm{\Sigma}}^{\row}_{H})^{-1}$
and $\hat{\bm{\Omega}}^{\col}_{H}=(\hat{\bm{\Sigma}}^{\col}_{H})^{-1}$ denote the corresponding precision matrices.
\end{proposition}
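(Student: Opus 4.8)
The plan is to separate the joint optimization in~\eqref{eq:max_loglik} into an outer combinatorial search over subsets and an inner continuous maximization over the parameters. Since the weights are binary and sum to $h$, every feasible $\bm{w}$ is in one-to-one correspondence with an $h$-subset $H = \{i : w_i = 1\}$, and on that subset the weighted objective in~\eqref{eq:loglik} collapses to the unweighted matrix-normal log-likelihood evaluated on $\bm{\mathfrak{X}}_H$. The problem thus factorizes as a maximization over $|H| = h$ of the inner maximum $\max_{\bm{\M}, \bm{\Sigma}^{\row}, \bm{\Sigma}^{\col}} l(\bm{\M}, \bm{\Sigma}^{\row}, \bm{\Sigma}^{\col} \mid \bm{\mathfrak{X}}_H)$, and I would first resolve this inner problem for an arbitrary but fixed $H$.

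For fixed $H$, the inner maximization is exactly the matrix-normal MLE problem on the subsample $\bm{\mathfrak{X}}_H$ of size $h$. The hypothesis $h \geq \lfloor \nicefrac{p}{q} + \nicefrac{q}{p} \rfloor + 2$ guarantees, almost surely, that this maximizer exists and is unique \citep{soloveychik2016gaussian}, and that it is characterized by the (implicit) stationarity relations~\eqref{eq:MCD_Mean}--\eqref{eq:MCD_Sigma_c}. Hence the inner maximum is attained at $(\hat{\bm{\M}}_H, \hat{\bm{\Sigma}}^{\row}_H, \hat{\bm{\Sigma}}^{\col}_H)$, and it only remains to evaluate $l$ there and verify that, up to additive $H$-independent constants, it equals the negative of~\eqref{eq:MMCD_cov_determinant}.

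The crucial step is the evaluation of the aggregated Mahalanobis term at the MLE. Using the cyclicity of the trace to rewrite $\mmd^2(\bm{X}_i)$ and then substituting the defining identity~\eqref{eq:MCD_Sigma_r} for $\hat{\bm{\Sigma}}^{\row}_H$, I would obtain
\begin{align*}
    \sum_{i \in H} \mmd^2(\bm{X}_i; \hat{\bm{\M}}_H, \hat{\bm{\Sigma}}^{\row}_H, \hat{\bm{\Sigma}}^{\col}_H)
    &= \tr\Bigl(\hat{\bm{\Omega}}^{\row}_H \sum_{i \in H} (\bm{X}_i - \hat{\bm{\M}}_H)\hat{\bm{\Omega}}^{\col}_H (\bm{X}_i - \hat{\bm{\M}}_H)'\Bigr) \\
    &= \tr\bigl(\hat{\bm{\Omega}}^{\row}_H \, (qh)\, \hat{\bm{\Sigma}}^{\row}_H\bigr) = qh\,\tr(\bm{I}_p) = pqh,
\end{align*}
so this contribution is the constant $pqh$, independent of $H$. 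The term $pq\ln(2\pi)$ summed over $H$ is likewise a constant. Substituting into~\eqref{eq:loglik} and applying $\det(\bm{\Sigma}^{\col} \otimes \bm{\Sigma}^{\row}) = \det(\bm{\Sigma}^{\col})^p \det(\bm{\Sigma}^{\row})^q$, the maximized log-likelihood on $H$ equals $-\tfrac{h}{2}\ln(\det(\hat{\bm{\Sigma}}^{\col}_H \otimes \hat{\bm{\Sigma}}^{\row}_H))$ plus a constant. As $h$ is fixed, maximizing over $H$ is therefore equivalent to minimizing~\eqref{eq:MMCD_cov_determinant}.

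The main obstacle I anticipate is not the algebra above but the clean justification of the inner step: relations~\eqref{eq:MCD_Sigma_r}--\eqref{eq:MCD_Sigma_c} are implicit, since each covariance appears on the right-hand side through its own inverse, so differentiating the likelihood yields only critical-point equations rather than a closed form. I would therefore rely on the cited existence/uniqueness and flip-flop convergence results, valid under the assumed lower bound on $h$, to certify that~\eqref{eq:MCD_Mean}--\eqref{eq:MCD_Sigma_c} pin down the genuine global maximizer on each subsample, after which the trace and Kronecker-determinant identities deliver the reduction directly.
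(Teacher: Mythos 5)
Your proposal is correct and follows essentially the same route as the paper's proof: reduce to the subset-wise matrix-normal MLE, use the cyclicity of the trace together with the defining identity for $\hat{\bm{\Sigma}}^{\row}_{H}$ to show that the aggregated Mahalanobis term equals the constant $pqh$, and conclude that only the determinant terms vary with $H$. Your explicit remark that the implicit stationarity equations must be certified as the global maximizer via the cited existence/uniqueness results is a point the paper leaves tacit, but the argument is otherwise identical.
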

A proof is given in Supplement~\ref{supplement:mmcd_proofs}. Based on this proposition, we obtain a matrix-variate counterpart to the multivariate MCD estimator's objective, resulting in robust estimators of the parameters $\bm{\M}$, $\bm{\Sigma}^{\row}$, and $\bm{\Sigma}^{\col}$. 
\begin{definition} \label{definition:mmcd}
    Let $\bm{\mathfrak{X}} = (\bm{X}_1,\dots,\bm{X}_n)$, $\nicefrac{n}{2} \leq h \leq n$ and $h \geq \lfloor \nicefrac{p}{q} + \nicefrac{q}{p} \rfloor + 2$, be an i.i.d sample of a continuous $p \times q$ matrix-variate distribution. The \emph{raw} matrix minimum covariance determinant (MMCD) estimators are defined as
    \begin{align}
    \label{eq:MMCD}        
    (\hat{\bm{\M}}_{H^\ast},\hat{\bm{\Sigma}}^{\row}_{H^\ast}, \hat{\bm{\Sigma}}^{\col}_{H^\ast}) := \argmin_{\substack{\hat{\bm{\M}}_{H},\hat{\bm{\Sigma}}^{\row}_{H}, \hat{\bm{\Sigma}}^{\col}_{H}\\ H \subset \{1,\dots,n\}, \abs{H} = h}} p\ln(\det(\hat{\bm{\Sigma}}^{\col}_{H})) + q\ln(\det(\hat{\bm{\Sigma}}^{\row}_{H})),
    \end{align}
    with $\hat{\bm{\M}}_{H}$, $\hat{\bm{\Sigma}}^{\row}_{H}$, and $\hat{\bm{\Sigma}}^{\col}_{H}$ as in Equations~\eqref{eq:MCD_Mean}, \eqref{eq:MCD_Sigma_r}, and \eqref{eq:MCD_Sigma_c}, respectively. 
\end{definition}
The estimators in Definition~\eqref{definition:mmcd}  almost surely exist and are positive definite if $h \geq \lfloor \nicefrac{p}{q} + \nicefrac{q}{p} \rfloor + 2$~\citep{soloveychik2016gaussian}.
If $p = 1$ and/or $q = 1$, optimization problem~\eqref{eq:MMCD} coincides with the optimization problem of the MCD estimator, and one obtains the univariate or multivariate MCD estimator, respectively.

\section{Properties of the MMCD estimators} \label{sec:section 3}

\paragraph{Matrix affine equivariance.}
The concept of affine equivariance in multivariate analysis is rooted in the idea that the estimators used for location and covariance should transform in the same way as the parameters of elliptically symmetrical unimodal distributions (referred to as elliptical distributions hereafter), see \cite{maronna2019robust}.
We can define the matrix-variate analog of affine equivariance based on the properties of matrix-variate elliptical distributions, which are frequently employed to study the robustness properties of normal theory under nonnormal situations \citep{gupta1999}. 

Linear functions of a random matrix $\bm{X} \sim \ME(\bm{\M}, \bm{\Sigma}^{\row}, \bm{\Sigma}^{\col},g)$ also have an elliptical distribution \citep{gupta2012elliptically}. This means that for constant matrices $\bm{A} \in \R^{r \times p}$, $\rank(\bm{A}) = r \leq p$, $\bm{B} \in \R^{q \times s}$, $\rank(\bm{B}) = s \leq q$, and $\bm{C} \in \R^{r \times s}$, the transformed random matrix $\bm{Z} = \bm{A}\bm{X}\bm{B} + \bm{C}$ has density 
\begin{align} 
    \label{eq:elliptical_density_transformation}
    \bm{Z} \sim \ME(\bm{A}\bm{\M}\bm{B} + \bm{C}, \bm{A}\bm{\Sigma}^{\row}\bm{A}', \bm{B}'\bm{\Sigma}^{\col}\bm{B},g).
\end{align}
Let $\hat{\bm{\M}}_{\bm{\mathfrak{X}}}$, $\hat{\bm{\Sigma}}^{\row}_{\bm{\mathfrak{X}}}$, and $\hat{\bm{\Sigma}}^{\col}_{\bm{\mathfrak{X}}}$ denote the estimators based on a sample $\bm{\mathfrak{X}} = (\bm{X}_1,\dots,\bm{X}_n)$ generated by $f(\bm{\M}, \bm{\Sigma}^{\row}, \bm{\Sigma}^{\col})$. Then the estimators of the sample $\bm{\mathfrak{Z}} = (\bm{A}\bm{X}_1\bm{B} + \bm{C},\dots,\bm{A}\bm{X}_n\bm{B} + \bm{C})$ should transform in the same way as the parameters in Equation~\eqref{eq:elliptical_density_transformation}, i.e., 
\begin{align}\label{eq:matrix_affine_equivariance}
    \hat{\bm{\M}}_{\bm{\mathfrak{Z}}} = \bm{A}\hat{\bm{\M}}_{\bm{\mathfrak{X}}}\bm{B} + \bm{C}, \quad 
    \hat{\bm{\Sigma}}^{\row}_{\bm{\mathfrak{Z}}} = \bm{A}\hat{\bm{\Sigma}}^{\row}_{\bm{\mathfrak{X}}}\bm{A}', \quad 
    \hat{\bm{\Sigma}}^{\col}_{\bm{\mathfrak{Z}}} = \bm{B}'\hat{\bm{\Sigma}}^{\col}_{\bm{\mathfrak{X}}}\bm{B}.
\end{align}
Properties~\eqref{eq:matrix_affine_equivariance} provide a suitable generalization of affine equivariance to the matrix-variate setting, and it is easy to verify that they hold for the estimators given in \eqref{eq:matrix_mean_mle}-\eqref{eq:matrix_cov_col_mle}.
However, they do not imply affine equivariance of the location and covariance estimators for the vectorized observations. This would only hold for transformations with the Kronecker structure $\vect(\bm{A}\bm{X}\bm{B} + \bm{C}) = (\bm{B}' \otimes \bm{A})\vect(\bm{X}) + \vect(\bm{C})$. 
We refer to Properties~\eqref{eq:matrix_affine_equivariance} as \emph{matrix affine equivariance} to avoid confounding definitions. 

\begin{lemma} \label{lemma:kronecker_affine_transformation}
    Let $\bm{\mathfrak{X}} = (\bm{X}_1,\dots,\bm{X}_n)$ be a sample of $p \times q$ matrices, where $\bm{X}_i \sim \ME(\bm{\M}_{\bm{\mathfrak{X}}}, \bm{\Sigma}^{\row}_{\bm{\mathfrak{X}}}, \bm{\Sigma}^{\col}_{\bm{\mathfrak{X}}},g)$, and let $\bm{\mathfrak{Z}} = (\bm{Z}_1,\dots,\bm{Z}_n)$ be the affine transformation of $\bm{\mathfrak{X}}$, i.e.,  $\bm{Z}_i = \bm{A}\bm{X}_i\bm{B} + \bm{C}$, $\bm{A} \in \R^{p \times p}$, $\bm{B} \in \R^{q \times q}$, $\bm{A}, \bm{B}$ invertible, and $\bm{C} \in \R^{p \times q}$. The following then holds:
    \begin{enumerate}\renewcommand{\labelenumi}{(\alph{enumi})}
        \item The MMCD estimators as in Definition~\ref{definition:mmcd}
        are matrix affine equivariant. 
        \item $\mmd^2(\bm{Z}_i;\hat{\bm{\M}}_{\bm{\mathfrak{Z}}}, \hat{\bm{\Sigma}}^{\row}_{\bm{\mathfrak{Z}}}, \hat{\bm{\Sigma}}^{\col}_{\bm{\mathfrak{Z}}}) = \mmd^2(\bm{X}_i;\hat{\bm{\M}}_{\bm{\mathfrak{X}}}, \hat{\bm{\Sigma}}^{\row}_{\bm{\mathfrak{X}}}, \hat{\bm{\Sigma}}^{\col}_{\bm{\mathfrak{X}}})$, where $(\hat{\bm{\M}}_{\bm{\mathfrak{Z}}}, \hat{\bm{\Sigma}}^{\row}_{\bm{\mathfrak{Z}}}, \hat{\bm{\Sigma}}^{\col}_{\bm{\mathfrak{Z}}})$ are matrix affine equivariant location and covariance estimators of the transformed sample $\bm{\mathfrak{Z}}$. 
    \end{enumerate}
\end{lemma}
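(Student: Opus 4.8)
The plan is to prove the two parts in order, establishing the equivariance relations \eqref{eq:matrix_affine_equivariance} in part (a) and then feeding them into a direct trace computation for part (b). For part (a) I would argue in two stages: first fix an arbitrary $h$-subset $H$ and show the per-subset MLEs \eqref{eq:MCD_Mean}--\eqref{eq:MCD_Sigma_c} transform equivariantly, and then show that the objective in \eqref{eq:MMCD} changes only by an $H$-independent additive constant, so that the minimizing subset $H^\ast$ coincides for $\bm{\mathfrak{X}}$ and $\bm{\mathfrak{Z}}$.

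For the first stage, the mean is immediate from its closed form, $\hat{\bm{\M}}_{\bm{\mathfrak{Z}}_H} = \tfrac{1}{h}\sum_{i\in H}(\bm{A}\bm{X}_i\bm{B}+\bm{C}) = \bm{A}\hat{\bm{\M}}_{\bm{\mathfrak{X}}_H}\bm{B}+\bm{C}$, whence the centered observations satisfy $\bm{Z}_i - \hat{\bm{\M}}_{\bm{\mathfrak{Z}}_H} = \bm{A}(\bm{X}_i - \hat{\bm{\M}}_{\bm{\mathfrak{X}}_H})\bm{B}$. The covariances are only defined implicitly through the coupled fixed-point equations \eqref{eq:MCD_Sigma_r}--\eqref{eq:MCD_Sigma_c}, so I would substitute the ansatz $\bm{A}\hat{\bm{\Sigma}}^{\row}_{\bm{\mathfrak{X}}_H}\bm{A}'$ and $\bm{B}'\hat{\bm{\Sigma}}^{\col}_{\bm{\mathfrak{X}}_H}\bm{B}$ together with $\tilde{\bm{\Omega}}^{\col} = \bm{B}^{-1}\hat{\bm{\Omega}}^{\col}_{\bm{\mathfrak{X}}_H}(\bm{B}')^{-1}$ into the defining equation for $\bm{\mathfrak{Z}}_H$; the factors $(\bm{B}')^{-1}\bm{B}'$ and $\bm{B}\bm{B}^{-1}$ cancel and the sum collapses to $\bm{A}\hat{\bm{\Sigma}}^{\row}_{\bm{\mathfrak{X}}_H}\bm{A}'$, confirming that the ansatz solves the transformed row equation, with the column equation handled analogously. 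Since $h \geq \lfloor \nicefrac{p}{q}+\nicefrac{q}{p}\rfloor + 2$, the maximizer is almost surely unique by \cite{soloveychik2016gaussian}, so these are the MLEs on $\bm{\mathfrak{Z}}_H$.

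The subtlety I expect to dominate the argument is that the factors $\hat{\bm{\Sigma}}^{\row}_H,\hat{\bm{\Sigma}}^{\col}_H$ are themselves identified only up to a reciprocal scalar, since $(c\hat{\bm{\Sigma}}^{\row}_H, c^{-1}\hat{\bm{\Sigma}}^{\col}_H)$ solves the same system for every $c>0$; matrix affine equivariance must therefore be read as an identity of the scale-invariant quantities. I would resolve this by noting that the objective of \eqref{eq:MMCD} equals $\ln\det(\hat{\bm{\Sigma}}^{\col}_H \otimes \hat{\bm{\Sigma}}^{\row}_H)$, which is invariant under this rescaling. For the second stage, multiplicativity of the determinant gives $\det(\bm{A}\hat{\bm{\Sigma}}^{\row}_{\bm{\mathfrak{X}}_H}\bm{A}') = \abs{\det\bm{A}}^2\det(\hat{\bm{\Sigma}}^{\row}_{\bm{\mathfrak{X}}_H})$ and likewise for the column factor, so
\begin{align*}
    p\ln\det(\hat{\bm{\Sigma}}^{\col}_{\bm{\mathfrak{Z}}_H}) + q\ln\det(\hat{\bm{\Sigma}}^{\row}_{\bm{\mathfrak{Z}}_H}) = p\ln\det(\hat{\bm{\Sigma}}^{\col}_{\bm{\mathfrak{X}}_H}) + q\ln\det(\hat{\bm{\Sigma}}^{\row}_{\bm{\mathfrak{X}}_H}) + 2q\ln\abs{\det\bm{A}} + 2p\ln\abs{\det\bm{B}}.
\end{align*}
The trailing term is constant in $H$, hence both objectives share the minimizer $H^\ast$; combining this with the first stage, $\hat{\bm{\Sigma}}^{\row}_{\bm{\mathfrak{Z}}} = \hat{\bm{\Sigma}}^{\row}_{\bm{\mathfrak{Z}}_{H^\ast}} = \bm{A}\hat{\bm{\Sigma}}^{\row}_{\bm{\mathfrak{X}}_{H^\ast}}\bm{A}' = \bm{A}\hat{\bm{\Sigma}}^{\row}_{\bm{\mathfrak{X}}}\bm{A}'$, and analogously for the mean and column covariance, establishing \eqref{eq:matrix_affine_equivariance}.

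For part (b) I would substitute the equivariance relations directly into the definition \eqref{eq:MD=MMD}. Writing $\bm{Z}_i - \hat{\bm{\M}}_{\bm{\mathfrak{Z}}} = \bm{A}(\bm{X}_i - \hat{\bm{\M}}_{\bm{\mathfrak{X}}})\bm{B}$ and inverting the covariance relations to obtain $\hat{\bm{\Omega}}^{\row}_{\bm{\mathfrak{Z}}} = (\bm{A}')^{-1}\hat{\bm{\Omega}}^{\row}_{\bm{\mathfrak{X}}}\bm{A}^{-1}$ and $\hat{\bm{\Omega}}^{\col}_{\bm{\mathfrak{Z}}} = \bm{B}^{-1}\hat{\bm{\Omega}}^{\col}_{\bm{\mathfrak{X}}}(\bm{B}')^{-1}$, the interior factors $(\bm{B}')^{-1}\bm{B}'$, $\bm{A}'(\bm{A}')^{-1}$, and $\bm{A}^{-1}\bm{A}$ telescope to identities, leaving only a single conjugation $\bm{B}^{-1}(\cdot)\bm{B}$, which the cyclic property of the trace removes to give $\mmd^2(\bm{Z}_i) = \mmd^2(\bm{X}_i)$. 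I would remark that the residual scale ambiguity from part (a) cancels automatically, since the two precision factors scale as $c^{-1}$ and $c$, and that this computation uses only the equivariance relations, so the claim holds for any matrix affine equivariant location and covariance estimators, exactly as stated.
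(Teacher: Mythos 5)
Your proof is correct and follows essentially the same route as the paper's: part (a) reduces to the matrix affine equivariance of the per-subset MLEs together with the observation that the transformation shifts the log-determinant objective by an $H$-independent constant (so the optimal $h$-subset is unchanged), and part (b) is the same direct trace computation in which the inverse factors telescope and the outer conjugation is removed by cyclicity of the trace. You are in fact somewhat more careful than the paper on two points it leaves implicit, namely verifying the per-subset MLE equivariance by substituting the ansatz into the coupled fixed-point equations and invoking uniqueness, and explicitly handling the reciprocal-scale indeterminacy of the two covariance factors.
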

Lemma~\ref{lemma:kronecker_affine_transformation} shows that the MMCD estimators are equivariant under matrix affine transformations, and a proof is given in Supplement~\ref{supplement:mmcd_properties_proofs}. 

\paragraph{Breakdown point.}
The finite sample breakdown point of an estimator evaluates its resilience to contamination. It refers to the largest proportion of observations that may be arbitrarily replaced by outliers such that the estimator still contains some information about the true parameter \citep{maronna2019robust}. Let $\bm{\mathfrak{X}}$ be a sample of $n$ matrix-variate observations in $\R^{p \times q}$ and suppose $\bm{\mathfrak{Y}}$ is a corrupted version, obtained by replacing $m$ samples of $\bm{\mathfrak{X}}$ by arbitrary matrices. The finite sample breakdown point of a location estimator $\hat{\bm{\M}}$ is given by
\begin{align}\label{eq:breakdown_point_location}
    \varepsilon^{\ast}(\hat{\bm{\M}}, \bm{\mathfrak{X}}) = 
    \min_{1 \leq m \leq n} \left\{ \frac{m}{n} : \sup_{m} \norm{\hat{\bm{\M}}_{\bm{\mathfrak{X}}} - \hat{\bm{\M}}_{\bm{\mathfrak{Y}}}} = \infty \right\}
\end{align}
and the (joint) finite sample breakdown point of row and columnwise covariance estimators $\hat{\bm{\Sigma}}^{\row}$ and $\hat{\bm{\Sigma}}^{\col}$ is given by
\begin{align}\label{eq:breakdown_point_covariance}
    \varepsilon^{\ast}(\hat{\bm{\Sigma}}^{\row}, \hat{\bm{\Sigma}}^{\col}, \bm{\mathfrak{X}}) = 
    \min_{1 \leq m \leq n} \left\{ \frac{m}{n} : \sup_{m} \max_{i,j} \abs{ \log(\lambda_i(\hat{\bm{\Sigma}}^{\row}_{\bm{\mathfrak{Y}}})\lambda_j(\hat{\bm{\Sigma}}^{\col}_{\bm{\mathfrak{Y}}})) - \log(\lambda_i(\hat{\bm{\Sigma}}^{\row}_{\bm{\mathfrak{X}}})\lambda_j(\hat{\bm{\Sigma}}^{\col}_{\bm{\mathfrak{X}}}))} = \infty \right\}.
\end{align}

While the MCD and the MMCD estimators coincide for the case that $p = 1$ and/or $q = 1$, the following theorem shows that the MMCD estimators achieve a higher breakdown point than the MCD estimators applied to the vectorized samples if $p \geq 2$ and $q \geq 2$.

\begin{theorem} \label{theorem:MMCD_breakdown_point}
Let $\bm{\mathfrak{X}}$ be a collection of $n$ i.i.d. samples from a continuous $p \times q$ matrix-variate distribution, where $d = \lfloor \nicefrac{p}{q} + \nicefrac{q}{p} \rfloor$, $p,q \in \N, p \geq 2, q \geq 2$, and let $\hat{\bm{\M}}$, $\hat{\bm{\Sigma}}^{\row}$, and $\hat{\bm{\Sigma}}^{\col}$ denote the MMCD estimators, then
\begin{align*}
    \varepsilon^{\ast}(\hat{\bm{\M}}, \bm{\mathfrak{X}}) = \varepsilon^{\ast}(\hat{\bm{\Sigma}}^{\row}, \hat{\bm{\Sigma}}^{\col}, \bm{\mathfrak{X}}) = \frac{1}{n}\lfloor\min(n - h + 1,h - (d+1))\rfloor =: \frac{m}{n},
\end{align*}
with $\nicefrac{n}{2} \leq h \leq n$ and $h \geq d + 2$. 
\end{theorem}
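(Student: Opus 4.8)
The plan is to prove that both breakdown points equal $m/n$ by establishing matching lower and upper bounds, adapting the classical multivariate MCD argument with the dimension count $p$ replaced throughout by $d+1$. This substitution is dictated by Soloveychik's threshold cited above: an $h$-subset yields an almost surely positive-definite pair $(\hat{\bm{\Sigma}}^{\row}_{H},\hat{\bm{\Sigma}}^{\col}_{H})$ precisely when $h\ge d+2$, just as $k\ge p+1$ points are needed for a non-degenerate sample covariance in $\R^p$. I first normalize using Lemma~\ref{lemma:kronecker_affine_transformation}: by matrix affine equivariance and the $\mmd^2$-invariance of part (b), I may assume the clean sample satisfies $\hat{\bm{\M}}_{\bm{\mathfrak{X}}}=\bm{0}$, $\hat{\bm{\Sigma}}^{\row}_{\bm{\mathfrak{X}}}=\bm{I}_p$, $\hat{\bm{\Sigma}}^{\col}_{\bm{\mathfrak{X}}}=\bm{I}_q$, so the reference products $\lambda_i^0\lambda_j^0$ in \eqref{eq:breakdown_point_covariance} all equal one and covariance breakdown reduces to driving some $\lambda_i(\hat{\bm{\Sigma}}^{\row}_{\bm{\mathfrak{Y}}})\lambda_j(\hat{\bm{\Sigma}}^{\col}_{\bm{\mathfrak{Y}}})$ to $0$ or $\infty$. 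Since these products are exactly the eigenvalues of $\hat{\bm{\Sigma}}^{\col}_{\bm{\mathfrak{Y}}}\otimes\hat{\bm{\Sigma}}^{\row}_{\bm{\mathfrak{Y}}}$, whose log-determinant is the MMCD objective in \eqref{eq:MMCD}, the whole analysis can be phrased through that single objective.

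For the lower bound I fix any $\bm{\mathfrak{Y}}$ with at most $m-1$ replacements; since $m-1$ lies strictly below both terms of the minimum, I simultaneously have $m-1\le n-h$ and $m-1\le h-(d+2)$. The first inequality guarantees an entirely clean $h$-subset, so the optimal objective is bounded above by a constant $O_0$ depending only on $\bm{\mathfrak{X}}$; equivalently $\sum_{i,j}\log(\lambda_i\lambda_j)\le O_0$. The second inequality forces the optimal subset to retain at least $h-(m-1)\ge d+2$ clean points. The key technical step is then to show that any $h$-set containing $d+2$ clean points in general position has Kronecker MLE determinant bounded below by a positive constant $\delta$, uniformly over the adversarial outliers: this follows from a compactness argument, since outliers tending to a finite limit keep the determinant positive (the $d+2$ generic clean points already preclude a degenerate Kronecker configuration by Soloveychik's threshold), while outliers escaping to infinity only enlarge it. Combining $\lambda_i\lambda_j\ge\delta$ with the upper bound on their log-sum confines every product to a fixed compact subinterval of $(0,\infty)$, so covariance cannot break; the same objective bound keeps any retained outliers in a bounded region, whence $\hat{\bm{\M}}$ stays bounded and location cannot break either.

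For the upper bound I exhibit contamination realizing breakdown with exactly $m$ replacements, in the two regimes selected by the minimum. When $n-h+1\le h-(d+1)$, so $m=n-h+1$ and necessarily $h>(n+1)/2$, only $h-1$ clean points remain; every $h$-subset must then keep at least one outlier while still retaining at least one clean point. Sending the outliers to infinity makes every admissible subset contain both finite and divergent points, so its spread—hence every candidate determinant—explodes while its mean diverges, breaking covariance and location together. When $h-(d+1)\le n-h+1$, so $m=h-(d+1)$, I instead realize implosion at infinity: I place the $m$ outliers on the singular-Kronecker set determined by $d+1$ fixed clean points and push them to infinity along that set. The resulting subset of $d+1$ clean plus $m$ outliers is Kronecker-degenerate, so its determinant is zero and attains the global minimum of the objective; the minimizer therefore selects it, a smallest eigenvalue product collapses to zero (covariance implodes), and, because the outliers sit at infinity, the mean of the selected subset diverges as well. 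Thus both estimators break in each regime exactly at $m$, and a short check shows $m-1$ replacements are insufficient, tying the bounds together.

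The principal obstacle is the matrix-specific degeneracy bookkeeping that replaces the elementary ``points on a hyperplane'' count of the multivariate proof. Because $\hat{\bm{\Sigma}}^{\row}$ and $\hat{\bm{\Sigma}}^{\col}$ are coupled through the flip-flop fixed-point equations~\eqref{eq:MCD_Sigma_r}--\eqref{eq:MCD_Sigma_c}, their eigenvalues are not individually monotone in the subset, so I must control the joint product $\lambda_i(\hat{\bm{\Sigma}}^{\row})\lambda_j(\hat{\bm{\Sigma}}^{\col})$—an eigenvalue of $\hat{\bm{\Sigma}}^{\col}\otimes\hat{\bm{\Sigma}}^{\row}$—rather than each factor separately. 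I expect to handle both the uniform positive lower bound of the lower-bound half and the exact singular configuration of the upper-bound half by passing to the vectorized representation, where $\mmd^2(\bm{X})=\md^2(\vect(\bm{X}))$ under the Kronecker covariance, and invoking Soloveychik's threshold in its sharp form: $d+2$ generic points certify non-degeneracy (bounding implosion from below), while $d+1$ points are genuinely degenerate (enabling the implosion construction). Verifying that this sharp count transfers from i.i.d.\ continuous samples to the adversarially contaminated sets considered here is the delicate point the proof must secure.
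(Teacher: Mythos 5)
Your skeleton matches the paper's proof: the lower bound comes from (i) a fully clean $h$-subset capping the objective from above and (ii) at least $d+2$ clean points surviving in every admissible $h$-subset, forcing the determinant away from zero; the upper bound comes from explosion when $m=n-h+1$ and implosion on the $d$-dimensional affine span of $d+1$ clean points when $m=h-(d+1)$. The substantive difference --- and the gap --- is in the step you yourself flag as the key technical one: the uniform positive lower bound on the Kronecker determinant of any $h$-subset containing $d+2$ generic clean points. Your proposed compactness argument does not go through as stated. The space of adversarial configurations is not compact, and the claim that outliers escaping to infinity ``only enlarge'' the determinant is precisely what needs proof: because $\hat{\bm{\Sigma}}^{\row}_{H}$ and $\hat{\bm{\Sigma}}^{\col}_{H}$ are coupled through the flip-flop fixed-point equations, sending a retained outlier to infinity inflates some eigenvalue products while others may shrink, the determinant is the product of all of them, and no monotonicity in the subset is available. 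A genuine limiting argument would also have to rule out sequences of configurations along which the infimum of the determinant is approached but never attained.

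The paper avoids the limit entirely with a direct containment argument (Lemma~\ref{lemma:simplex} and the ellipsoids $E_T$ and $E_{\mmcd}$ in Supplement~\ref{supplement:mmcd_properties_proofs}): any Kronecker-structured ellipsoid covering the optimal $h$-subset also covers the $d+2$ surviving clean points, which span a nondegenerate $(d+1)$-simplex, so its determinant is at least that of the smallest such ellipsoid covering those clean points --- a strictly positive constant depending only on $\bm{\mathfrak{X}}$. No uniformity over outlier positions has to be argued, because the outliers never enter the bound. Combined with the upper bound $\det(E_{\mmcd})\leq\det(E_{\max})$ obtained from the clean $h$-subset, this pins every product $\lambda_i(\hat{\bm{\Sigma}}^{\row}_{\bm{\mathfrak{Y}}})\lambda_j(\hat{\bm{\Sigma}}^{\col}_{\bm{\mathfrak{Y}}})$ into a fixed compact subinterval of $(0,\infty)$, and the same lemma bounds $\lVert\hat{\bm{\M}}_{\bm{\mathfrak{Y}}}\rVert_F$. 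If you replace your compactness step with this containment argument, the rest of your outline --- including the normalization via Lemma~\ref{lemma:kronecker_affine_transformation}, which the paper does not need, and your explicit pushing of the implosion outliers to infinity along $L$ to break location, which the paper leaves implicit --- is sound and coincides with the published proof.
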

The proof extends established methodologies from \cite{Rousseeuw1985} and \cite{lopuhaa1991breakdown} to address the matrix variate setting, leveraging additional insights and techniques outlined in Supplement~\ref{supplement:mmcd_properties_proofs}.
Theorem~\ref{theorem:MMCD_breakdown_point} implies that the maximum breakdown point of the MMCD estimators is $\nicefrac{1}{n}\lfloor \nicefrac{(n-d)}{2} \rfloor$ and is attained if $h = \lfloor\nicefrac{(n+d+2)}{2}\rfloor$. This means that the maximum breakdown point of the MMCD covariance estimators for $p \geq 2, q \geq 2$ is higher than the upper bound for the breakdown point of affine equivariant covariance estimators applied to vectorized samples, which is given by $\nicefrac{1}{n}\lfloor \nicefrac{(n-pq+1)}{2}\rfloor$ \citep{davies1987asymptotic,lopuhaa1991breakdown}. However, as mentioned earlier, affine equivariance in the matrix-variate setting does not imply affine equivariance in the multivariate setting. Thus, the mentioned upper bound for the vectorized observations does not apply. In other words, since affine equivariance (in the vectorized case) is not a requirement for matrix-variate affine equivariance, it is possible to achieve a higher breakdown point for the MMCD estimators than for any affine equivariant multivariate estimator applied to the vectorized data. 

To illustrate the advantage of respecting the inherent data structure of matrix-variate data for the breakdown properties, we compare the maximum breakdown points of the MCD and MMCD estimators in Figure~\ref{fig:breakdown_point} for different combinations of $p$ and $q$, and for different sample sizes $n$. Here, the MCD estimator is applied to the vectorized data, and the dimensionality of the samples is $pq$, which can get large. 
This affects the computability of the MCD estimator since it requires a subset size larger than the dimension. 

\begin{figure}[!ht]
    \centering
    \includegraphics[width=\linewidth]{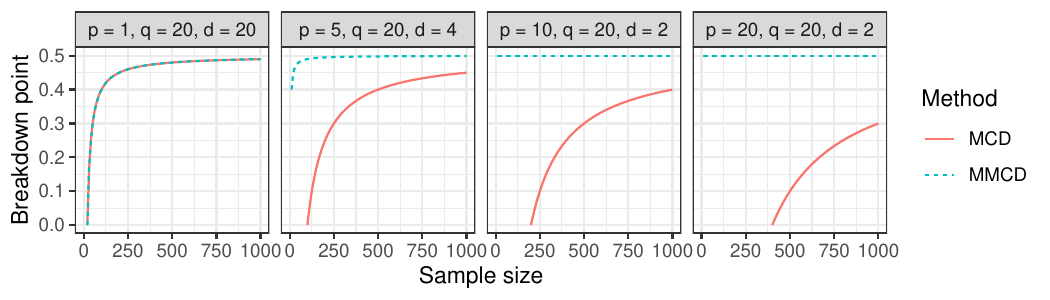}
    \caption{Comparison of the maximum breakdown point of the MMCD estimators for matrix-variate data with $p = 1, 5, 10, 20$ rows and $q = 20$ columns, and the MCD estimator applied to the vectorized data. When $p = 1$, both estimators and their breakdown points coincide. However, increasing the number of rows yields better breakdown properties for the MMCD estimators, as the proportion between the number of rows and columns $d = \lfloor \nicefrac{p}{q} + \nicefrac{q}{p} \rfloor$ is approaching 2. 
    }
    \label{fig:breakdown_point}
\end{figure}

\paragraph{Consistency for elliptical distributions.}
Let us now consider the asymptotic behavior of the MMCD estimators. By scaling the rowwise or columnwise MMCD covariance estimator by a distribution-specific consistency factor, we can achieve consistency for elliptical distributions. 
\begin{theorem}\label{theorem:consistency}
Let $\bm{X}_1,\dots,\bm{X}_n$ be a random sample from an elliptical matrix-variate distribution $\mathcal{ME}(\bm{\M},\bm{\Sigma}^{\row},\bm{\Sigma}^{\row},g)$ with positive definite covariances $\bm{\Sigma}^{\row},\bm{\Sigma}^{\col}$, and let $(\hat{\bm{\M}},\hat{\bm{\Sigma}}^{\row},\hat{\bm{\Sigma}}^{\col})$ be the corresponding MMCD estimators. Then, it holds that 
$$
\left\|\hat{\bm{\M}}-\bm{\M}\right\| \xrightarrow{a.s.} 0, \quad \left\|{c(\alpha)\hat{\bm{\Sigma}}^{\col}\otimes\hat{\bm{\Sigma}}^{\row}-\bm{\Sigma}^{\col}\otimes\bm{\Sigma}^{\row}}\right\| \xrightarrow{a.s.} 0,
$$
where $c(\alpha), \alpha = \nicefrac{h}{n} \in [0.5,1],$ is a distribution-specific consistency factor as in \cite{croux1999influence}. 
\end{theorem}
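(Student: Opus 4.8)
The plan is to reduce the matrix-variate problem to a standardized spherical one, identify the unique population minimizer of the trimmed determinant objective, and then close the argument with an argmin-consistency (epiconvergence) step. First I would use the matrix affine equivariance of Lemma~\ref{lemma:kronecker_affine_transformation}~(a) and apply the fixed invertible transformation $\bm{Z}_i = (\bm{\Sigma}^{\row})^{-\nicefrac12}(\bm{X}_i - \bm{\M})(\bm{\Sigma}^{\col})^{-\nicefrac12}$, which produces an i.i.d.\ sample from $\mathcal{ME}(\bm{0}, \bm{I}_p, \bm{I}_q, g)$ while the MMCD estimators transform covariantly and the matrix Mahalanobis distances are left invariant by Lemma~\ref{lemma:kronecker_affine_transformation}~(b). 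Since almost sure convergence in the standardized coordinates transfers back through this fixed transformation applied to the Kronecker product, it suffices to prove the claim when $\bm{\M} = \bm{0}$, $\bm{\Sigma}^{\row} = \bm{I}_p$, and $\bm{\Sigma}^{\col} = \bm{I}_q$. In these coordinates the vectorized observations $\vect(\bm{Z}_i)$ follow the spherically symmetric multivariate elliptical law $\mathcal{E}(\bm{0}, \bm{I}_{pq}, g)$, and by Equation~\eqref{eq:MD=MMD} the squared matrix Mahalanobis distance equals the ordinary squared Mahalanobis distance of the vectorized observation.

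Next I would set up the population counterpart of the objective in Definition~\ref{definition:mmcd}: for a trial location $\bm{m}$ and a Kronecker-structured scatter $\bm{S}^{\col}\otimes\bm{S}^{\row}$, retain the mass-$\alpha$ sublevel set of the associated Mahalanobis distance and define the population objective as the logarithm of $\det$ of the Kronecker covariance of the distribution restricted to that region. The key observation is that the MMCD minimizes this determinant only over the Kronecker-structured subfamily of scatters, whereas the unconstrained multivariate MCD minimizes over all positive definite $pq\times pq$ scatters. For the spherical law $\mathcal{E}(\bm{0},\bm{I}_{pq},g)$ the unconstrained population minimizer is itself spherical, equal to $\kappa\,\bm{I}_{pq}$ for a constant $\kappa=\kappa(\alpha,g)$ obtained by restricting the elliptical distribution to a centered ball of probability $\alpha$; this is exactly the elliptical-restriction computation underlying the consistency factor $c(\alpha)$ of \cite{croux1999influence}, with $c(\alpha)\kappa = 1$. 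Because $\kappa\,\bm{I}_{pq} = (\sqrt{\kappa}\,\bm{I}_q)\otimes(\sqrt{\kappa}\,\bm{I}_p)$ already lies in the Kronecker subfamily, it is simultaneously the minimizer over the smaller constrained family, and uniqueness of the unconstrained minimizer is inherited by the constrained problem. Hence the population MMCD minimizer has Kronecker product $\kappa\,\bm{I}_{pq}$, which $c(\alpha)$ corrects precisely to $\bm{I}_{pq} = \bm{\Sigma}^{\col}\otimes\bm{\Sigma}^{\row}$.

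Finally I would show that the empirical MMCD objective converges to this population objective uniformly over parameters confined to a compact set, and that the estimators eventually lie in such a set almost surely; an argmin- (epiconvergence-) argument then forces the empirical minimizer to the unique population minimizer, and transferring back through the standardizing transformation yields both $\|\hat{\bm{\M}} - \bm{\M}\| \xrightarrow{a.s.} 0$ and $\|c(\alpha)\hat{\bm{\Sigma}}^{\col}\otimes\hat{\bm{\Sigma}}^{\row} - \bm{\Sigma}^{\col}\otimes\bm{\Sigma}^{\row}\| \xrightarrow{a.s.} 0$. Two ingredients carry the technical weight. A supporting non-degeneracy lemma rules out degeneracy in the limit: almost surely, for all large $n$, the optimal Kronecker scatter has eigenvalues bounded away from $0$ and $\infty$, which follows from the continuity of the underlying distribution — guaranteeing that no $h$-subset concentrates on a lower-dimensional affine subspace, so that the flip-flop MLEs stay positive definite, as in the existence statement following Definition~\ref{definition:mmcd} — together with a lower bound showing any degenerating scatter inflates the determinant and is therefore never optimal. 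The main obstacle is the uniform convergence itself, because the objective couples a data-driven subset selection with the implicitly defined flip-flop Kronecker MLE: I expect to handle this by establishing a Glivenko–Cantelli property for the empirical measure over the finite-VC class of Kronecker-ellipsoidal regions indexed by $(\bm{m}, \bm{S}^{\col}\otimes\bm{S}^{\row})$, and then showing that the flip-flop solution is a continuous functional of the (uniformly converging) empirical moments restricted to each region, so that the subset-dependent objective inherits uniform convergence.
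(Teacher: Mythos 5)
Your proposal is essentially correct in outline, but it takes a genuinely different route from the paper. You build a direct argmin-consistency (M-estimation) proof for the Kronecker-constrained functional: reduce to the spherical case by equivariance, identify the population minimizer, and close with uniform convergence plus epiconvergence. The paper instead never touches the population functional directly; it treats the MMCD as the maximizer of the same trimmed Gaussian log-likelihood as the vectorized MCD but over the smaller Kronecker-structured parameter class, and then sandwiches: the MCD likelihood dominates the MMCD likelihood, which in turn dominates the likelihood at a Kronecker point arbitrarily close to the (known, strongly consistent) MCD limit $c(\alpha)^{-1}\bm{\Sigma}^{\col}\otimes\bm{\Sigma}^{\row}$ --- which is itself Kronecker-structured. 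This forces the determinant of the unstructured sample covariance on the MMCD subset to within $\varepsilon$ of the minimal MCD determinant, at which point Corollary~4.1 of \cite{cator2012central} delivers convergence of that subset covariance, and a concavity argument in the precision matrix transfers the convergence to $\hat{\bm{\Sigma}}^{\col}_{\mmcd}\otimes\hat{\bm{\Sigma}}^{\row}_{\mmcd}$. Both arguments hinge on the same structural fact you identified --- the limit target already lies in the Kronecker family, so the constraint does not move the optimum --- but the paper exploits it at the level of sample likelihoods while you exploit it at the level of the population functional. What the paper's route buys is that all the hard empirical-process work (uniform convergence, compactness, measurable selection) is inherited from \cite{butler1993asymptotics} and \cite{cator2012central} as a black box; what your route would buy is a self-contained Fisher-consistency statement for the Kronecker-constrained functional, at the price of essentially redoing that machinery.

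The one place where I would press you is your final step. Your plan couples a data-driven subset selection with the implicitly defined flip-flop MLE and asserts that this composite objective converges uniformly because the flip-flop solution is a ``continuous functional of the empirical moments restricted to each region.'' That claim is not routine: the Kronecker MLE is defined only implicitly as the fixed point of the flip-flop iteration, its uniqueness requires the subset to be in general position with at least $\lfloor \nicefrac{p}{q}+\nicefrac{q}{p}\rfloor+2$ points, and you need continuity and uniformity simultaneously over the whole class of candidate regions, together with a measurable selection of the argmin. You also implicitly restrict the competing $h$-subsets to ellipsoidal sublevel sets; this is justified by a C-step fixed-point argument, but it should be stated. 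None of this is unfixable, but it is precisely the work the paper's sandwich-plus-concavity argument is designed to avoid, and as written your proposal leaves the heaviest lifting as a sketch.
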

The proof of the consistency of the \emph{raw} MMCD estimators relies on the strong consistency of the MCD estimator given in \cite{butler1993asymptotics,cator2012central} and is provided in Supplement~\ref{supplement:mmcd_properties_proofs}. It shows that the consistency factor of the MCD estimator and the MMCD estimator must coincide, and therefore, we use the consistency factor
\begin{align} \label{eq:consistency_factor}
    c(\alpha) = \frac{\alpha}{F_{\chi^2_{pq+2}}(\chi^2_{\alpha; pq})}
\end{align}
proposed by \cite{croux1999influence} to obtain consistency at the normal model, where $F_{\chi^2_{pq+2}}$ denotes the CDF of the chi-square distribution with $pq+2$ degrees of freedom, and $\chi^2_{pq;\alpha}$ denotes the $\alpha$ quantile of the chi-square distribution with $pq$ degrees of freedom.
\begin{remark}\label{rem:MLE_consistency}
Note first that for $h=n$, the corresponding MMCD estimators coincide with the ones defined in \eqref{eq:matrix_mean_mle}-\eqref{eq:matrix_cov_col_mle}. Therefore, a simple, yet not discussed in the literature, consequence of Theorem \ref{theorem:consistency} is that the estimators obtained maximizing the likelihood under the matrix-normal model, are consistent estimators of the corresponding finite-dimensional parameters $(\bm{\M},\bm{\Sigma}^{\row},\bm{\Sigma}^{\row})$ in the semi-parametric, matrix elliptical family.    
\end{remark}
\paragraph{Reweighted MMCD - improving efficiency.}
The \emph{raw} MMCD estimators are most robust when about half of the observations are trimmed, i.e., $h = \lfloor\nicefrac{(n+d+2)}{2}\rfloor$. However, this leads to a low efficiency at the normal model. While efficiency could be increased by trimming fewer samples, this would lead to lower robustness.
To enhance a robust estimator's efficiency without compromising robustness, \cite{lopuhaa1991breakdown, maronna2019robust} proposed a one-step reweighing procedure. We can apply this technique for the MMCD estimators by defining weighted ML estimators with weights depending on the Mahalanobis distances given the raw MMCD estimators.
 
\begin{definition} \label{definition:mmcd_reweighted}
    Let $\bm{\mathfrak{X}}$ be a collection of $n$ i.i.d. samples from a continuous $p \times q$ matrix-variate distribution, where $d = \lfloor \nicefrac{p}{q} + \nicefrac{q}{p} \rfloor$, $p,q \in \N, p \geq 2, q \geq 2$, and let $\hat{\bm{\M}}$, $\hat{\bm{\Sigma}}^{\row}$, and $\hat{\bm{\Sigma}}^{\col}$ denote the \emph{raw} MMCD estimators as in Definition~\ref{definition:mmcd}. The \emph{reweighted} MMCD estimators are given by
    \begin{align}
        \tilde{\bm{\M}} &= \frac{1}{\sum_{i = 1}^n w(\mmd(\bm{X}_i))} \sum_{i = 1}^n w(\mmd(\bm{X}_i)) \bm{X}_i,  \label{eq:MCD_reweighted_Mean}\\
        \tilde{\bm{\Sigma}}^{\row} &= \frac{1}{q \sum_{i = 1}^n w(\mmd(\bm{X}_i))} \sum_{i = 1}^n w(\mmd(\bm{X}_i))  (\bm{X}_i - \tilde{\bm{\M}})\tilde{\bm{\Omega}}^{\col} (\bm{X}_i - \tilde{\bm{\M}})' \label{eq:MCD_reweighted_Sigma_r} \text{, and}\\
        \tilde{\bm{\Sigma}}^{\col} &= \frac{1}{p \sum_{i = 1}^n w(\mmd(\bm{X}_i))} \sum_{i = 1}^n w(\mmd(\bm{X}_i))  (\bm{X}_i - \tilde{\bm{\M}})'\tilde{\bm{\Omega}}^{\row} (\bm{X}_i - \tilde{\bm{\M}})  \label{eq:MCD_reweighted_Sigma_c},
    \end{align}
    where $w : [0,\infty) \rightarrow [0,\infty)$ is a non-increasing and bounded weight function such that $w(\mmd(\bm{X}_i)) > 0$ for at least $\lfloor\nicefrac{(n+d+2)}{2}\rfloor$ observations that vanishes for large distances, i.e., $w(\mmd(\bm{X}_i)) = 0$ if $\mmd(\bm{X}_i) > c_1 > 0$. 
\end{definition}

The following theorem shows that the \emph{reweighted} MMCD estimator preserves the breakdown point of the original estimator. The simulations presented in Section~\ref{section:simulations} illustrate substantial improvements in the efficiency of the reweighed MMCD estimators. With increasing sample size, the finite sample efficiency exceeds $90\%$ across various selections of $p$ and $q$.

\begin{theorem} \label{theorem:MMCD_reweighted_breakdown_point}
    Let $\bm{\mathfrak{X}}$ be a collection of $n$ i.i.d. samples from a continuous $p \times q$ matrix-variate distribution, where $d = \lfloor \nicefrac{p}{q} + \nicefrac{q}{p} \rfloor$, $p,q \in \N, p \geq 2, q \geq 2$, and let $\hat{\bm{\M}}_{\bm{\mathfrak{X}}}$, $\hat{\bm{\Sigma}}^{\row}_{\bm{\mathfrak{X}}}$, and $\hat{\bm{\Sigma}}^{\col}_{\bm{\mathfrak{X}}}$ denote the \emph{raw} MMCD estimators as in Definition~\ref{definition:mmcd} with breakdown points 
    \begin{align*}
        \varepsilon^{\ast}(\hat{\bm{\M}}_{\bm{\mathfrak{X}}}, \bm{\mathfrak{X}}) = \varepsilon^{\ast}(\hat{\bm{\Sigma}}^{\row}_{\bm{\mathfrak{X}}}, \hat{\bm{\Sigma}}^{\col}_{\bm{\mathfrak{X}}}, \bm{\mathfrak{X}}) = \frac{1}{n}\lfloor\min(n - h + 1,h - (d+1))\rfloor  =: \frac{m}{n},
    \end{align*}
    and let $\tilde{\bm{\M}}_{\bm{\mathfrak{X}}}$, $\tilde{\bm{\Sigma}}^{\row}_{\bm{\mathfrak{X}}}$, and $\tilde{\bm{\Sigma}}^{\col}_{\bm{\mathfrak{X}}}$ denote the \emph{reweighted} estimators as in Definition~\ref{definition:mmcd_reweighted}. Then, 
    \begin{align*}
        \varepsilon^{\ast}(\tilde{\bm{\M}}_{\bm{\mathfrak{X}}}, \bm{\mathfrak{X}}) \geq \frac{m}{n} \quad \text{and} \quad 
        \varepsilon^{\ast}(\tilde{\bm{\Sigma}}^{\row}_{\bm{\mathfrak{X}}}, \tilde{\bm{\Sigma}}^{\col}_{\bm{\mathfrak{X}}}, \bm{\mathfrak{X}}) \geq \frac{m}{n}.
    \end{align*}
\end{theorem}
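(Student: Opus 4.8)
The plan is to adapt the one-step reweighting breakdown argument of \cite{lopuhaa1991breakdown} to the coupled, Kronecker-structured setting. Fix the breakdown number $m$ of the \emph{raw} estimators from Theorem~\ref{theorem:MMCD_breakdown_point} and let $\bm{\mathfrak{Y}}$ be any corruption of $\bm{\mathfrak{X}}$ in which at most $m-1$ observations are replaced by arbitrary matrices. It suffices to show that under such contamination (i) $\norm{\tilde{\bm{\M}}_{\bm{\mathfrak{Y}}}}$ stays bounded and (ii) the products $\lambda_i(\tilde{\bm{\Sigma}}^{\row}_{\bm{\mathfrak{Y}}})\lambda_j(\tilde{\bm{\Sigma}}^{\col}_{\bm{\mathfrak{Y}}})$ stay bounded away from both $0$ and $\infty$; by the breakdown-point definitions in \eqref{eq:breakdown_point_location} and \eqref{eq:breakdown_point_covariance} this yields $\varepsilon^{\ast}(\tilde{\bm{\M}}_{\bm{\mathfrak{X}}},\bm{\mathfrak{X}})\ge \nicefrac{m}{n}$ and $\varepsilon^{\ast}(\tilde{\bm{\Sigma}}^{\row}_{\bm{\mathfrak{X}}},\tilde{\bm{\Sigma}}^{\col}_{\bm{\mathfrak{X}}},\bm{\mathfrak{X}})\ge \nicefrac{m}{n}$.

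The starting point is that the raw estimators drive the weights yet are themselves resistant to fewer than $m$ outliers. Since at most $m-1<m$ observations are contaminated, Theorem~\ref{theorem:MMCD_breakdown_point} guarantees that $\hat{\bm{\M}}_{\bm{\mathfrak{Y}}}$ remains bounded and that the eigenvalues of $\hat{\bm{\Sigma}}^{\col}_{\bm{\mathfrak{Y}}}\otimes\hat{\bm{\Sigma}}^{\row}_{\bm{\mathfrak{Y}}}$ stay in a fixed interval $[a,b]$ with $0<a\le b<\infty$. Consequently $\mmd^2(\cdot;\hat{\bm{\M}}_{\bm{\mathfrak{Y}}},\hat{\bm{\Sigma}}^{\row}_{\bm{\mathfrak{Y}}},\hat{\bm{\Sigma}}^{\col}_{\bm{\mathfrak{Y}}})$ is a uniformly nondegenerate quadratic form, so its sublevel set $\{\bm{Z}:\mmd(\bm{Z})\le c_1\}$ is contained in a fixed bounded ball. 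Because $w$ vanishes for distances exceeding $c_1$, every observation receiving positive weight — in particular any contaminated point that is not discarded — must lie in this ball and hence satisfies $\norm{\bm{Y}_i}\le K$ for a constant $K$ independent of the contamination. Ruling out explosion is then immediate: $\tilde{\bm{\M}}_{\bm{\mathfrak{Y}}}$ is a convex combination of observations of norm at most $K$, and the matrices entering the numerators of \eqref{eq:MCD_reweighted_Sigma_r}--\eqref{eq:MCD_reweighted_Sigma_c} are correspondingly bounded, so the largest eigenvalues of $\tilde{\bm{\Sigma}}^{\row}_{\bm{\mathfrak{Y}}}$ and $\tilde{\bm{\Sigma}}^{\col}_{\bm{\mathfrak{Y}}}$ cannot diverge.

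The main obstacle is the implosion bound, i.e.\ keeping $\lambda_i(\tilde{\bm{\Sigma}}^{\row}_{\bm{\mathfrak{Y}}})\lambda_j(\tilde{\bm{\Sigma}}^{\col}_{\bm{\mathfrak{Y}}})$ bounded away from $0$. Here I would argue by counting the \emph{uncontaminated} observations that retain positive weight: by the defining property of $w$ at least $\lfloor\nicefrac{(n+d+2)}{2}\rfloor$ observations have positive weight, and since at most $m-1$ of them can be outliers, at least $\lfloor\nicefrac{(n+d+2)}{2}\rfloor-(m-1)$ are original points. Using $m\le\lfloor\nicefrac{(n-d)}{2}\rfloor=\lfloor\nicefrac{(n+d+2)}{2}\rfloor-(d+1)$, which follows from $\min(n-h+1,h-(d+1))\le\nicefrac{(n-d)}{2}$, this count is at least $d+2$ — tight in the balanced case $h=\lfloor\nicefrac{(n+d+2)}{2}\rfloor$. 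As the uncontaminated points are drawn from a continuous distribution they are almost surely in general position, so by the existence result of \cite{soloveychik2016gaussian} the weighted matrix-normal fixed point restricted to this good subset is positive definite with eigenvalue products bounded below by a positive constant; the remaining positive-weight (possibly contaminated but bounded) observations contribute only additional positive-semidefinite terms and a bounded normalizing factor $\sum_i w(\mmd(\bm{Y}_i))\le n\sup w$, so the lower bound survives normalization.

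The one genuinely delicate point I expect to confront is propagating this lower bound through the \emph{coupled} equations \eqref{eq:MCD_reweighted_Sigma_r}--\eqref{eq:MCD_reweighted_Sigma_c}, where $\tilde{\bm{\Sigma}}^{\row}_{\bm{\mathfrak{Y}}}$ and $\tilde{\bm{\Sigma}}^{\col}_{\bm{\mathfrak{Y}}}$ each appear inside the other's defining expression through $\tilde{\bm{\Omega}}^{\col}_{\bm{\mathfrak{Y}}}$ and $\tilde{\bm{\Omega}}^{\row}_{\bm{\mathfrak{Y}}}$. I would handle this exactly as in the existence and consistency analysis of the weighted flip-flop, using the simultaneous nondegeneracy of the good-subset scatter in both the row and column directions to pin both factors away from singularity at once, after which the two-sided control on $\lambda_i(\tilde{\bm{\Sigma}}^{\row}_{\bm{\mathfrak{Y}}})\lambda_j(\tilde{\bm{\Sigma}}^{\col}_{\bm{\mathfrak{Y}}})$ and the boundedness of $\tilde{\bm{\M}}_{\bm{\mathfrak{Y}}}$ complete the argument.
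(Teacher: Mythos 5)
Your overall strategy matches the paper's: bound the raw estimators under $m-1$ replacements, use the cutoff $c_1$ together with the two-sided eigenvalue bounds on $\hat{\bm{\Sigma}}^{\col}_{\bm{\mathfrak{Y}}}\otimes\hat{\bm{\Sigma}}^{\row}_{\bm{\mathfrak{Y}}}$ to confine every positive-weight observation to a fixed ball, count at least $\lfloor\nicefrac{(n+d+2)}{2}\rfloor-(m-1)\geq d+2$ uncontaminated positive-weight points in general position, and deduce boundedness of $\tilde{\bm{\M}}_{\bm{\mathfrak{Y}}}$ and non-explosion of the covariances. Those parts are correct and are essentially the paper's argument.

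The gap is exactly where you flag "the one genuinely delicate point," and your proposed fix does not close it. You argue that the weighted fixed point restricted to the $d+2$ good observations is positive definite and that the remaining positive-weight observations "contribute only additional positive-semidefinite terms," so the lower bound survives. But \eqref{eq:MCD_reweighted_Sigma_r}--\eqref{eq:MCD_reweighted_Sigma_c} are coupled: each summand $(\bm{X}_i-\tilde{\bm{\M}})\tilde{\bm{\Omega}}^{\col}(\bm{X}_i-\tilde{\bm{\M}})'$ involves the precision matrix of the \emph{full} solution, not of the good subset, so the sum over the good subset is not the good-subset fixed point and is not a priori bounded below unless $\tilde{\bm{\Omega}}^{\col}$ already is — which is part of what you are trying to prove. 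Moreover, $\tilde{\bm{\Sigma}}^{\row}$ and $\tilde{\bm{\Sigma}}^{\col}$ are individually identified only up to a reciprocal scalar, so the quantity that must be bounded away from zero is the product $\lambda_i(\tilde{\bm{\Sigma}}^{\row})\lambda_j(\tilde{\bm{\Sigma}}^{\col})$, and a PSD-monotonicity argument applied to each factor separately cannot deliver that. The paper avoids both problems with two devices you are missing: (i) it first reduces an arbitrary weight function to the binary indicator case by replacing $\bm{Y}_i$ with $\bm{Z}_i=\sqrt{w_i}\bm{Y}_i$, which turns the weighted likelihood into an unweighted one on a transformed sample that still contains $d+2$ bounded, general-position points; and (ii) it then treats the resulting MLE as a minimal-determinant ellipsoid $E(\bm{T},\bm{U},\bm{V})$ containing the positive-weight points, so that Lemma~\ref{lemma:simplex} bounds the axis-length products $\sqrt{\lambda_i(\bm{U})\lambda_j(\bm{V})}$ directly (via the simplex spanned by the good points together with the determinant upper bound from the enclosing sphere $E_0$), exactly as in the proof of Theorem~\ref{theorem:MMCD_breakdown_point}. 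Without some substitute for these two steps, your implosion bound remains an assertion rather than a proof.
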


A proof is given in Supplement~\ref{supplement:mmcd_properties_proofs}. For the algorithm used to compute the reweighted MMCD estimators introduced in the following section, we use the weight function $w: [0,\infty) \mapsto \{0,1\}$ with 
\begin{align} \label{eq:mmcd_weights}
    w(\mmd^2(\bm{X}_i)) := \begin{cases}
        1 & \text{if} \quad i \in H \vee \mmd^2(\bm{X}_i) < \chi^2_{pq;0.975}\\
        0 & \text{otherwise}
    \end{cases}.
\end{align}
Note that the $h$ observations in the $h$-subset of the raw MMCD estimator have the lowest MMDs, and the condition that all observations $i \in H$ get a positive weight ensures that the reweighting step does not lead to an estimator that uses fewer than $h$ samples.

\section{Algorithm}
\label{section:MMCD_cstep}
\cite{Rousseeuw1999} proposed the Fast-MCD algorithm
to efficiently compute the MCD estimator. The key idea to find the $h$-subset with the lowest covariance determinant is based on the concentration step (C-step): after each C-step, the objective function is smaller or equal as before, and by repeatedly applying C-steps convergence is reached within finitely many iterations.

\subsection{Adapting the C-step}

Adapting the structure of the Fast-MCD algorithm to the matrix-variate setting leads to the development of the MMCD algorithm. This adaptation necessitates a modification in the covariance estimation during the C-step to derive suitable counterparts for computing the MMCD estimators. However, this process encounters a challenge due to the involvement of two covariance matrices, as depicted in Equations~\eqref{eq:MCD_Sigma_r} and \eqref{eq:MCD_Sigma_c}, both lacking closed-form solutions for their estimation.
To address this issue, we incorporate the flip-flop algorithm introduced by \cite{Dutilleul1999} within the C-step. 
Consider a matrix-variate random sample $\bm{\mathfrak{X}} = (\bm{X}_1,\dots,\bm{X}_n)$, with $\bm{X}_i \in \R^{p \times q}$, and any $h$-subset $H_{\old} \subset \{1,\ldots,n\}$, with $\abs{H_{\old}} = h > \lfloor \nicefrac{p}{q} + \nicefrac{q}{p} \rfloor + 2$. First, the MLEs $(\hat{\bm{\M}}_{H_{\old}}, \hat{\bm{\Sigma}}^{\row}_{H_{\old}}, \hat{\bm{\Sigma}}^{\col}_{H_{\old}})$ are computed based on the observations in the subset $H_{\old}$ using the flip-flop algorithm, which is non-decreasing in likelihood. 
Next, compute the squared Mahalanobis distances $d^2_i(H_{\old}) := \mmd^2(\bm{X}_i,\hat{\bm{\M}}_{H_{\old}}, \hat{\bm{\Sigma}}^{\row}_{H_{\old}}, \hat{\bm{\Sigma}}^{\col}_{H_{\old}})$ for all $i=1,\ldots ,n$.
In Proposition~\ref{proposition:MCD_likelihood}, we showed that $\sum_{i \in H_{\old}} d^2_i(H_{\old}) = hpq$, hence only the terms of the log-likelihood function involving the determinants change in this step.
To construct the new subset $H_{\new}$, sort the squared MMDs in ascending order, resulting in a permutation $\pi$ of $\{1,\ldots,n\}$ such that $d^2_{\pi(1)}(H_{\old}) \leq \ldots \leq d^2_{\pi(n)}(H_{\old})$, and define a new $h$-subset $H_{\new} = \{\pi(1),\ldots,\pi(h)\}$.
Since the estimators do not change in this step, the terms involving the determinant are constant, and by construction, the sum of the Mahalanobis distances either decreases or stays constant. 
Hence, the reordering is non-decreasing in likelihood. Finally, the estimators are updated using the flip-flop algorithm based on the observations in the subset $H_{\new}$, resulting in estimators $(\hat{\bm{\M}}_{H_{\new}}, \hat{\bm{\Sigma}}^{\row}_{H_{\new}}, \hat{\bm{\Sigma}}^{\col}_{H_{\new}})$, 
increasing the likelihood once more, and it follows that 
\begin{equation}
    \label{eq:MMCD_C-step}
    p\ln(\det(\hat{\bm{\Sigma}}^{\col}_{H_{\new}})) + q\ln(\det(\hat{\bm{\Sigma}}^{\row}_{H_{\new}})) \leq
    p\ln(\det(\hat{\bm{\Sigma}}^{\col}_{H_{\old}})) + q\ln(\det(\hat{\bm{\Sigma}}^{\row}_{H_{\old}})).
\end{equation}
By repeatedly applying such C-steps, we can decrease the covariance determinant in subsequent iterations as in Equation~\eqref{eq:MMCD_C-step}. This results in a decreasing and non-negative sequence of determinants that must converge after exploring finitely many $h$-subsets. Similar to the multivariate case, we obtain equality of the determinants from one $h$-subset to the next if and only if the estimators do not change from one to the next iteration. However, this does not necessarily imply that we have found a global optimum. 
A pseudo-code for this matrix-variate version of the C-step is given in Algorithm~\ref{algorithm:MMCD_C_step} in Supplement~\ref{supplement:mmcd_algorithm}.

\subsection{The MMCD algorithm}

The MMCD algorithm is a matrix-variate extension of the Fast-MCD procedure of~\cite{Rousseeuw1999}, aiming to alleviate the C-steps dependence on the initial subset by using multiple initial subsets, iteratively conducting C-steps on each until convergence, and ultimately selecting the solution with the lowest determinant. While this explains the idea of the algorithm, there are more computational considerations and adjustments in the full MMCD algorithm. A pseudo-code of the MMCD Algorithm~\ref{algorithm:MMCD} is given in Supplement~\ref{supplement:mmcd_algorithm}.

As in its multivariate counterpart, the MMCD procedure uses so-called \emph{elemental} subsets to initialize the procedure. This means that we use $m$ subsets of size $d + 2, d = \lfloor \nicefrac{p}{q} + \nicefrac{q}{p} \rfloor$, instead of size $h$, to increase the probability of obtaining at least one clean initial subset.
Using $m = 500$ elemental subsets by default allows for a reasonable tradeoff between a wide variety of settings where we likely obtain at least one clean elemental subset and the computational demands of computing initial estimators. If either $p \ll q$ or $q \gg p$, $d$ will be large, and using more initial subsets is recommended. Using elemental subsets increases not only the robustness of the initial estimators but also the computational efficiency. 

Moreover, the MMCD procedure only uses 2 C-step and MLE iterations for the initial elemental subsets to ensure even faster computation of the initial estimators. In the MLE procedure, \cite{werner2008estimation} demonstrated that the same asymptotic efficiency can be attained using only two iterations instead of iterating until convergence. 
As for the C-step, \cite{Rousseeuw1999} outlined that after two iterations, subsets with the lowest covariance determinant during the procedure can already be identified, even before reaching convergence. Moreover, simulations show that we can identify those initial subsets that yield robust solutions after 2 C-steps, whether we use 2 MLE iterations or iterate the flip-flop algorithm until convergence. This is described in detail in Supplement~\ref{supplement:mmcd_subsets}, where we also show that elemental subsets indeed yield more robust solutions than their larger counterparts in case of high contamination.

The initialization step of the MMCD procedure yields $m$ initial estimators, and we keep the $10$ estimators with the lowest covariance determinant. Using those as initial estimators, we iterate C-steps until convergence on the complete data set $\bm{\mathfrak{X}}$. The solution with the lowest covariance determinant then yields the raw MMCD estimators. 

The raw MMCD estimators are scaled using the consistency factor $c(\alpha)$ given in Equation~\eqref{eq:consistency_factor} to achieve consistency at the normal model as outlined in Theorem~\ref{theorem:consistency}. Based on those rescaled raw MMCD estimators, the reweighted estimators described in Definition~\ref{definition:mmcd_reweighted} are computed using the weights given in Equation~\eqref{eq:mmcd_weights}.
The reweighted MMCD estimators are then scaled using $c(\tilde{\alpha}) = c(\nicefrac{\tilde{h}}{n})$, where $\tilde{h}$ denotes the number of observations with weights one. 

The MMCD algorithm repeatedly computes Mahalanbois distances for all $n$ samples, which is computationally expensive when $n$ gets large. To improve the computational efficiency for settings where $n$ is large, we implemented the subsampling approach proposed by \cite{Rousseeuw1999}. The idea is to split the sample of $n$ observations into several smaller subsamples and compute initial estimators on those subsamples before working on the large set with $n$ observations.

\section{Outlier detection and explainability}
\label{section:shapley}

Given a sample $\bm{\mathfrak{X}} = (\bm{X}_1,\dots,\bm{X}_n)$ of matrix-variate observations, the task for outlier detection is to identify those observations which are ``far away'' from the center of the data cloud with respect to its shape. In robust statistics, it is common to consider the Mahalanobis distance for this purpose, assume an underlying normal distribution of the observations, and use a quantile of the chi-square distribution as an outlier cutoff value~\citep{maronna2019robust}. Here, we follow the same idea: an observation $\bm{X}_i$ is flagged as an outlier if
$$
\mmd^2(\bm{X}_i;\hat{\bm{\M}},\hat{\bm{\Sigma}}^{\row},\hat{\bm{\Sigma}}^{\col})>\chi^2_{pq;0.975} \ ,
$$
for $i \in \{1,\ldots ,n\}$ and the MMCD estimators
$\hat{\bm{\M}}$, $\hat{\bm{\Sigma}}^{\row}$, and $\hat{\bm{\Sigma}}^{\col}$.

Even though this information is valuable in practice, it is not very useful for understanding the reasons for outlyingness. This is the goal of outlier explainability, where the contributions of the cells/rows/columns of the matrix-valued observations are investigated in more detail. We will use the concept of Shapley values for this purpose and first briefly review how this is applied to multivariate data before extending it to the matrix-variate case. For details, we refer to~\cite{mayrhofer2022}.

\subsection{Shapley values for multivariate data}
Let $\bm{x} = (x_1,\ldots ,x_p)'$ denote an observation vector from a population with expectation vector $\bm{\mu} = (\mu_1,\ldots ,\mu_p)'$ and covariance matrix $\bm{\Sigma}$, and $P=\{1,\ldots ,p\}$ the index set of the variables. Then the outlyingness contributions $\bm{\phi}(\bm{x},\bm{\mu},\bm{\Sigma}) = \bm{\phi}(\bm{x}) = (\phi_1(\bm{x}),\ldots, \phi_p(\bm{x}))$ based on the Shapley value assign each variable its average marginal contribution to the squared Mahalanobis distance, i.e.,
\begin{align}
    \phi_k(\bm{x},\bm{\mu},\bm{\Sigma}) 
    = \sum_{S \subseteq P\setminus\{k\}} \frac{\abs{S}!(p-\abs{S}-1)!}{p!} \Delta_k \md^2(\xhat{S})
    = (x_k-\mu_k) \sum_{j =1}^p (x_j-\mu_j) \omega_{jk},
    \label{eq:shapley_md}
\end{align}
with marginal contributions
\begin{equation}
    \Delta_k \md^2(\xhat{S}) := \md^2(\xhat{S\cup\{k\}})-\md^2(\xhat{S}) \quad \text{and} \quad
    \xhatj{S}{j}:= \begin{cases}
        x_j & \text{if } j \in S\\
        \mu_j & \text{if } j \notin S
    \end{cases}
    \label{eq:definexj}
\end{equation}
as the components of $\xhat{S}$. Here, $\omega_{jk}$ is the element $(j,k)$ of $\bm{\Omega} = \bm{\Sigma}^{-1}$.
Since $\bm{\phi}(\bm{x})$ is based on the Shapley value, it is the only decomposition of the squared Mahalanobis distance based on Equation~\eqref{eq:definexj} that fulfills the following properties:
\begin{itemize}
\item{\emph{Efficiency:}}
The contributions $\phi_j(\bm{x})$, for $j = 1,\ldots ,p$, sum up to the squared Mahalanobis distance of $\bm{x}$, hence $\sum_{j = 1}^p \phi_j(\bm{x}) = \md^2(\bm{x})$.
\item{\emph{Symmetry:}}
If $\md^2(\xhat{S \cup \{j\}}) = \md^2(\xhat{S \cup \{k\}})$ holds for all subsets $S \subseteq P\setminus \{j,k\}$
for two coordinates $j$ and $k$, then $\phi_j(\bm{x}) = \phi_k(\bm{x})$.
\item{\emph{Monotonicity:}}
Let $\bm{\mu}, \tilde{\bm{\mu}} \in \R^p$ be two vectors and $\bm{\Sigma}, \tilde{\bm{\Sigma}} \in \pds(p)$ be two matrices. If 
\begin{align*}
        \md_{\bm{\mu},\bm{\Sigma}}^2(\xhat{S \cup \{j\}}) - \md_{\bm{\mu},\bm{\Sigma}}^2(\xhat{S}) \geq \md_{\tilde{\bm{\mu}},\tilde{\bm{\Sigma}}}^2(\xhat{S \cup \{j\}}) - \md_{\tilde{\bm{\mu}},\tilde{\bm{\Sigma}}}^2(\xhat{S})
\end{align*}
holds for all subsets $S \subseteq P$, then $\phi_j(\bm{x},\bm{\mu},\bm{\Sigma}) \geq \phi_j(\bm{x},\tilde{\bm{\mu}},\tilde{\bm{\Sigma}})$.
\end{itemize}

In words, the coordinate $\phi_k(\bm{x})$ of the Shapley value is the average marginal contribution of the $k$-th variable to the squared Mahalanobis distance and is obtained by averaging over all marginal outlyingness contributions $\Delta_k \md^2(\xhat{S})$ across all possible subsets $S \subseteq P \setminus \{k\}$. Although this suggests an exponential computational complexity, which becomes costly, especially if $p$ is large, the second equality in Equation~\eqref{eq:shapley_md} reveals just linear complexity; for a proof we refer to \cite{mayrhofer2022}. Equation~\eqref{eq:shapley_md} allows for another insight into the Shapley value by comparing it to the squared Mahalanobis distance, which can be written as $\sum_{j,k = 1}^p (x_j-\mu_j)(x_k-\mu_k)\omega_{jk}$. While the latter calculates an outlyingness measure by aggregating the contributions $(x_j-\mu_j)(x_k-\mu_k)\omega_{jk}$ of all variables for the entire observation, Equation~\eqref{eq:shapley_md} shows that a coordinate $\phi_k(\bm{x})$ of the Shapley value only considers the contributions that are associated with the $k$-th variable.

\subsection{Shapley value for matrix-valued data}

To define Shapley values for matrix-variate data, we can use the connection between the matrix and multivariate Mahalanobis distance; see Equation~\eqref{eq:MD=MMD}. 
Let $\bm{X} \in \R^{p \times q}$ be a matrix-variate sample with mean $\bm{\M} \in \R^{p \times q}$ and covariance matrices $\bm{\Sigma}^{\row} \in \pds(p)$ and $\bm{\Sigma}^{\col} \in \pds(q)$. The $pq$-dimensional vectorized observation is denoted as $\bm{x} = \vect(\bm{X})$, with mean $\bm{\mu} = \vect(\bm{\M})$ and covariance matrix $\bm{\Sigma} = \bm{\Sigma}^{\col} \otimes \bm{\Sigma}^{\row}$.
Based on Equation~\eqref{eq:shapley_md}, we can obtain outlyingness contributions for every coordinate of $\bm{x}$ and hence for every cell of the matrix $\bm{X}$ by
\begin{align*}
    \phi_{a}(\bm{x}) &= (x_{a} - \mu_{a})\sum_{b = 1}^{pq}(x_{b}-\mu_{b})\omega_{ab}
    = (x_{jk} - m_{jk})\sum_{i = 1}^{p}\sum_{l=1}^{q}(x_{il}-m_{il})\omega_{ij}^{\row}\omega_{kl}^{\col} = \phi_{jk}(\bm{X}),
\end{align*}
with $a = i+(l-1)p$ and $b = j+(k-1)p$, and for 
$j=1,\ldots ,p$ and $k=1,\ldots,q$. Using matrix operations, we can efficiently compute the $p \times q$ matrix containing the cellwise Shapley values $\phi_{jk}(\bm{X})$ as
\begin{align}
    \bm{\Phi}(\bm{X}) &= (\bm{X}-\bm{\M}) \circ \bm{\Omega}^{\row} (\bm{X}-\bm{\M}) \bm{\Omega}^{\col} \in \R^{p \times q}, \label{eq:shapley_cellwise}
\end{align}
where $\circ$ refers to element-wise multiplication. 

Next, we discuss how matrix affine transformations as in Equation~\eqref{eq:matrix_affine_equivariance} affect the cellwise Shapley values for matrix-variate data. 
\begin{proposition} \label{proposition:shapley_properties}
    Let $\bm{X} \in \R^{p \times q}$ be a sample from $\ME(\bm{M},\bm{\Sigma}^{\row},\bm{\Sigma}^{\col},g)$, $\bm{A} \in \R^{p \times p}$, $\bm{B} \in \R^{q \times q}$, $\bm{A}, \bm{B}$ invertible, and $\bm{C} \in \R^{p \times q}$. 
    Then, the cellwise Shapley values are
\emph{not} matrix affine equivariant, i.e., $\bm{\Phi}(\bm{A}\bm{X}\bm{B}) \neq \bm{A}\bm{\Phi}\bm{X})\bm{B}$ for general positive definite $\bm{A}$ and $\bm{B}$. However, they are
    \begin{enumerate}\renewcommand{\labelenumi}{(\alph{enumi})}
        \item shift invariant, i.e., $\bm{\Phi}(\bm{X} + \bm{C}) = \bm{\Phi}(\bm{X})$,
        \item scale invariant, i.e., if $\bm{A}$ and $\bm{B}$ are scaling matrices, 
        thus diagonal matrices with non-zero entries,
        then $\bm{\Phi}(\bm{A}\bm{X}\bm{B}) = \bm{\Phi}(\bm{X})$, 
        \item permutation equivariant, i.e., if $\bm{A}$ and $\bm{B}$ are permutation matrices, then $\bm{\Phi}(\bm{A}\bm{X}\bm{B}) = \bm{A}\bm{\Phi}(\bm{X})\bm{B}$, and
    \end{enumerate}
\end{proposition}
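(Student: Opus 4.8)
The plan is to work directly from the closed form~\eqref{eq:shapley_cellwise} and to track how each of its two factors transforms under $\bm{Z} = \bm{A}\bm{X}\bm{B} + \bm{C}$. First I would record the two ingredients supplied by the distributional transformation~\eqref{eq:elliptical_density_transformation}: the mean of $\bm{Z}$ is $\bm{A}\bm{\M}\bm{B}+\bm{C}$, so the centered observation transforms as $\bm{Z} - (\bm{A}\bm{\M}\bm{B}+\bm{C}) = \bm{A}(\bm{X}-\bm{\M})\bm{B}$ with the shift $\bm{C}$ cancelling, and the row and column precisions become $(\bm{A}')^{-1}\bm{\Omega}^{\row}\bm{A}^{-1}$ and $\bm{B}^{-1}\bm{\Omega}^{\col}(\bm{B}')^{-1}$. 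Substituting these into the inner factor $\bm{\Omega}^{\row}(\cdot)\bm{\Omega}^{\col}$ of~\eqref{eq:shapley_cellwise}, the interior pairs $\bm{A}^{-1}\bm{A}$ and $\bm{B}\bm{B}^{-1}$ cancel, leaving
\begin{equation*}
\bm{\Phi}(\bm{A}\bm{X}\bm{B}+\bm{C}) = \bigl[\bm{A}(\bm{X}-\bm{\M})\bm{B}\bigr] \circ \bigl[(\bm{A}')^{-1}\bm{\Omega}^{\row}(\bm{X}-\bm{\M})\bm{\Omega}^{\col}(\bm{B}')^{-1}\bigr].
\end{equation*}
This single identity is the backbone of the argument: every stated property follows by specializing $\bm{A}$ and $\bm{B}$, and the failure of equivariance is already visible in the surviving conjugating factors $(\bm{A}')^{-1}$ and $(\bm{B}')^{-1}$.

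I would then dispatch the three positive claims by specialization. Claim (a) is immediate: with $\bm{A}=\bm{I}_p$ and $\bm{B}=\bm{I}_q$ the centered term is unchanged and the precisions are untouched, so $\bm{\Phi}(\bm{X}+\bm{C}) = \bm{\Phi}(\bm{X})$. For claim (b), writing $\bm{A} = \diag(a_1,\dots,a_p)$ and $\bm{B} = \diag(b_1,\dots,b_q)$, the factors $(\bm{A}')^{-1}$ and $(\bm{B}')^{-1}$ are diagonal with reciprocal entries, so the $(j,k)$ entry of the first bracket carries a scalar $a_j b_k$ while the $(j,k)$ entry of the second bracket carries $a_j^{-1} b_k^{-1}$; since the Hadamard product multiplies matching entries, these scalars cancel cellwise and $\bm{\Phi}(\bm{A}\bm{X}\bm{B}) = \bm{\Phi}(\bm{X})$. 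For claim (c), permutation matrices are orthogonal, hence $(\bm{A}')^{-1}=\bm{A}$ and $(\bm{B}')^{-1}=\bm{B}$, and the displayed identity reduces to a Hadamard product of two matrices both conjugated by the same permutations. Here I would invoke the elementary fact that simultaneous row/column permutation commutes with the Hadamard product, namely $(\bm{A}\bm{P}\bm{B})\circ(\bm{A}\bm{Q}\bm{B}) = \bm{A}(\bm{P}\circ\bm{Q})\bm{B}$, which is checked at the entry level since, writing $\alpha,\beta$ for the permutations induced by $\bm{A},\bm{B}$, both sides have $(j,k)$ entry $P_{\alpha(j),\beta(k)}Q_{\alpha(j),\beta(k)}$. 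Applying it with $\bm{P}=\bm{X}-\bm{\M}$ and $\bm{Q}=\bm{\Omega}^{\row}(\bm{X}-\bm{\M})\bm{\Omega}^{\col}$ yields $\bm{\Phi}(\bm{A}\bm{X}\bm{B}) = \bm{A}\bm{\Phi}(\bm{X})\bm{B}$.

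The non-equivariance claim I would settle last, and it is where the genuine content lies. For general invertible (in particular positive definite) $\bm{A},\bm{B}$, the factors $(\bm{A}')^{-1}$ and $(\bm{B}')^{-1}$ sit multiplicatively inside the Hadamard product and cannot be pulled out, precisely because $\circ$ does not distribute over ordinary matrix multiplication. To exhibit the failure concretely I would take $p=q=2$, $\bm{\M}=\bm{0}$, $\bm{\Sigma}^{\row}=\bm{\Sigma}^{\col}=\bm{I}$ (so that $\bm{\Phi}(\bm{X}) = \bm{X}\circ\bm{X}$) and $\bm{B}=\bm{I}$ with a non-diagonal symmetric positive definite $\bm{A}$; then $\bm{\Phi}(\bm{A}\bm{X}) = (\bm{A}\bm{X})\circ(\bm{A}^{-1}\bm{X})$, which differs from $\bm{A}(\bm{X}\circ\bm{X})$ already for a generic $\bm{X}$. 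The main obstacle is thus conceptual rather than computational: identifying exactly which structural features of $\bm{A}$ and $\bm{B}$—orthogonality in the permutation case, diagonality in the scaling case—make the conjugating factors either coincide with the transformation itself or cancel entrywise, so that the Hadamard product is preserved, whereas no such mechanism survives for general positive definite $\bm{A}$ and $\bm{B}$.
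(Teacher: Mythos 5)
Your proof is correct and follows essentially the same route as the paper's: both start from the closed form $\bm{\Phi}(\bm{X}) = (\bm{X}-\bm{\M})\circ\bm{\Omega}^{\row}(\bm{X}-\bm{\M})\bm{\Omega}^{\col}$, observe that the transformed Shapley matrix is $\bigl[\bm{A}(\bm{X}-\bm{\M})\bm{B}\bigr]\circ\bigl[(\bm{A}')^{-1}\bm{\Omega}^{\row}(\bm{X}-\bm{\M})\bm{\Omega}^{\col}(\bm{B}')^{-1}\bigr]$, and then specialize $\bm{A},\bm{B}$ (your entrywise scalar cancellation for (b) and the permutation--Hadamard commutation for (c) are just cleaner packagings of the paper's quadruple-sum expansion). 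The one substantive difference is the non-equivariance counterexample: the paper uses a row-addition (shear) matrix, which is invertible but not positive definite, whereas your choice of a non-diagonal symmetric positive definite $\bm{A}$ with $\bm{\Sigma}^{\row}=\bm{\Sigma}^{\col}=\bm{I}$ directly matches the proposition's claim ``for general positive definite $\bm{A}$ and $\bm{B}$'' and is easily verified numerically, so on that point your argument is actually better targeted than the paper's.
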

The proofs are given in Supplement~\ref{supplement:shapley_proofs}.
When considering gray-scale image data, shifting or rescaling the gray-scale information would not change the cellwise Shapley values. Further, exchanging rows and columns of the image; in particular mirroring or rotating the image by $90^{\circ}$, would equivalently transform the Shapley values.
Similarly to the setting of cellwise outliers \citep{alqallaf2009propagation}, cellwise Shapley values are tied to the original coordinate system and are not matrix affine equivariant.

It can be preferable in some applications to obtain outlyingness explanations for a complete row or column of the matrix-valued observations, especially when we want to compare multiple observations.
In the following, we show how Shapley values for rows can be obtained; Shapley values for columns can be computed based on the transposed matrix or by adapting the following notation accordingly for columns. 

Consider again the set $P=\{1,\ldots ,p\}$, and $S \subseteq P \setminus \{j\}$.
The rowwise marginal contributions to the matrix Mahalanobis distance are defined as as 
\begin{align*}
    \Delta_j\mmd(\hat{\bm{X}}^{S}) &:= \mmd(\hat{\bm{X}}^{S \cup \{j\}}) - \mmd(\hat{\bm{X}}^{S}) ,
\end{align*}
where the $i$-th row of $\hat{\bm{X}}^{S}$ is given as $(x_{i1},\dots,x_{iq})$ if $i \in S$ and $(m_{i1},\dots,m_{iq})$ if $i \notin S$.
\begin{proposition}\label{proposition:matrix_shapley}
    The $j$-th coordinate of the rowwise Shapley value is given by
    \begin{align}
            \phi_{j.}(\bm{X}) := & \sum_{S \subseteq P \setminus \{j\}} 
            \frac{\abs{S}!(p-\abs{S}-1)!}{p!} \Delta_j\mmd(\hat{\bm{X}}^{S}) \\
            = &\sum_{i = 1}^p\sum_{k = 1}^q\sum_{l = 1}^q (x_{jl}-m_{jl}) (x_{ik}-m_{ik}) \omega_{ij}^{\row}\omega_{kl}^{\col}
            =\sum_{k=1}^q\phi_{jk}(\bm{X}),
        \label{eq:rowwise_shapley}
    \end{align}
\end{proposition}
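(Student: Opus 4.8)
The plan is to evaluate the coalitional sum defining $\phi_{j.}(\bm{X})$ in closed form and then match it against the cellwise formula in Equation~\eqref{eq:shapley_cellwise}. Write $y_{ik} = x_{ik} - m_{ik}$ for the entries of $\bm{X} - \bm{\M}$. The first step is to record that the rowwise coalition value is a symmetric quadratic form in the rows: setting $v(S) := \mmd^2(\hat{\bm{X}}^{S})$ and expanding the trace in Equation~\eqref{eq:MD=MMD} as $\mmd^2(\bm{X}) = \sum_{i,i'=1}^p\sum_{k,l=1}^q y_{ik}y_{i'l}\omega_{ii'}^{\row}\omega_{kl}^{\col}$, only rows with indices in $S$ carry nonzero deviations, so $v(S) = \sum_{i,i'\in S} c_{ii'}$ with $c_{ii'} := \omega_{ii'}^{\row}\sum_{k,l=1}^q y_{ik}y_{i'l}\omega_{kl}^{\col}$. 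Since $\bm{\Omega}^{\row}$ and $\bm{\Omega}^{\col}$ are symmetric, the relabelling $k\leftrightarrow l$ shows $c_{ii'} = c_{i'i}$.

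Second, I would compute the marginal contribution of row $j$ to a coalition $S\subseteq P\setminus\{j\}$. Collecting the terms of $v(S\cup\{j\})$ that involve the index $j$ gives $\Delta_j v(S) = c_{jj} + 2\sum_{i\in S} c_{ji}$, again using $c_{ji}=c_{ij}$. Substituting into $\phi_{j.}(\bm{X}) = \sum_{S\subseteq P\setminus\{j\}} \frac{\abs{S}!(p-\abs{S}-1)!}{p!}\,\Delta_j v(S)$ and interchanging the order of summation splits the result into a diagonal part, weighted by $\sum_{S\subseteq P\setminus\{j\}} \frac{\abs{S}!(p-\abs{S}-1)!}{p!} = 1$, and an off-diagonal part, in which each $c_{ji}$ with $i\ne j$ is weighted by $\sum_{i\in S\subseteq P\setminus\{j\}} \frac{\abs{S}!(p-\abs{S}-1)!}{p!} = \tfrac12$. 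This yields $\phi_{j.}(\bm{X}) = c_{jj} + \sum_{i\ne j} c_{ji} = \sum_{i=1}^p c_{ji}$. Equivalently, and more conceptually, one may decompose $v = \sum_i c_{ii}\,u_{\{i\}} + \sum_{i<i'} 2c_{ii'}\, u_{\{i,i'\}}$ into unanimity games and invoke additivity of the Shapley value together with the fact that each member of $u_T$ receives share $1/\abs{T}$; this reproduces the same value while bypassing the binomial bookkeeping.

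Finally, expanding $c_{ji} = \omega_{ji}^{\row}\sum_{k,l}y_{jk}y_{il}\omega_{kl}^{\col}$, summing over $i$, and relabelling $k\leftrightarrow l$ (using symmetry of $\bm{\Omega}^{\col}$ and $\bm{\Omega}^{\row}$) gives exactly the stated triple sum $\sum_{i,k,l}(x_{jl}-m_{jl})(x_{ik}-m_{ik})\omega_{ij}^{\row}\omega_{kl}^{\col}$. The remaining equality $\phi_{j.}(\bm{X}) = \sum_{k=1}^q \phi_{jk}(\bm{X})$ is then immediate: summing the cellwise closed form $\phi_{jk}(\bm{X}) = (x_{jk}-m_{jk})\sum_{i,l}(x_{il}-m_{il})\omega_{ij}^{\row}\omega_{kl}^{\col}$ of Equation~\eqref{eq:shapley_cellwise} over $k$ produces the identical triple sum (after $k\leftrightarrow l$). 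I expect the main obstacle to be the combinatorial weight identity $\sum_{i\in S\subseteq P\setminus\{j\}}\frac{\abs{S}!(p-\abs{S}-1)!}{p!} = \tfrac12$ — that is, correctly attributing each symmetric pair interaction $c_{ji}$ equally to its two rows — after which everything reduces to routine index bookkeeping across the two precision matrices; the unanimity-game route is the cleanest way to sidestep this computation.
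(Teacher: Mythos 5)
Your proof is correct and follows essentially the same route as the paper's: both expand the squared Mahalanobis distance as a symmetric quadratic form over the rows, write the marginal contribution of row $j$ as a diagonal term plus twice the cross terms with rows in $S$, and then use the combinatorial identity that the Shapley weights sum to $1$ over all $S\subseteq P\setminus\{j\}$ and to $\tfrac12$ over those containing a fixed $i\neq j$. Your $c_{ii'}$ bookkeeping (and the optional unanimity-game remark) is just a cleaner packaging of the paper's explicit index-and-binomial computation, and the final identification with $\sum_k\phi_{jk}(\bm{X})$ is handled the same way in both.
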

A proof for Equation~\eqref{eq:rowwise_shapley} can be found in Supplement~\ref{supplement:shapley_proofs}.
Thus, a rowwise Shapley value is obtained by summing up the cellwise Shapley values for the corresponding row, which is equivalent to adapting the marginal contributions to a rowwise replacement.
The vectors containing the rowwise or columnwise Shapley values can also be computed by 
\begin{align}
    \bm{\phi}_{\row}(\bm{X}) &=  \diag(\bm{\Omega}^{\row} (\bm{X}-\bm{\M}) \bm{\Omega}^{\col} (\bm{X}-\bm{\M})') \in \R^{p} \label{eq:shapley_rowwise}  \quad \text{and}\\
    \bm{\phi}_{\col}(\bm{X}) &= \diag((\bm{X}-\bm{\M})'\bm{\Omega}^{\row} (\bm{X}-\bm{\M}) \bm{\Omega}^{\col}) \in \R^{q} \label{eq:shapley_columnwise},
\end{align}
respectively. The properties listed in Proposition~\ref{proposition:shapley_properties} also apply in this setting.

\section{Simulations}
\label{section:simulations}

In the simulation studies outlined in this section, our primary focus is to rigorously assess the performance of the MMCD estimators.
We aim to validate their demonstrated theoretical properties and compare efficiency against ML estimators.
Despite our initial intention to include various robust estimators as mentioned in Section  \ref{sec:Introduction}, practical constraints arose as the relevant routines were exclusively accessible in \texttt{Matlab}, while our current framework operates within \texttt{R}.
For the sake of consistency and practical implementation, we concentrate on comparing the efficiency of the raw and reweighted MMCD alongside an in-depth analysis of the MLEs, (reweighted) MMCD estimators, and MCD estimator based on the vectorized samples on contaminated data.
To ensure the highest possible breakdown point across all simulations and examples discussed in this paper, we set $h = \lfloor\nicefrac{(n+d+2)}{2}\rfloor$ for the MMCD estimators and $h = \lfloor\nicefrac{(n+pq+1)}{2}\rfloor$ for the MCD estimator. We conduct 100 repetitions for each simulation setting and visualize the results through either line plots or boxplots. In the line plots, the solid lines represent average scores, while the shaded areas depict the one standard error regions.

\paragraph{Finite-sample efficiency.} To analyze the finite-sample efficiency, we generate samples from a centered matrix normal distribution with dimensions $(p,q) \in \{(5,20), (50,20)$, $(100,50)\}$ for various sample sizes $n \in \{20,50,100,300,1000\}$. For the rowwise covariance matrix we adopt the covariance matrices proposed by \cite{agostinelli2015robust}, denoted $\bm{\Sigma}^{\row} = \bm{\Sigma}^{\rnd} \in \pds(p)$, which have random entries and generally yield low correlations.
For the columnwise covariance, we use 
$\bm{\Sigma}^{\col} = \bm{\Sigma}^{\mix}(0.7) \in \pds(q)$, with entries $\sigma_{jk}^{\mix}(0.7) = 0.7^{\abs{j-k}}$. We assess the normal finite-sample efficiency by comparing the ratio 
$$
\frac{D(\hat{\bm{\Sigma}}^{\row}_{\mle}, \hat{\bm{\Sigma}}^{\col}_{\mle})}{D(\hat{\bm{\Sigma}}^{\row}_{\mmcd}, \hat{\bm{\Sigma}}^{\col}_{\mmcd})},
$$
where $D(\hat{\bm{\Sigma}}^{\row}, \hat{\bm{\Sigma}}^{\col})$ denotes the Kullback-Leiber (KL) divergence of the estimators $\hat{\bm{\Sigma}}^{\row}$ and $\hat{\bm{\Sigma}}^{\col}$ in the matrix normal setting $\MN(\bm{\M}, \bm{\Sigma}^{\row}, \bm{\Sigma}^{\col})$, which is given by 
\begin{align}\label{eq:normal_KL_divergence}
    \begin{split}
        D(\hat{\bm{\Sigma}}^{\row}, \hat{\bm{\Sigma}}^{\col}) 
        =&\tr(\bm{\Omega}^{\row}\hat{\bm{\Sigma}}^{\row}) \tr(\bm{\Omega}^{\col}\hat{\bm{\Sigma}}^{\col})\\
        &- q\log(\det(\bm{\Omega}^{\row}\hat{\bm{\Sigma}}^{\row})) -
        p\log(\det(\bm{\Omega}^{\col}\hat{\bm{\Sigma}}^{\col})) -  pq,
    \end{split}
\end{align}
with $\bm{\Omega}^{\row} = (\bm{\Sigma}^{\row})^{-1}$ and $\bm{\Omega}^{\col} = (\bm{\Sigma}^{\col})^{-1}$.
As shown in Figure~\ref{fig:efficiency}, the efficiency of the raw MMCD estimators is below $0.5$ on average. In contrast, the reweighted estimators' efficiency is above $0.5$ for $n=100$ and it rises to over $0.9$ as the sample size increases.
\begin{figure}[!ht]
    \centering
    \includegraphics[width = \linewidth]{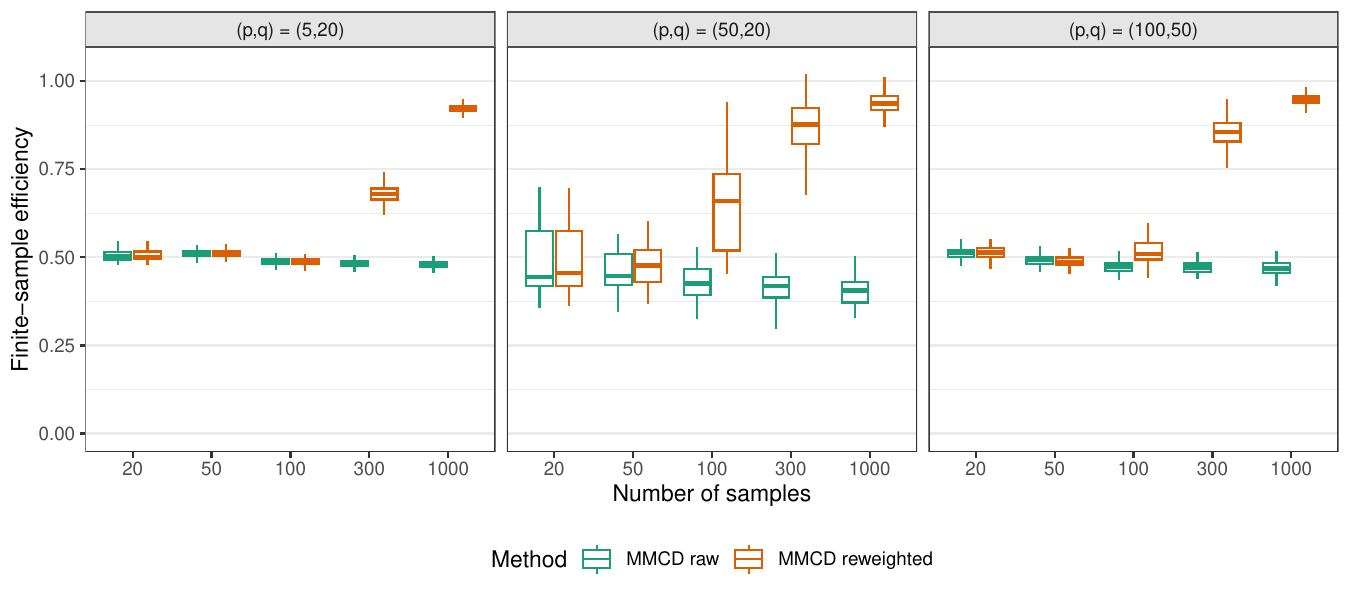}
    \caption{Comparison of the finite-sample efficiency of raw and reweighted MMCD.}
    \label{fig:efficiency}
\end{figure}

\paragraph{Robustness and matrix size.} For the setting with contamination, we consider matrix-variate samples with $p \in \{2,\dots,30\}$ rows and $q = \{10,20,30\}$ columns for sample sizes $n \in \{100,1000\}$. The clean data are generated from a centered matrix normal distribution with $\bm{\Sigma}^{\row} = \bm{\Sigma}^{\rnd}$ and $\bm{\Sigma}^{\col} = \bm{\Sigma}^{\mix}(0.7)$. A fraction, $\varepsilon = 0.1$, of the clean data is replaced by outliers, sampled from a matrix normal distribution with a mean matrix where all entries are equal to $\gamma = 1$. The covariance matrices of the outliers are the same as for the regular observations. 

We use KL divergence \eqref{eq:normal_KL_divergence} to analyze the quality of the covariance estimation. Additionally, we analyze outlier detection capabilities of the squared Mahalanobis distance based on the estimators, with the $\chi^2_{pq,0.99}$ quantile as a detection threshold. We also include the Mahalanobis distances based on true parameters used to generate the data as a benchmark and measure performance by precision and recall. Due to the excessively long computation times of the Fast-MCD procedure in higher-dimensional scenarios, we used the deterministic MCD \citep{hubert2012deterministic} when $pq > 300$. Since the MCD estimator requires $n > pq$, it is only computed for those settings. 
\begin{figure}[!ht]
    \centering
    \includegraphics[width = \linewidth]{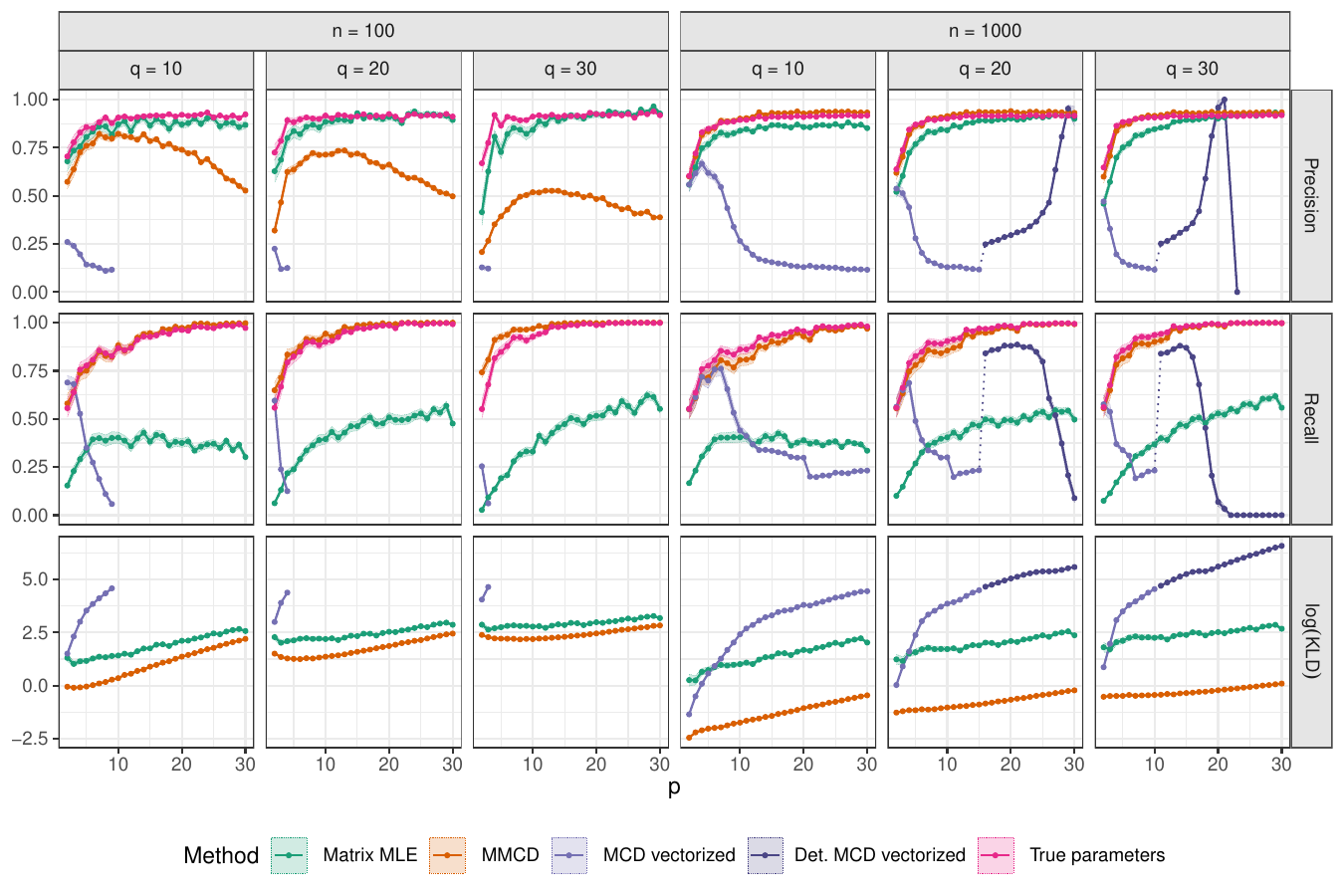}
    \caption{Comparison of precision, recall, and KL divergence for ML and MMCD estimators, (deterministic) MCD estimators with vectorized data, and true parameters as a benchmark for outlier detection for simulated data from a matrix normal distribution with $10\%$ contamination.}
    \label{fig:simulation_pq_ratio_line1}
\end{figure}

Figure~\ref{fig:simulation_pq_ratio_line1} shows that the MMCD estimators have lower KL divergence than the competing methods and attain a recall similar to the benchmark approach based on the true parameters used to generate the data across all settings. The precision of the MMCD estimators depends on the dimensionality of the matrix-variate samples as well as on the sample size. For $n = 100$, the precision decreases with increasing dimensionality $pq$, but the effect is mitigated by an improving performance when $\max\{\nicefrac{p}{q},\nicefrac{q}{p}\}$ is small. 
For $n = 1000$, the precision is close to the precision based on the true parameters. This suggests that for small sample sizes, a correction similar to the one proposed by \cite{pison2002small} for the MCD could lead to a better performance. In the matrix-variate setting, such a correction would not only be dependent on $pq$ and $n$ but also on $\nicefrac{p}{q}$ and $\nicefrac{q}{p}$. 

For small $p$ and $q$, the comparison between the MMCD estimators and the MCD for the vectorized observations is of special interest. For $n = 1000$ and $q = 10$ they have a similar recall when $p \leq 6$, and for $q \in \{20,30\}$ the MCD estimators show substantial improvements when the deterministic MCD approach is used instead of the Fast-MCD. This can be explained by the dependence of the Fast-MCD on the robustness of the initial solutions, and with an increasing $pq$, the probability of obtaining a clean subset becomes very small. The MCD estimator shows a steep drop in precision as $p$ increases when Fast-MCD is used. For the deterministic MCD, we see a trade-off between precision and recall with increasing dimensionality, but the KL divergence remains high. With increasing dimensionality, even the nonrobust matrix MLEs outperform the MCD estimator which highlights the importance of respecting the inherent data structure of matrix-variate observations. 

\paragraph{Robustness and contamination type.} In addition to the shift outliers we also consider block and cell contamination for matrix normal samples of size $(p,q) = (5,20)$. In all three settings, we consider a fraction of $\varepsilon = 0.1$ contaminated samples. Let $\bm{X} = (x_{jk}), j = 1,\dots,5, k = 1,\dots,20$, denote a sample from a centered matrix normal distribution with rowwise covariance $\bm{\Sigma}^{\row} = \bm{\Sigma}^{\rnd}$ and columnwise covariance $\bm{\Sigma}^{\col} = \bm{\Sigma}^{\mix}(0.7)$. For block contamination, we replaced the top left $2 \times 5$ block, corresponding to the entries $x_{jk}, j = 1,2, k = 1,\dots,5$, with entries from a shifted matrix normal distribution with a mean matrix where all entries are equal to $\gamma = 1$ and covariance matrices corresponding to the top left block of $\bm{\Sigma}^{\row}$ and $\bm{\Sigma}^{\col}$. For cell contamination, a fraction of $0.1$ of the cells of the outlying observations are randomly permuted. The shift outliers are generated with a mean shift $\gamma = 1$ as before. 
\begin{figure}[!ht]
    \centering
    \includegraphics[width = 1\linewidth]{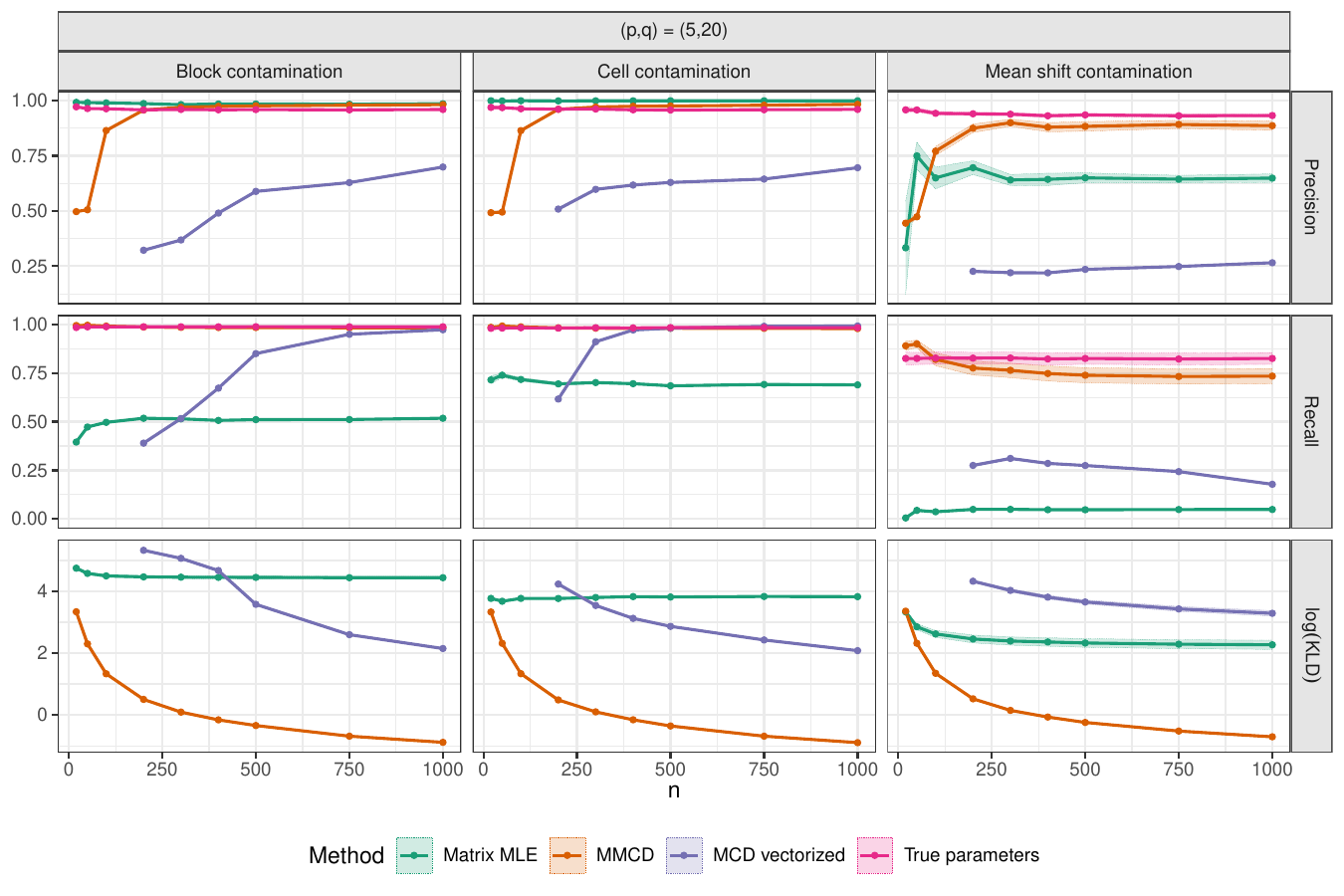}
    \caption{Precicion, recall, and logarithm of KL divergence comparing block, cell, and sample contamination.}
    \label{fig:simulation_contamination_type1}
\end{figure}   

Figure~\ref{fig:simulation_contamination_type1} shows that the MMCD estimators are better suited for outlier detection and yield more robust covariance estimates than the matrix MLEs as well as the MCD estimator on the vectorized observations. Overall, the results are similar across all three simulation scenarios, only for mean shift contamination we see higher variation than in the other two settings. This is likely because the block and cell contamination interfere with covariance estimation more profoundly, i.e., the KL divergence of the matrix MLEs is highest for block contamination followed by cell and shift contamination. 
    
The supplementary materials~\ref{supplement:simulations} provide in-depth simulation studies that expand upon the scenarios discussed in this section. These simulations analyze the effects of the level of contamination and mean shifts for multiple types of covariance matrices. Additionally, we extend our analysis beyond the normal model to include samples generated from a matrix t-distribution, examining performance across a range of degrees of freedom. For this scenario, we also compute the ML estimators for the matrix t-distribution \citep{thompson2020classification}. We include a summary of computation time and consider additional performance metrics, such as the F-score (harmonic mean of precision and recall), Frobenius error, and the angle between eigenvalues of covariance matrices.

\section{Examples}
\label{section:examples}

\subsection{Glacier weather data -- Sonnblick observatory}

We analyze the publicly available weather data from Austria's highest weather station, located in the Austrian Central Alps at an elevation of 3106 m above sea level on top of the glaciated mountain ``Hoher Sonnblick'' (datasource: GeoSphere Austria - \url{https://data.hub.geosphere.at}). 
The observed parameters are monthly averages of temperature (T), precipitation (P), proportion of solid precipitation (SP), air pressure (AP), and sunshine hours (SH). 
We consider the monthly values between 1891 and 2022 and exclude five years with missing values, yielding $n = 127$ observations of $p = 5$ times $q = 12$ dimensional matrices.
Our goal is to identify observations that show a different weather pattern than the majority of the data and explain why the corresponding years deviate from the majority. 
We did not adjust for a possible yearly trend in this exploratory analysis as we wish to understand long-term patterns and shifts in climate without the influence of adjustments. 

\begin{figure}[!ht]
    \centering
    \includegraphics[width=\linewidth]{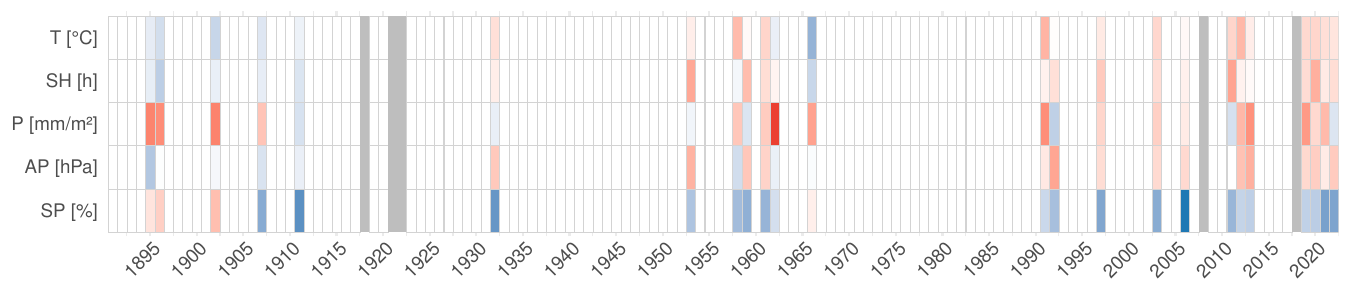}
    \caption{Yearly outlyingness contributions for the glacier weather data. Regular years are white, and years that contain missing data are gray. Outliers are colored as follows: blue for ``above average'', red for ``below average'', and color intensity proportional to rowwise Shapley value.}
    \label{fig:weather_years}
\end{figure}

In total, outlier detection based on the MMCD estimators flags 23 outlying matrices, which are indicated in Figure~\ref{fig:weather_years} as colored years: 
If the aggregated monthly measurements are above their average, the cells are colored red; otherwise, they are colored blue. The rowwise Shapley value is then used to determine color brightness, i.e., the larger the outlyingness contribution, the darker the color. Years with missing observations are grey; years with only white cells refer to regular observations.
It is visible that the outlier frequency increases in the last period. Moreover, more recent outliers are characterized by increased temperature, precipitation, air pressure, and a lack of solid precipitation (e.g. snow) -- a clear signal of a climate change. 
\begin{figure}[!ht]
    \centering
    \includegraphics[width = 1\linewidth]{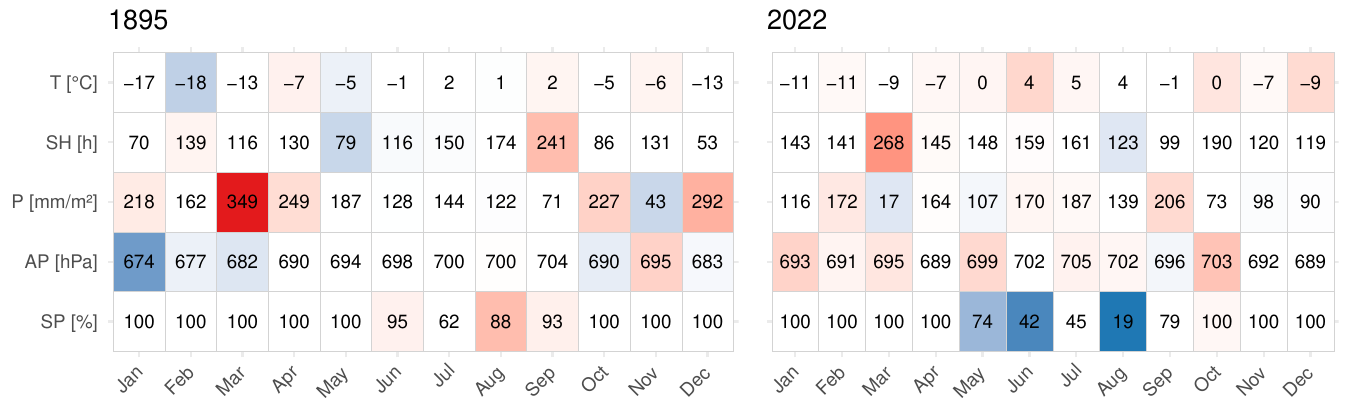}
    \caption{Outlyingess contributions based on cellwise Shapley values for the years 1895 and 2022 of the glacier weather data using the same color scheme as in Figure~\ref{fig:weather_years}.}
    \label{fig:weather_months}
\end{figure}

In Figure~\ref{fig:weather_months}, we use cellwise Shapley values to understand which parameters in which months contributed most to the outlyingness of 1895 and 2022, corresponding to the first and last outlying observation in the data set, respectively, where the color scheme is inherited from Figure \ref{fig:weather_years}. The largest outlyingness contribution is due to an unusually large amount of precipitation in March 1895. Overall, high amounts of precipitation were observed that year, with a high percentage of snow even in the summer months. 
In contrast, the largest outlyingness contributions in 2022 are due to a very sunny March and low percentages of snowfall in May, June, and August.

\subsection{Darwin data}

We consider the DARWIN (Diagnosis AlzheimeR WIth haNdwriting) \citep{cilia2022diagnosing} data set containing handwriting samples of 174 subjects, 89 diagnosed with Alzheimer’s disease (AD), and 85 healthy subjects (H). Each individual completed 25 handwriting tasks on paper, and the pen movements were recorded using a graphic tablet. The tasks are ordered in difficulty.
From the raw handwriting data, 18 features were extracted: Total Time, Air Time, Paper Time, Mean Speed on paper, Mean Speed in air, Mean Acceleration on paper, Mean Acceleration in air, Mean Jerk on paper, Mean Jerk in air, Pressure Mean, Pressure Variance, Generalization of the Mean Relative Tremor (GMRT) on paper, GMTR in air, Mean GMRT, Pendowns Number, Max X Extension, Max Y Extension, and Dispersion Index. For a more detailed description of the data, we refer to \cite{cilia2018experimental}. 
In \cite{cilia2022diagnosing}, each task was considered separately to train a classifier, and the combination of the classifiers led to an improvement in the classification of subjects. Our focus here lies not in the classification task but rather in explaining the differences between AD and H groups. We treat the observations as matrices, with the rows representing the extracted features and the columns representing the tasks.
Because of linear dependencies, the variables Total Time and Mean GMRT were excluded. Further, the variable Air Time had several extreme and unreliable measurements and was thus also excluded. This yields observation matrices with $p = 15$ features and $q=25$ tasks.

We applied the MMCD procedure only on the healthy subjects and used the robust estimators to compute MMDs for all observations. Thus, the MMDs presented in Figure~\ref{fig:darwin_mmd_shv} left are generally smaller for the H group, whereas all observations from the AD group exceed the outlier cutoff value. The fact that healthy subjects also exceed the cutoff value shows the heterogeneity in this group.
\begin{figure}
    \centering
    \includegraphics[width=\linewidth]{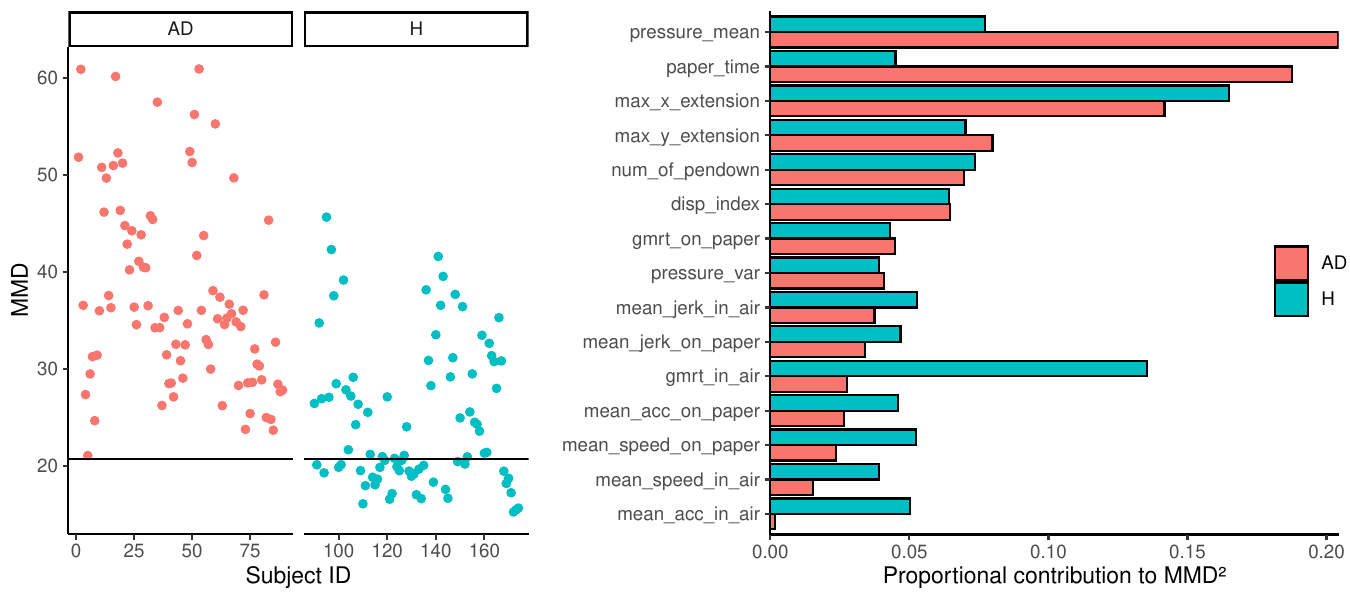}
    \caption{Plot of robust MMD based on MMCD estimators for the Darwin data on the left, and average proportional rowwise Shapley values for the H and AD subjects on the right.}
    \label{fig:darwin_mmd_shv}
\end{figure}
In the right panel of Figure~\ref{fig:darwin_mmd_shv}, we consider the \emph{average proportional contributions} of the variables to the MMDs for the H and AD groups. The outlyingness contributions are based on the rowwise Shapley values, resulting in 15 scores for each individual. Since those scores sum up to the squared MMD, we can divide them by the squared MMD to get proportional contributions, and by averaging over all individuals in the H and AD groups, respectively, we obtain the values shown in this plot. Large differences between the AD and H groups indicate variables that are important to distinguish between healthy individuals and those who have Alzheimer's disease. For example, Pressure Mean and Paper Time are evidently higher in the AD group.

\subsection{Video data}
In this example, we examine a surveillance video of a beach sourced from \cite{li2004statistical}. The video comprises 633 frames, each sized at $128 \times 160$ pixels; five selected frames are shown in Figure~\ref{fig:video_overview}. The majority of the frames depict the beach scene. 
Around frame 500, a man walks into the scene from the left and partly disappears behind the tree. As he continues walking, he reappears on the right side of the tree and remains in the video until the end.

\begin{figure}[!ht]
    \centering
    \includegraphics[width=\linewidth]{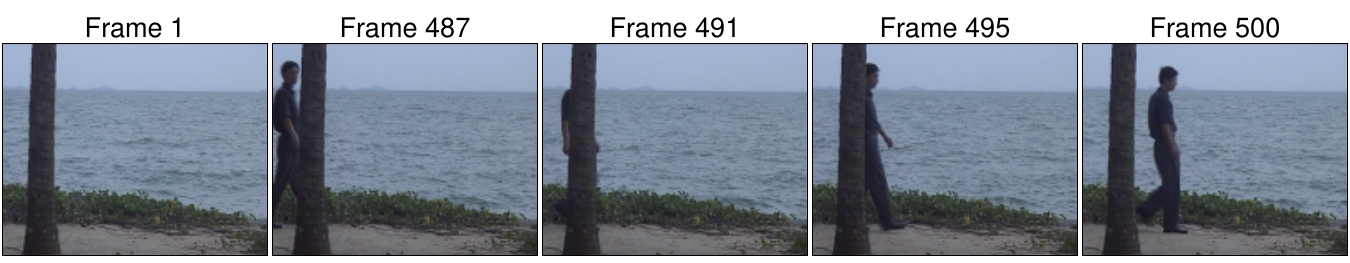}
    \caption{Selected frames of the video data.}
    \label{fig:video_overview}
\end{figure}

For our analyis, we converted the original RGB video to a grayscale video, applied the MMCD procedure, and obtained MMDs for all 633 frames, which are visualized in Figure~\ref{fig:video_md}. The plot on the left shows the robust MMDs for all 633 frames, and the one on the right for frames 471 to 633 to better highlight the increase in MMD when the man enters the scenery, with a short drop in MMD when he disappears behind the tree. We indicate frames 487, 491, and 495, also presented in Figure~\ref{fig:video_shapley} in terms of their cellwise Shapley values. We see that the pixels that form the contours of the man and most of the pixels of the man's head contribute most to the outlyingness. When the man disappears behind the tree, there are fewer pixels with high outlyingness contributions. Since the sum of the contributions amounts to the squared MMD of an observation, this explains the behavior of the MMDs of the frames shown in Figure~\ref{fig:video_md1}. 
\begin{figure}[!ht]
     \centering
     \begin{subfigure}[b]{0.49\textwidth}
        \centering
        \includegraphics[width=\linewidth]{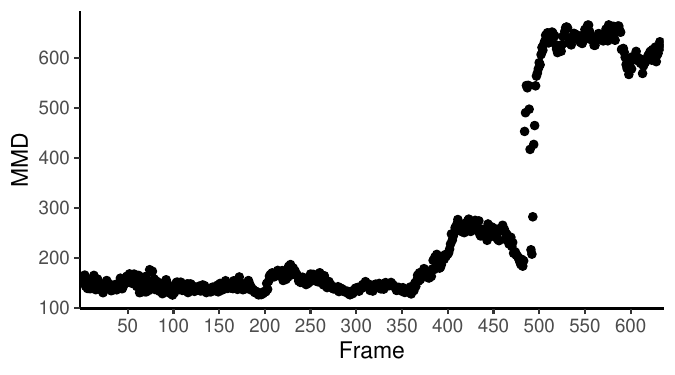}
        \caption{Frames 1 to 633.}
        \label{fig:video_md0}
     \end{subfigure}
     \hfill
    \begin{subfigure}[b]{0.49\textwidth}
        \centering
        \includegraphics[width=\linewidth]{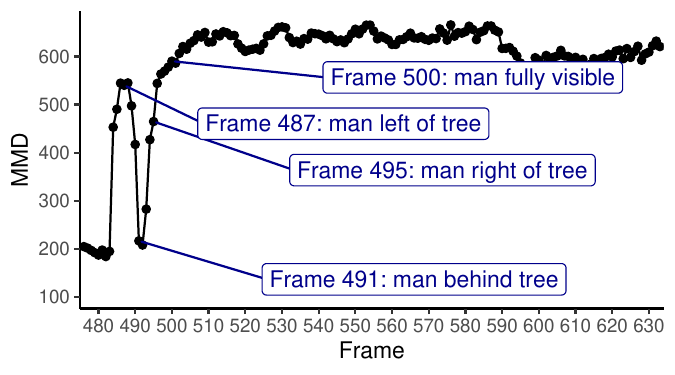}
        \caption{Frames 471 to 633.}
        \label{fig:video_md1}
    \end{subfigure}
    \caption{Plot of robust MMD based on MMCD estimators for the video data.}
    \label{fig:video_md}
\end{figure}
It is interesting to see a certain increase in the MMD in Figure~\ref{fig:video_md0} between frames 400 and 450. Here, the Shapley values on the contour of the palm tree contribute the most to the outlyingness. This could be caused by a slight shifting of the camera or a small movement of the palm tree due to wind.

\begin{figure}[!ht]
    \centering
    \includegraphics[width=\linewidth]{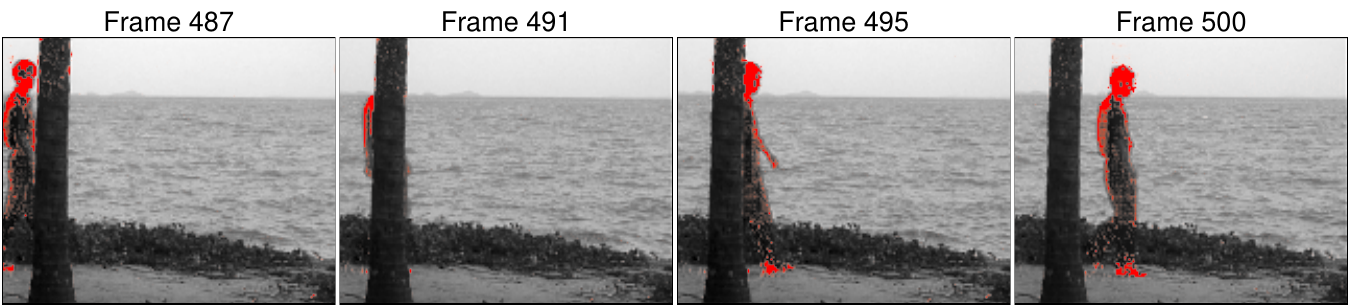}
    \caption{Outlyingness scores based on cellwise Shapley values are shown in red, where darker colors indicate higher outlyingness contributions, and the grayscale video frames are displayed in the background.}
    \label{fig:video_shapley}
\end{figure}

\section{Summary and conclusions}
\label{section:conclusion}

Matrix-valued observations, like images or dual-factor data tables, are common in various fields. 
To apply multivariate methods on matrix-valued data, the matrices are typically converted to vectors by stacking either the rows or columns. This disrupts the inherent data structure and increases dimensionality, thereby complicating parameter estimation. 
Thus, it is often preferable to model matrix-valued data directly with matrix-variate distributions. In this setting, Maximum Likelihood (ML) estimation methods exist for estimating the mean, as well as the row and column covariances, respectively. However, these estimators are sensitive to deviations caused by outliers among matrix-valued observations.

This work introduced the MMCD (matrix minimum covariance determinant) estimators as a robust counterpart to the ML estimators in the matrix-variate normal model. Several desirable properties are achieved: equivariance under matrix affine transformations, high breakdown point, and consistency under elliptical matrix-variate distributions. The proposed reweighted versions lead to higher efficiency but not to any loss in terms of breakdown point. An algorithm along the lines of the Fast-MCD procedure \citep{Rousseeuw1999} allows for efficient computation
of the estimators. Simulation experiments validate the theoretical properties and advantages.
Depending on the ratio of the number of rows and columns of the matrix-valued observations, the MMCD estimators show a big advantage over robust estimation for vectorized observations regarding breakdown and computational efficiency.

We further extended the outlier explanation concept based on Shapley values \citep{mayrhofer2022} to the matrix-variate setting.
This allows for an additive decomposition of the matrix-variate Mahalanobis distance of an observation into Shapley contributions of either the rows, the columns, or the matrix cells. The resulting Shapley values greatly aid with diagnostics, particularly in revealing those cells (rows, columns) of the matrix with the most substantial contributions to the outlyingness of the observation.

The efficiency of MMCD estimators in outlier detection for large sample sizes is evident from the simulations. However, our future research aims to improve and extend these estimators. For instance, smaller sample sizes might benefit from integrating finite sample corrections proposed by \cite{pison2002small} to enhance the results. 
Furthermore, the iterative computation of MMCD covariance estimators, which involves inverse covariance matrices, requires data that ensures full-rank estimates at each iteration. This requirement may be impeded for example in image data, in case certain rows or columns maintain constant pixel values across all observations.
To solve this, regularization involving a linear combination of the covariance matrix with a full-rank target matrix can be used \citep{ledoit2004well}, similarly to the multivariate setting \citep{boudt2020minimum}.

The MMCD objective can be expressed as a trimmed maximum likelihood problem, and thus, can be extended to tensor-valued data using ML estimation for the tensor normal distribution~\cite{manceur2013maximum}. The framework of \cite{Raymaekers2022} can be used to develop a cellwise robust version of the MMCD. 
Our ongoing research focuses on extending the MMCD estimators and outlier explanations based on Shapley values to the field of functional data analysis. Our goal is to introduce robust estimators and enhance interpretability for multivariate functional data. 
In the future, we also plan to incorporate these robust estimators as plug-in estimators to robustify established multivariate methodologies in the matrix-variate domain, like principal component analysis and discriminant analysis.

\noindent\textbf{Disclosure statement}\\
The authors report there are no competing interests to declare.

\noindent\textbf{Software and data availability}\\
The \texttt{R} package \texttt{robustmatrix} includes a parallelized \texttt{C++} implementation of the MMCD algorithm and a vignette to reproduce the examples presented in this paper. 

\noindent\textbf{Acknowledgements}\\
This work was supported by the AI4CSM project and has received funding from the ECSEL Joint Undertaking (JU) under grant agreement No 101007326; and the Austrian IKT der Zukunft programme via the Austrian Research Promotion Agency (FFG) and the Austrian Federal Ministry for Climate Action, Environment, Energy, Mobility, Innovation and Technology (BMK) under project No 884070. This work was supported by the Austrian Science Fund (FWF), project number I 5799-N.

\bibliographystyle{apalike}
\bibliography{references} 

\bigskip
\appendix
\section{Preliminaries} 
\label{supplement:preliminaries}

Consider an i.i.d. sample $\bm{\mathfrak{X}} = (\bm{X}_1,\ldots,\bm{X}_n) \in \R^{n \times p \times q}$, with $\bm{X}_i \sim \MN(\bm{\M}, \bm{\Sigma}^{\row}, \bm{\Sigma}^{\col})$. Due to the factored covariance structure of matrix normal data, the rowwise and columnwise covariance matrices $\bm{\Sigma}^{\row}$ and $\bm{\Sigma}^{\col}$ are only identified up to a multiplicative constant $\kappa \neq 0$, since replacing $\bm{\Sigma}^{\row}$ by $\kappa\bm{\Sigma}^{\row}$ and $\bm{\Sigma}^{\col}$ by $\nicefrac{1}{\kappa}\bm{\Sigma}^{\col}$ does not change the pdf of $\bm{X}$. 
While the Kronecker product $\bm{\Sigma}^{\col} \otimes \bm{\Sigma}^{\row}$ can be uniquely identified, the issue of trivial non-uniqueness of $\bm{\Sigma}^{\row}$ and $\bm{\Sigma}^{\col}$ is commonly solved by either fixing a diagonal entry, the determinant, or the norm of either matrix \citep{ros2016existence, soloveychik2016gaussian}. For simplicity, we assume that the first diagonal entry of $\bm{\Sigma}^{\col}$ is set to one.
This implies that the uniqueness of $\bm{\Sigma}^{\col} \otimes \bm{\Sigma}^{\row}$ is equivalent to the uniqueness of $\bm{\Sigma}^{\col}$ and $\bm{\Sigma}^{\row}$ with the identifiability constraint $\sigma^{\col}_{11} = 1$. The multiplicative constant for their estimators is also chosen such that $\hat{\sigma}^{\col}_{11} = 1$.

Instead of using Equations~\eqref{eq:matrix_mean_mle}-\eqref{eq:matrix_cov_col_mle} for mean and covariance estimation, it is also possible to consider the vectorized samples $\bm{x}_i = \vect(\bm{X}_i) \sim \NN(\bm{\mu},\bm{\Sigma}), i = 1, \dots, n$,  where $\bm{\mu} = \vect(\bm{M})$ and $\bm{\Sigma} = \bm{\Sigma}^{\col} \otimes \bm{\Sigma}^{\row}$ denote the mean and covariance matrix, respectively.
Then the maximum likelihood estimators for mean and covariance are given by 
\begin{align}
    \hat{\bm{\mu}} = \frac{1}{n} \sum_{i = 1}^n \bm{x}_i \quad \text{and} \quad \hat{\bm{\Sigma}} = \frac{1}{n} \sum_{i = 1}^n (\bm{x}_i -\hat{\bm{\mu}})(\bm{x}_i -\hat{\bm{\mu}})', \label{eq:multivariate_normal_mle}
\end{align}
respectively. The computation of the MLEs for matrix-variate samples based on Equations~\eqref{eq:matrix_mean_mle}-\eqref{eq:matrix_cov_col_mle} involves estimating $p(p+1)/2 + q(q+1)/2 + pq$ parameters instead of $pq(pq + 1)/2 + pq$ parameters for the vectorized observations according to Equation~\eqref{eq:multivariate_normal_mle}. This raises the question of whether fewer than $pq+1$ observations are sufficient for guaranteeing the existence and uniqueness of MLEs for i.i.d. samples from a matrix normal distribution.
This question was investigated in several papers, such as \cite{Dutilleul1999, lu2005likelihood, srivastava2008models, ros2016existence, soloveychik2016gaussian}. We rely on the latter for the most recent proof of those conditions. 
Note that it is not necessary to assume that the sample consists of i.i.d. observations. In fact, the i.i.d. assumption can be relaxed to allow for statistically dependent samples and it is not even necessary to require identical distribution \citep[Remarks 2 and 6]{soloveychik2016gaussian}. The critical condition for existence and uniqueness is that the sample contains at least $n \geq \lfloor \nicefrac{p}{q} + \nicefrac{q}{p} \rfloor + 2$ observations that are not collinear. 
The same holds for the existence and uniqueness of the MMCD estimators, where $n$ is replaced by $h$, and for properties like the breakdown point the assumptions could be relaxed only requiring that the sample is in general position, i.e., no subset of $r, 2 \leq r \leq \lfloor \nicefrac{p}{q} + \nicefrac{q}{p} \rfloor + 2$ samples lies on an $r-2$ dimensional subspace. However, the i.i.d. assumption is still necessary when we consider properties like consistency. 

The idea of the multivariate MCD estimator is as follows: Let $\bm{x}_i = (x_{i1},\ldots,x_{ip})' \in \R^p$ denote the $i$-th observation of a data set in the multivariate setting, where $i=1, \ldots, n$. The objective of the MCD estimator is to find the subset of $h$ out of $n$ observations whose sample covariance matrix has the lowest determinant, with $\nicefrac{n}{2} \leq h \leq n$ and $h > p$.
In total, there are $\binom{h}{n}$ possible $h$-subsets, and thus, a strategy needs to be used to tackle the optimization problem efficiently. This has been done with the so-called Fast-MCD algorithm~\citep{Rousseeuw1999}, which internally sorts the observations based on their Mahalanobis distances. For an observation $\bm{x}_i$ from a population with mean $\bm{\mu} \in \R^{p}$ and covariance $\bm{\Sigma} \in \pds(p)$ it is given by
\begin{align*}
    \md(\bm{x}_i,\bm{\mu},\bm{\Sigma}) = \sqrt{(\bm{x}_i - \bm{\mu})' \bm{\Sigma}^{-1} (\bm{x}_i - \bm{\mu})}.
\end{align*}

Since the Mahalanobis distance is vital for the computation of the MCD estimator, it will also be crucial in a matrix-variate extension, where it can be directly derived from the Mahalanobis distance of a vectorized matrix-variate observation $\bm{X}$ as
\begin{align*}
    \mmd^2(\bm{X}) = \mmd^2(\bm{X};\bm{\M},\bm{\Sigma}^{\row},\bm{\Sigma}^{\col})
    & = \md^2(\vect(\bm{X}))\\ 
    & =  \vect(\bm{X}-\bm{\M})'(\bm{\Omega}^{\col} \otimes \bm{\Omega}^{\row})\vect(\bm{X}-\bm{\M}) \\
    &= \sum_{i = 1}^p \sum_{j = 1}^p \sum_{k = 1}^q \sum_{l = 1}^q  (x_{ik}-m_{ik}) (x_{jl}-m_{jl}) \omega_{ij}^{\row} \omega_{kl}^{\col} \\ 
    &= \tr(\bm{\Omega}^{\col}(\bm{X}-\bm{\M})'\bm{\Omega}^{\row}(\bm{X}-\bm{\M})),
\end{align*}
where $m_{ij}$, $\omega_{ij}^{\row}$ and $\omega_{ij}^{\col}$ denote the elements $(i,j)$ of the matrices $\bm{\M}$, $\bm{\Omega}^{\row}$ and $\bm{\Omega}^{\col}$, respectively. 
If $\bm{X}$ has a matrix normal distribution, then the squared matrix Mahalanbois distance has a $\chi^2$ distribution with $pq$ degrees of freedom, $\mmd^2(\bm{X}) \sim \chi^2_{pq}$ \citep{gupta1999}. 

\section{Proofs of Section~\ref{section:MMCD}}
\label{supplement:mmcd_proofs}

\begin{proof}[Proof of Proposition~\ref{proposition:MCD_likelihood}]
In optimization problem~\eqref{eq:max_loglik} we want to maximize 
\begin{align} 
        \begin{split}
            l(\bm{w},\bm{\M}, \bm{\Sigma}^{\row}, \bm{\Sigma}^{\col}|\bm{\mathfrak{X}}) =&
            -\frac{1}{2} \sum_{i=1}^n w_i \Bigl( 
            p\ln(\det(\bm{\Sigma}^{\col})) + q\ln(\det(\bm{\Sigma}^{\row})) \Bigr) \\ 
            &- \frac{1}{2}\sum_{i=1}^n w_i \mmd^2(\bm{X}_i) - hpq\ln(2\pi) 
        \end{split}  \label{eq:MMCD_likelihood}
\end{align}
subject to $w_i \in \{0,1\}$ for all $i = 1,\dots,n$ and $\sum_{i = 1}^n w_i = h$. 
In Equation~\eqref{eq:MMCD_likelihood}, $\mmd^2(\bm{X}_i)$ is defined as in Equation~\eqref{eq:MD=MMD}. 

For any random $h$-subset $H$ (or equivalently the corresponding set of weights $\bm{w}$) the constrained MLEs for $\bm{\M}$, $\bm{\Sigma}^{\row}$, and $\bm{\Sigma}^{\col}$ of Equation~\eqref{eq:MMCD_likelihood} can be written as:
\begin{align*}
    \hat{\bm{\M}}_{H} &= \frac{1}{h} \sum_{i = i}^n w_i \bm{X}_i = \frac{1}{h}\sum_{i \in H} \bm{X}_i\\
    \hat{\bm{\Sigma}}^{\row}_{H} &= \frac{1}{qh} \sum_{i = i}^n w_i (\bm{X}_i - \hat{\bm{\M}}_{H})\hat{\bm{\Omega}}^{\col}_{H} (\bm{X}_i - \hat{\bm{\M}}_{H})' = \frac{1}{qh} \sum_{i \in H} (\bm{X}_i - \hat{\bm{\M}}_{H})\hat{\bm{\Omega}}^{\col}_{H} (\bm{X}_i - \hat{\bm{\M}}_{H})'\\
    \hat{\bm{\Sigma}}^{\col}_{H} &= \frac{1}{ph} \sum_{i = i}^n w_i  (\bm{X}_i - \hat{\bm{\M}}_{H})'\hat{\bm{\Omega}}^{\row}_{H} (\bm{X}_i - \hat{\bm{\M}}_{H}) = \frac{1}{ph} \sum_{i \in H} (\bm{X}_i - \hat{\bm{\M}}_{H})'\hat{\bm{\Omega}}^{\row}_{H} (\bm{X}_i - \hat{\bm{\M}}_{H})
\end{align*}
Using those estimators to compute the sum of the Mahalanobis distances $\mmd^2(\bm{X}_i)$  in Equation~\eqref{eq:MMCD_likelihood} we obtain
\begin{align*}
    \sum_{i=1}^n w_i \mmd^2(\bm{X}_i) =& 
    \sum_{i \in H} \tr\left(\hat{\bm{\Omega}}^{\col}_{H}(\bm{X}_i - \hat{\bm{\M}}_{H})' \hat{\bm{\Omega}}^{\row}_{H} (\bm{X}_i - \hat{\bm{\M}}_{H})\right) \\
    =& \sum_{i \in H} \tr\left((\bm{X}_i -\hat{\bm{\M}}_{H})\hat{\bm{\Omega}}^{\col}_{H} (\bm{X}_i -\hat{\bm{\M}}_{H})' \hat{\bm{\Omega}}^{\row}_{H} \right) \\
    =& \tr\left(\sum_{i \in H}((\bm{X}_i -\hat{\bm{\M}}_{H})\hat{\bm{\Omega}}^{\col}_{H} (\bm{X}_i -\hat{\bm{\M}}_{H})')\hat{\bm{\Omega}}^{\row}_{H}\right) \\
    =& \tr\left(qh \hat{\bm{\Sigma}}^{\row}_{H} \hat{\bm{\Omega}}^{\row}_{H} \right) 
    = hpq .
\end{align*}

Thus, the terms in the second row of Equation~\eqref{eq:MMCD_likelihood} are all constant, and it is sufficient to maximize only the term in the first row, which contains the (negative) determinant of Equation~\eqref{eq:MMCD_cov_determinant}.
\end{proof}

\subsection{Properties of MMCD estimators}
\label{supplement:mmcd_properties_proofs}


\begin{proof}[Proof of Lemma~\ref{lemma:kronecker_affine_transformation}]
      
    Ad (a): We show that the MMCD estimators are matrix affine equivariant. Let us consider the objective of the MMCD for the transformed samples, which is to minimize 
    \begin{align*}
        \det(\hat{\bm{\Sigma}}^{\col}_{\bm{\mathfrak{Z}}_{H}} \otimes \hat{\bm{\Sigma}}^{\row}_{\bm{\mathfrak{Z}}_{H}}) &= \det\Bigl((\bm{B}'\hat{\bm{\Sigma}}^{\col}_{\bm{\mathfrak{X}}_{H}}\bm{B}) \otimes (\bm{A}\hat{\bm{\Sigma}}^{\row}_{\bm{\mathfrak{X}}_{H}}\bm{A}')\Bigr)\\
    &= \Bigl[\det(\bm{B}'\hat{\bm{\Sigma}}^{\col}_{\bm{\mathfrak{X}}_{H}}\bm{B})\Bigr]^p \Bigl[\det (\bm{A}\hat{\bm{\Sigma}}^{\row}_{\bm{\mathfrak{X}}_{H}}\bm{A}')\Bigr]^q\\
    &= \Bigl[\det(\bm{B}')\det(\hat{\bm{\Sigma}}^{\col}_{\bm{\mathfrak{X}}_{H}})\det(\bm{B})\Bigr]^p \Bigl[\det(\bm{A})\det(\hat{\bm{\Sigma}}^{\row}_{\bm{\mathfrak{X}}_{H}})\det(\bm{A}')\Bigr]^q\\
    &= 4\det(\bm{B})^p \det(\bm{A})^q \det(\hat{\bm{\Sigma}}^{\col}_{\bm{\mathfrak{X}}_{H}})^p \det(\hat{\bm{\Sigma}}^{\row}_{\bm{\mathfrak{X}}_{H}})^q.
    \end{align*}
    Since $4\det(\bm{B})^p \det(\bm{A})^q$ is constant, the objective does not change, and we obtain the same $h$-subset. Since the MMCD estimators correspond to the trimmed MLEs and the objective is not affected by the transformation, the matrix affine equivariance of the MMCD estimators follows from the matrix affine equivariance of the MLEs. 

    Ad (b): 
    Suppose that $(\hat{\bm{\M}}_{\bm{\mathfrak{Z}}}, \hat{\bm{\Sigma}}^{\row}_{\bm{\mathfrak{Z}}}, \hat{\bm{\Sigma}}^{\col}_{\bm{\mathfrak{Z}}})$ are matrix affine equivariant estimators of location and covariance of the transformed sample $\bm{\mathfrak{Z}}$, then 
    \begin{align*}
    &\mmd^2(\bm{Z}_i;\hat{\bm{\M}}_{\bm{\mathfrak{Z}}}, \hat{\bm{\Sigma}}^{\row}_{\bm{\mathfrak{Z}}}, \hat{\bm{\Sigma}}^{\col}_{\bm{\mathfrak{Z}}})\\
    =& \tr(\hat{\bm{\Omega}}^{\col}_{\bm{\mathfrak{Z}}}(\bm{Z}_i-\hat{\bm{\M}}_{\bm{\mathfrak{Z}}})'\hat{\bm{\Omega}}^{\row}_{\bm{\mathfrak{Z}}}(\bm{Z}_i-\hat{\bm{\M}}_{\bm{\mathfrak{Z}}})) \\
    =& \tr\Big(\big(\bm{B}^{-1}\hat{\bm{\Omega}}^{\col}_{\bm{\mathfrak{X}}}(\bm{B}')^{-1}
    (\bm{A}\bm{X}_i\bm{B} + \bm{C}-(\bm{A}\hat{\bm{\M}}_{\bm{\mathfrak{X}}}\bm{B}+ \bm{C}))'\big) \\
    &\qquad\big((\bm{A}')^{-1}\hat{\bm{\Omega}}^{\row}_{\bm{\mathfrak{X}}}\bm{A}^{-1} (\bm{A}\bm{X}_i\bm{B} + \bm{C} -(\bm{A}\hat{\bm{\M}}_{\bm{\mathfrak{X}}}\bm{B} + \bm{C}))\big)\Big) \\
    =& \tr(\bm{B}^{-1}\hat{\bm{\Omega}}^{\col}_{\bm{\mathfrak{X}}}(\bm{B}')^{-1} \bm{B}'(\bm{X}_i-\hat{\bm{\M}}_{\bm{\mathfrak{X}}})'\bm{A}' (\bm{A}')^{-1}\hat{\bm{\Omega}}^{\row}_{\bm{\mathfrak{X}}}\bm{A}^{-1} \bm{A}(\bm{X}_i-\hat{\bm{\M}}_{\bm{\mathfrak{X}}})\bm{B})\\
    =& \tr(\hat{\bm{\Omega}}^{\col}_{\bm{\mathfrak{X}}}(\bm{X}_i-\hat{\bm{\M}}_{\bm{\mathfrak{X}}})'\hat{\bm{\Omega}}^{\row}_{\bm{\mathfrak{X}}}(\bm{X}_i-\hat{\bm{\M}}_{\bm{\mathfrak{X}}})) = \mmd^2(\bm{X}_i;\hat{\bm{\M}}_{\bm{\mathfrak{X}}}, \hat{\bm{\Sigma}}^{\row}_{\bm{\mathfrak{X}}}, \hat{\bm{\Sigma}}^{\col}_{\bm{\mathfrak{X}}}).
\end{align*}
\end{proof}


The proofs of Theorems~\ref{theorem:MMCD_breakdown_point} and \ref{theorem:MMCD_reweighted_breakdown_point} require some definitions and properties related to the vector space of matrices, which are introduced before the proofs of the theorems.
Since all matrices of a fixed size form a vector space, objects such as ellipsoids or a simplex that are defined on the more common vector spaces are also defined here. Let 
\begin{align}
    E(\bm{T}, \bm{U}, \bm{V}) = \{\bm{X}: \tr(\bm{V}^{-1}(\bm{X} - \bm{T})'\bm{U}^{-1}(\bm{X} - \bm{T})) \leq 1\} \label{eq:matrix_ellipsoid}
\end{align} 
be the ellipsoid containing the matrices $\bm{X}  \in \R^{p \times q}$ with $\mmd^2(\bm{X};\bm{T}, \bm{U}, \bm{V}) \leq 1$, where $\bm{T} \in \R^{p \times q}$, $\bm{U} \in \pds(p)$ and $\bm{V} \in \pds(q)$. 
The volume of this ellipsoid is given by 
\begin{align}\label{eq:det(E)}
    \vol(E(\bm{T}, \bm{U}, \bm{V})) = \underbrace{\frac{\pi^{\nicefrac{pq}{2}}}{\Gamma(\nicefrac{pq}{2}+1)}}_{=:\beta_{pq}}\prod_{i = 1}^p\prod_{j = 1}^q\sqrt{\lambda_i(\bm{U})\lambda_j(\bm{V})}  = \beta_{pq} \underbrace{\det(\bm{U})^{\nicefrac{q}{2}}\det(\bm{V})^{\nicefrac{p}{2}}}_{=:\det(E(\bm{T}, \bm{U}, \bm{V}))},
\end{align}
where $\Gamma$ is the gamma function, $0 < \lambda_p(\bm{U}) \leq \ldots \leq \lambda_1(\bm{U})$ and $0 < \lambda_q(\bm{V}) \leq \ldots \leq \lambda_1(\bm{V})$ are the eigenvalues of $\bm{U}$ and $\bm{V}$, respectively. 
Moreover, the axes have lengths $\sqrt{\lambda_i(\bm{U})\lambda_j(\bm{V})}$.

Let $\bm{A}$ be a symmetric nonnegative definite $p \times p$ matrix, then 
\begin{align} \label{eq:eigenvalue_inf_sup}
    \lambda_1(\bm{A}) = \sup_{\bm{z} \in \R^{p}}\frac{\bm{z}'\bm{A}\bm{z}}{\bm{z}'\bm{z}} \quad \text{and} \quad \lambda_n(\bm{A}) = \inf_{\bm{z} \in \R^{p}}\frac{\bm{z}'\bm{A}\bm{z}}{\bm{z}'\bm{z}}.
\end{align}
Consider another symmetric nonnegative definite $p \times p$ matrix $\bm{B}$, then using Equation~\eqref{eq:eigenvalue_inf_sup} we get that 
\begin{align} \label{eq:sum_eigenvalue_inequality}
    \lambda_1(\bm{A} + \bm{B}) \leq \lambda_1(\bm{A}) + \lambda_1(\bm{B}) \quad \text{and} \quad \lambda_p(\bm{A} + \bm{B}) \geq \lambda_p(\bm{A}) + \lambda_p(\bm{B}).
\end{align}

If $\bm{A} \in \pds(p)$ with eigenvalues $0 < \lambda_p(\bm{A}) \leq \ldots \leq \lambda_1(\bm{A})$ then the eigenvalues of $\bm{A}^{-1}$ are the reciprocals of the eigenvalues of $\bm{A}$, i.e. $\lambda_i(\bm{A}^{-1}) = \lambda^{-1}_i(\bm{A})$. Hence, we have that 
\begin{align*}
    \frac{1}{\lambda_1(\bm{A})} = \inf_{\bm{z} \in \R^{p}}\frac{\bm{z}'\bm{A}^{-1}\bm{z}}{\bm{z}'\bm{z}},
\end{align*}
which implies that for any $\bm{x} \in \R^p$ 
\begin{align}
    \frac{1}{\lambda_1(\bm{A})} \leq \frac{\bm{x}'\bm{A}^{-1}\bm{x}} {\bm{x}'\bm{x}} \Leftrightarrow \bm{x}'\bm{x} \leq \bm{x}'\bm{A}^{-1}\bm{x} \lambda_1(\bm{A}). \label{eq:eigenvalue_inequality}
\end{align}

Suppose $\bm{A} \in \pds(p)$, $\bm{B} \in \pds(q)$ and let $\lambda(\bm{A})$ be an eigenvalue of $\bm{A}$ with corresponding eigenvector $\bm{v}(\bm{A})$, and $\lambda(\bm{B})$ an eigenvalue of $\bm{B}$ with corresponding eigenvector $\bm{v}(\bm{B})$. Then $\lambda(\bm{A})\lambda(\bm{B})$ is an eigenvalue of $\bm{B} \otimes \bm{A}$ with corresponding eigenvector $\bm{v}(\bm{B}) \otimes \bm{v}(\bm{A})$.
We denote the sequence of eigenvalues of $\bm{A}$ and $\bm{B}$ as $0 < \lambda_p(\bm{A}) \leq \ldots \leq \lambda_1(\bm{A})$ and $0 < \lambda_q(\bm{b}) \leq \ldots \leq \lambda_1(\bm{b})$, respectively. It follows that the smallest eigenvalue $\lambda_{pq}(\bm{A},\bm{B}) = \lambda_p(\bm{A})\lambda_q(\bm{B})$ and the largest eigenvalue $\lambda_{1}(\bm{A},\bm{B}) = \lambda_1(\bm{A})\lambda_1(\bm{B})$. Moreover note that for $\bm{Z}\in \R^{p \times q}$
\begin{align*}
    \vect(\bm{Z})'(\bm{B} \otimes \bm{A})\vect(\bm{Z}) = \tr(\bm{B}\bm{Z}'\bm{A}\bm{Z}),
\end{align*}
as in Equation~\eqref{eq:MD=MMD}, which implies that 
\begin{align*}
    \lambda_{pq}(\bm{A},\bm{B}) = \inf_{\bm{Z}\in \R^{p \times q}} \frac{\tr(\bm{B}\bm{Z}'\bm{A}\bm{Z})}{\tr(\bm{Z}'\bm{Z})} \quad \text{and} \quad \lambda_{1}(\bm{A},\bm{B}) = \sup_{\bm{Z}\in \R^{p \times q}} \frac{\tr(\bm{B}\bm{Z}'\bm{A}\bm{Z})}{\tr(\bm{Z}'\bm{Z})}.
\end{align*}
This leads us to the matrix-variate version of Equation~\eqref{eq:eigenvalue_inequality}, where for any matrix $\bm{X} \in \R^{p \times q}$
\begin{align}
    \norm{\bm{X}}_F^2 = \tr(\bm{X}'\bm{X}) \leq \tr(\bm{B}^{-1}\bm{X}'\bm{A}^{-1}\bm{X})\lambda_{1}(\bm{A},\bm{B}) = \tr(\bm{B}^{-1}\bm{X}'\bm{A}^{-1}\bm{X})\lambda_{1}(\bm{A}) \lambda_{1}(\bm{B})\label{eq:matrix_eigenvalue_inequality}
\end{align}

\begin{lemma} \label{lemma:simplex}
    Take $p,q \in \N$, $d = \lfloor \nicefrac{p}{q} + \nicefrac{q}{p} \rfloor$, $d + 2 \leq s \leq pq$, and matrices $\bm{X}_1,\ldots,\bm{X}_{s} \in \R^{p \times q}$ that are in general position, i.e., no subset of $r, 2 \leq r \leq s$ samples lies on an $r-2$ dimensional subspace. For an ellipsoid $E(\bm{T}, \bm{U}, \bm{V})$ as in Equation~\eqref{eq:matrix_ellipsoid}, containing the matrices $\bm{X}_1,\ldots,\bm{X}_{s}$, it holds that for every $C > 0$ there exists a constant $\alpha:= \alpha(\bm{X}_1,\ldots,\bm{X}_{s}) > 0$ only depending on $\bm{X}_1,\ldots,\bm{X}_{s}$ such that $\norm{\bm{T}}_F = \sqrt{\tr(\bm{T}'\bm{T})} > \alpha$ implies $\det(E(\bm{T}, \bm{U}, \bm{V})) > C$, i.e.,
    \begin{align*}
        \forall C > 0 ~ \exists \alpha > 0 : \norm{\bm{T}}_F > \alpha \implies \det(E(\bm{T}, \bm{U}, \bm{V})) > C.
    \end{align*}
\end{lemma}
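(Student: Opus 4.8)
The plan is to establish a lower bound on $\det(E(\bm{T},\bm{U},\bm{V})) = \det(\bm{U})^{\nicefrac{q}{2}}\det(\bm{V})^{\nicefrac{p}{2}}$ that diverges as $\norm{\bm{T}}_F$ grows, so that for any prescribed $C$ one simply takes $\alpha$ large enough that the bound exceeds $C$. The entry point is the matrix eigenvalue inequality~\eqref{eq:matrix_eigenvalue_inequality}. Since every $\bm{X}_i$ lies in the ellipsoid, $\mmd^2(\bm{X}_i;\bm{T},\bm{U},\bm{V}) \leq 1$, and applying \eqref{eq:matrix_eigenvalue_inequality} to $\bm{X}_i - \bm{T}$ gives $\norm{\bm{X}_i - \bm{T}}_F^2 \leq \lambda_1(\bm{U})\lambda_1(\bm{V})$. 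By the reverse triangle inequality $\norm{\bm{X}_i-\bm{T}}_F \geq \norm{\bm{T}}_F - \norm{\bm{X}_i}_F$, so the largest eigenvalue of $\bm{V}\otimes\bm{U}$ obeys $\lambda_1(\bm{U})\lambda_1(\bm{V}) \geq (\norm{\bm{T}}_F - \max_i\norm{\bm{X}_i}_F)^2$, which tends to infinity with $\norm{\bm{T}}_F$. This pins down the single largest semi-axis of the ellipsoid.

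The remaining work is to keep the other semi-axes from collapsing so quickly that the product $\det(\bm{U})^q\det(\bm{V})^p = \prod_{i,j}\lambda_i(\bm{U})\lambda_j(\bm{V})$ stays bounded. Here I would exploit general position. Because no $r \leq s$ of the points lie on an $(r-2)$-dimensional subspace, the vectorized points are affinely independent, their affine hull is an $(s-1)$-dimensional subspace of $\R^{pq}$, and the simplex they span has strictly positive $(s-1)$-dimensional volume $V_0 > 0$ depending only on the data. Writing $\bm{\Sigma} = \bm{V}\otimes\bm{U}$ and $\bm{y}_i = \bm{\Sigma}^{-\nicefrac{1}{2}}(\vect(\bm{X}_i) - \vect(\bm{T}))$, containment reads exactly $\norm{\bm{y}_i} \leq 1$, so the image simplex lies in the unit ball of $\R^{pq}$ and its $(s-1)$-volume is at most the maximal volume of an $(s-1)$-simplex inscribed in the unit ball, a constant depending only on $s$. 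Comparing the two volumes produces a center-independent lower bound on the determinant of $\bm{\Sigma}$ compressed to the data subspace.

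The genuinely matrix-variate difficulty is that the data span only $s-1 \leq pq-1$ directions, whereas $\det(\bm{\Sigma})$ is a product over all $pq$ eigenvalue pairs; when $s$ is as small as $d+2$ there are many directions transverse to the data subspace in which an \emph{unstructured} covariance could shrink and defeat the bound. This is where the Kronecker structure must be used: the eigenvalues of $\bm{\Sigma}$ are the products $\lambda_i(\bm{U})\lambda_j(\bm{V})$, hence not freely chosen, and a lower bound obtained on a subspace of dimension $d+2$ in general position propagates to a bound on $\det(\bm{U})^{\nicefrac{q}{2}}\det(\bm{V})^{\nicefrac{p}{2}}$ — precisely the threshold $h \geq \lfloor \nicefrac{p}{q}+\nicefrac{q}{p}\rfloor + 2$ under which the matrix-normal MLE exists and is positive definite~\citep{soloveychik2016gaussian}. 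I would then combine the divergence of $\lambda_1(\bm{U})\lambda_1(\bm{V})$ from the first step with the center-independent bounds from the second, via the eigenvalue-product identity above together with \eqref{eq:sum_eigenvalue_inequality}–\eqref{eq:matrix_eigenvalue_inequality}, to force $\det(E(\bm{T},\bm{U},\bm{V})) \to \infty$.

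I expect the main obstacle to be exactly this last bookkeeping: showing rigorously that control of the covariance on the $(d+1)$-dimensional data subspace, plus the one large axis, forces all $pq$ Kronecker eigenvalue-products to contribute enough that the full determinant diverges. A clean way to organize it is a compactness argument: assuming $\det(E)$ stayed bounded while $\norm{\bm{T}}_F \to \infty$, one normalizes using the scale ambiguity $(\bm{U},\bm{V}) \mapsto (\kappa\bm{U}, \kappa^{-1}\bm{V})$ (which leaves both the ellipsoid and $\det(E)$ unchanged), passes to a subsequence along which the normalized eigenvalues of $\bm{U}$ and $\bm{V}$ converge, and shows that the limiting degenerate ellipsoid would confine the $s$ points to a configuration violating general position, a contradiction.
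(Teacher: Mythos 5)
Your opening two steps are exactly the paper's: containment plus Equation~\eqref{eq:matrix_eigenvalue_inequality} gives $\norm{\bm{X}_i-\bm{T}}_F^2\leq\lambda_1(\bm{U})\lambda_1(\bm{V})$, hence $\lambda_1(\bm{U})\lambda_1(\bm{V})\geq(\norm{\bm{T}}_F-\max_i\norm{\bm{X}_i}_F)^2\to\infty$, and the containment of the $(s-1)$-simplex supplies a data-dependent lower bound on the remaining axes, after which the paper takes $\alpha=C/a^{pq-1}$ and multiplies. So the route is the same; the question is whether the middle step is actually delivered.

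It is not, and this is a genuine gap rather than mere bookkeeping. Your volume comparison (simplex of $(s-1)$-volume $V_0$ mapped by $\bm{\Sigma}^{-1/2}$ into the unit ball) bounds from below only the product of the $s-1$ \emph{largest} eigenvalues of $\bm{V}\otimes\bm{U}$, by Poincar\'e separation. Combining that with a diverging $\lambda_1(\bm{U})\lambda_1(\bm{V})$ says nothing about $\det(\bm{U})^{q}\det(\bm{V})^{p}=\prod_{i,j}\lambda_i(\bm{U})\lambda_j(\bm{V})$, because the remaining $pq-s$ factors are free to shrink; you name this as the main obstacle but defer it. The compactness argument you propose in its place does not close it: if $\det(E)$ stayed bounded while $\lambda_1(\bm{U})\lambda_1(\bm{V})\to\infty$, the collapsing direction is the separable direction associated with $\lambda_p(\bm{U})\lambda_q(\bm{V})$, and in the limit one only obtains unit vectors $\bm{u}\in\R^p$, $\bm{v}\in\R^q$ with $\bm{u}'(\bm{X}_i-\bm{X}_1)\bm{v}=0$ for $i=2,\dots,s$. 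These are $s-1$ bilinear conditions on a pair of directions carrying $p+q-2$ projective degrees of freedom, so whenever $s-1\leq p+q-2$ (e.g.\ $s=d+2=4$ with $p=q\geq 3$) such a pair exists for configurations that still satisfy the stated affine general-position hypothesis; no contradiction with general position follows. The needed conclusion — that \emph{every} product $\lambda_i(\bm{U})\lambda_j(\bm{V})$, including those transverse to the data's affine hull, is bounded below by a constant depending only on the data — is precisely what the paper extracts directly from the simplex containment (asserting a uniform lower bound $a$ on the axes and then raising it to the power $pq-1$), and it is the one step your proposal neither proves nor replaces with a working alternative.
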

\begin{proof}
    The samples $\bm{X}_1,\ldots,\bm{X}_{s}$ are in general position, which implies that they span a nonempty $s-1$ simplex. Since $E(\bm{T}, \bm{U}, \bm{V})$ contains those samples, it also contains the simplex spanned by those matrices. This implies that there exists a constant $a > 0$ only depending on $\bm{X}_1,\ldots,\bm{X}_{s}$, such that the length of $k$, $s-1 \leq k \leq pq$, of the $pq$ axes of the ellipsoid $E(\bm{T}, \bm{U}, \bm{V})$ is at least $a$, i.e., there are $k$ out of $pq$ indices $(i,j), 1 \leq i \leq p, 1 \leq j \leq q$ such that  
    \begin{align}
        \sqrt{\lambda_i(\bm{U})\lambda_j(\bm{V})} > a \label{eq:lemma_eigenvalues}.
    \end{align}
    In Equation~\eqref{eq:lemma_eigenvalues}, $\lambda_i(\bm{U}), i \in \{1,\dots,p\}$, are the eigenvalues of $\bm{U}$, and $\lambda_j(\bm{V}), j\in \{1,\dots,q\}$,  are the eigenvalues of $\bm{V}$. 
    For any matrix $\bm{X}$ contained in $E(\bm{T}, \bm{U}, \bm{V})$, Equations~\eqref{eq:matrix_ellipsoid} and \eqref{eq:matrix_eigenvalue_inequality} imply that
    \begin{align}
    \begin{split}
        \norm{\bm{X} - \bm{T}}_F^2
        &= \tr((\bm{X} - \bm{T})'(\bm{X} - \bm{T})) \\
        &\leq \tr(\bm{V}^{-1}(\bm{X} - \bm{T})'\bm{U}^{-1}(\bm{X} - \bm{T})) \lambda_1(\bm{U})\lambda_1(\bm{V})\\
        &\leq \lambda_1(\bm{U})\lambda_1(\bm{V}). \label{eq:lemma_norm}
    \end{split}        
    \end{align}
    Without loss of generality, we assume that the matrix of all zeros $\bm{0} \in \R^{p \times q}$ is contained in the ellipsoid $E(\bm{T}, \bm{U}, \bm{V})$, then Equation~\eqref{eq:lemma_norm} implies that $\norm{\bm{T}}_F^2 \leq \lambda_1(\bm{U})\lambda_1(\bm{V})$. Take $\alpha = C/(a^{pq - 1})$, then we have that 
    \begin{align*}
        \frac{C}{a^{pq - 1}} < \norm{\bm{T}}_F \leq \sqrt{\lambda_1(\bm{U})\lambda_1(\bm{V})} \Leftrightarrow C < \sqrt{\lambda_1(\bm{U})\lambda_1(\bm{V})}a^{pq - 1}
    \end{align*}
    and from Equaiton~\eqref{eq:lemma_eigenvalues} it follows that 
    \begin{align*}
    \begin{split}
        \det(E(\bm{T}, \bm{U}, \bm{V})) :&= \det(\bm{U})^{\nicefrac{q}{2}}\det(\bm{V})^{\nicefrac{p}{2}} \\
        &= \prod_{i = 1}^p\prod_{j = 1}^q\sqrt{\lambda_i(\bm{U})\lambda_j(\bm{V})} \\
        &> \sqrt{\lambda_1(\bm{U})\lambda_1(\bm{V})}a^{pq-1} > C.
    \end{split}
    \end{align*}
\end{proof}

\begin{proof}[\hypertarget{proof:MMCD_breakdown_point}{\textbf{Proof of Theorem~\ref{theorem:MMCD_breakdown_point}}}]
    We show that the breakdown points of the MMCD estimators of location and covariance defined in Equations~\eqref{eq:breakdown_point_location} and \eqref{eq:breakdown_point_covariance}, respectively, are both $\nicefrac{m}{n}$, with $m = \lfloor\min(n - h + 1,h - (d + 1))\rfloor$, $d = \lfloor \nicefrac{p}{q} + \nicefrac{q}{p} \rfloor$.    
    First, we prove that $\varepsilon^{\ast}(\hat{\bm{\M}}, \bm{\mathfrak{X}}) = \varepsilon^{\ast}(\hat{\bm{\Sigma}}^{\row}, \hat{\bm{\Sigma}}^{\col}, \bm{\mathfrak{X}}) \geq \nicefrac{m}{n}$. Let $\bm{\mathfrak{Y}}$ be the sample obtained by replacing at most $m -1$ matrices of $\bm{\mathfrak{X}}$ by arbitrary $p \times q$ matrices. 
    Since $n - (m-1) \geq h$, $\bm{\mathfrak{Y}}$ contains at least $h$ matrices of the orginial sample $\bm{\mathfrak{X}}$ and because $m - 1 \leq h - (d + 1)- 1$, every subset of size $h$ of $\bm{\mathfrak{Y}}$ includes at least $d + 2$ matrices of the original sample $\bm{\mathfrak{X}}$. Hence, the MMCD estimators can almost surely be computed for any $h$-subset of $\bm{\mathfrak{Y}}$. 
    Let us consider three ellipsoids:
    \begin{itemize}
        \item Let $E_{\max} = E(\bm{0},c_{\max} \bm{I}, \bm{I})$ denote the smallest sphere that contains all samples in $\bm{\mathfrak{X}}$, where $c_{\max}$ is chosen accordingly.
        \item Let $E_{h} = E(\bm{0},c_{h} \bm{I}, \bm{I})$ denote the smallest sphere that contains the $h$ samples of $\bm{\mathfrak{X}}$ that are also in $\bm{\mathfrak{Y}}$, where $c_{h}$ is chosen accordingly.
        \item Let $E_{\mmcd} = E(\hat{\bm{\M}}_{\bm{\mathfrak{Y}}}, \hat{\bm{\Sigma}}^{\row}_{\bm{\mathfrak{Y}}}, \hat{\bm{\Sigma}}^{\col}_{\bm{\mathfrak{Y}}})$ denote the MMCD ellipsoid.
    \end{itemize}
    It follows that $\det(E_{\mmcd}) \leq \det(E_{h}) \leq \det(E_{\max}) =: \alpha$, where for an ellipsoid $E=E(\bm{T},\bm{U},\bm{V})$, $\det(E)$ is defined in \eqref{eq:det(E)}. Note that $\bm{\mathfrak{X}}$ is a collection of random samples from a continuous distribution and therefore it is in general position almost surely. 
    Further, $E_{\mmcd}$ covers at least $h$ samples, and those include at least $d + 2$ samples of $\bm{\mathfrak{X}}$, which span a nonempty $d+1$ simplex. Lemma~\ref{lemma:simplex} shows that there exists a constant $\alpha > 0$ that only depends on those $d + 2$ samples such that, if $\norm{\hat{\bm{\M}}_{\bm{\mathfrak{Y}}}}_{F} > C$ it would imply $\det(E_{\mmcd}) > \alpha$. As shown above, this is not possible, hence $\norm{\hat{\bm{\M}}_{\bm{\mathfrak{Y}}}}_{F} \leq C$. 
    
    Similarly, since $\bm{\mathfrak{Y}}$ contains at least $d+2$ matrices of the original sample $\bm{\mathfrak{X}}$, the MMCD estimators almost surely yield positive definite covariance estimates $\hat{\bm{\Sigma}}^{\row}_{\bm{\mathfrak{Y}}}$ and $\hat{\bm{\Sigma}}^{\col}_{\bm{\mathfrak{Y}}}$.
    More specifically, let $\bm{\mathfrak{X}}_T$, $T \subseteq H$ be the subset of the at least $d + 2$ matrices of the original sample that are in $\bm{\mathfrak{Y}}$. Since $\abs{T} \geq d + 2 = \lfloor p/q+q/p\rfloor+2$ the MLE estimators $(\hat{\bm{M}}_{\bm{\mathfrak{X}}_T}, \hat{\bm{\Sigma}}_{\bm{\mathfrak{X}}_T}^{\row}, \hat{\bm{\Sigma}}_{\bm{\mathfrak{X}}_T}^{\row})$ of this subsample are almost surely positive definite. Let $E_T = E(\hat{\bm{M}}_{\bm{\mathfrak{X}}_T}, \hat{\bm{\Sigma}}_{\bm{\mathfrak{X}}_T}^{\row}, \hat{\bm{\Sigma}}_{\bm{\mathfrak{X}}_T}^{\row})$ denote the corresponding ellipsoid which is the smallest ellipsoid, of the type $E=E(\textbf{T},\textbf{U},\textbf{V})$ as in Equation~(\ref{eq:matrix_ellipsoid}), containing the samples $\bm{\mathfrak{X}}_T$ as one can think of it as the MMCD ellipsoid for those $\abs{T} \geq d+2$ samples with $H = T$. 
        This further implies that the volume of the corresponding ellipsoid $E_T$ is bounded from below by a constant only depending on $\bm{\mathfrak{X}}$, i.e. 
    $\det(E_T)\geq v>0$. As $E_{\mmcd}$ is also an ellipsoid containing the samples  $\bm{\mathfrak{X}}_T$, $\det(E_{\mmcd})\geq \det(E_T)\geq v>0$. Moreover, it also means that there exists a constant $k$ depending only on $\bm{\mathfrak{X}}$, such that $E_T \subseteq kE_{\mmcd}$,
    implying that there exists a constant $\gamma>0$ depending only on $\bm{\mathfrak{X}}$, such that $\lambda_i(\hat{\bm{\Sigma}}^{\row}_{\bm{\mathfrak{Y}}}) \lambda_j(\hat{\bm{\Sigma}}^{\col}_{\bm{\mathfrak{Y}}}) > \gamma, 1 \leq i \leq p, 1 \leq j \leq q$. Especially, 
    $\lambda_p(\hat{\bm{\Sigma}}^{\row
    }_{\bm{\mathfrak{Y}}}) \lambda_q(\hat{\bm{\Sigma}}^{\col}_{\bm{\mathfrak{Y}}}) > \gamma$. 
    Since also $\det(E_{\mmcd}) \leq \alpha$ there exists a constant $\delta > 0$, depending only on $\bm{\mathfrak{X}}$ such that $\lambda_i(\hat{\bm{\Sigma}}^{\row}_{\bm{\mathfrak{Y}}})\lambda_j(\hat{\bm{\Sigma}}^{\col}_{\bm{\mathfrak{Y}}}) < \delta, 1 \leq i \leq p, 1 \leq j \leq q$.   
    
    Next we show that $\varepsilon^{\ast}(\hat{\bm{\M}}, \bm{\mathfrak{X}}) = \varepsilon^{\ast}(\hat{\bm{\Sigma}}^{\row}, \hat{\bm{\Sigma}}^{\col}, \bm{\mathfrak{X}}) \leq \nicefrac{m}{n}$. If $m = n - h + 1$, we replace $m = n - h + 1 $ matrices of $\bm{\mathfrak{X}}$ to obtain $\bm{\mathfrak{Y}}$, then $n - m = h - 1$, implying that every subset of $h$ samples of $\bm{\mathfrak{Y}}$ contains at least one conaminated sample. Hence, $E_{\mmcd} = E(\hat{\bm{\M}}_{\bm{\mathfrak{Y}}}, \hat{\bm{\Sigma}}^{\row}_{\bm{\mathfrak{Y}}}, \hat{\bm{\Sigma}}^{\col}_{\bm{\mathfrak{Y}}})$ also includes at least one contaminated sample. Let $\norm{\bm{X}}_F \rightarrow \infty$ for all contaminated samples $\bm{X}$, then at least one eigenvalue of $E_{\mmcd}$ explodes and the MMCD location and covariance estimators break down. 
    Finally, consider the case where $m = h - (d+1)$. To construct $\bm{\mathfrak{Y}}$, take any $d + 1$ samples of $\bm{\mathfrak{X}}$ and consider the $d$ dimensional hyperplane $L$  they determine. Replace $h - (d+1)$ samples that are not in $L$ and replace them with matrices on $L$. Then $L$ contains $h$ points of $\bm{\mathfrak{Y}}$ and the ellipsoid covering those points has volume zero and hence determinant zero. 
    Since $\bm{\mathfrak{X}}$ is in general position, we can construct $\bm{\mathfrak{Y}}$ such that no other lower dimensional hyperplane contains $h$ points of $\bm{\mathfrak{Y}}$. Hence, $\hat{\bm{\M}}_{\bm{\mathfrak{Y}}}$ lies on $L$ and $E_{\mmcd} = E(\hat{\bm{\M}}_{\bm{\mathfrak{Y}}}, \hat{\bm{\Sigma}}^{\row}_{\bm{\mathfrak{Y}}}, \hat{\bm{\Sigma}}^{\col}_{\bm{\mathfrak{Y}}})$ has zero determinant. This implies that at least one eigenvalue is zero, hence the MMCD location and covariance estimators break down. 
\end{proof}

\begin{proof}[\hypertarget{proof:consistency}{\textbf{Proof of Theorem~\ref{theorem:consistency}}}]
    Let $(\bm{X}_1,\dots,\bm{X}_n)$ be a sample of matrix-variate observations and $(\bm{x}_1,\dots,\bm{x}_n)$, $\bm{x}_i=\mathrm{vec}(\bm{X}_i)$, $i=1,\dots n$ its vectorized form. The MCD estimator can also be found as a solution to the following maximization problem: 
    \begin{align*}
        \max_{\bm{w}, \hat{\bm{\mu}}, \hat{\bm{\Sigma}}} l(\bm{w}, \hat{\bm{\mu}}, \hat{\bm{\Sigma}}|(\bm{x}_1,\dots,\bm{x}_n)) = 
        \max_{\bm{w}, \hat{\bm{\mu}}, \hat{\bm{\Sigma}}} 
        -\frac{1}{2} \sum_{i = 1}^n w_i (\ln(\det(\hat{\bm{\Sigma}})) + pq \ln(2\pi) + \md^2(\bm{x}_i,\hat{\bm{\mu}}, \hat{\bm{\Sigma}}))
    \end{align*}
    subject to $w_1,\dots,w_n \in \{0,1\}$, $\sum_{i = 1}^n w_i = h$, $\hat{\bm{\mu}} \in \R^{pq}$, $\hat{\bm{\Sigma}} \in \pds(pq)$; see \cite{Raymaekers2022} for more insight. 
    Similarly, the MMCD estimator is a solution to the following  maximization problem: 
    \begin{align*}
        &\max_{\bm{w}, \hat{\bm{\M}}, \hat{\bm{\Sigma}}^{\row}, \hat{\bm{\Sigma}}^{\col}} l(\bm{w},\hat{\bm{\M}}, \hat{\bm{\Sigma}}^{\row}, \hat{\bm{\Sigma}}^{\col}|(\bm{X}_1,\dots,\bm{X}_n)) \\
        =&\max_{\bm{w}, \hat{\bm{\M}}, \hat{\bm{\Sigma}}^{\row}, \hat{\bm{\Sigma}}^{\col}}
        -\frac{1}{2} \sum_{i=1}^n w_i \Bigl( 
        p\ln(\det(\hat{\bm{\Sigma}}^{\col})) + q\ln(\det(\hat{\bm{\Sigma}}^{\row})) 
        + \mmd^2(\bm{X}_i) + pq\ln(2\pi) \Bigr)
    \end{align*}
    subject to $w_1,\dots,w_n \in \{0,1\}$, $\sum_{i = 1}^n w_i = h$, $\hat{\bm{\M}} \in \R^{p \times q}$, $\hat{\bm{\Sigma}}^{\row} \in \pds(p)$, $\hat{\bm{\Sigma}}^{\col} \in \pds(q)$; see Proposition~\ref{proposition:MCD_likelihood}. 
    
    Denote further $(\bm{w}_{\mcd},\hat{\bm{\mu}}_{\mcd},\hat{\bm{\Sigma}}_{\mcd})$ and $(\bm{w}_{\mmcd},\hat{\bm{\mu}}_{\mmcd},\hat{\bm{\Sigma}}_{\mmcd}^{\col} \otimes \hat{\bm{\Sigma}}_{\mmcd}^{\row})$ weights, mean and covariance estimators for the vectorized sample $(\bm{x}_1,\dots,\bm{x}_n)$, based on MCD and MMCD, respectively. 
    As $\bm{X}_i\sim\mathcal{ME}(\bm{\M}, \bm{\Sigma}^{\row}, \bm{\Sigma}^{\col},g)$, then $\bm{x}_i\sim\mathcal{E}(\bm{\mu}, \bm{\Sigma}^{\col} \otimes \bm{\Sigma}^{\row},g)$, with $\mathbb{E}(\bm{x}_i)=\bm{\mu} = \vect(\bm{\M})$, $\cov{(\bm{x}_i)}=c_g\bm{\Sigma}^{\col} \otimes \bm{\Sigma}^{\row}$, where $c_g$ is a distribution-specific scaling parameter; for more details see Theorem 2.11 in \cite{gupta2012elliptically}.    Moreover, the mean estimator $\hat{\bm{\mu}}_{\mcd}$ and properly scaled covariance estimator $\hat{\bm{\Sigma}}_{\mcd}$ are strongly consistent for the population counterparts $\bm{\mu}$ and $\bm{\Sigma}^{\col} \otimes \bm{\Sigma}^{\row}$; see e.g. \cite{croux1999influence} and \cite{cator2012central}. Especially, this implies that for every $\delta>0$ there exists $n\in \mathbb{N}$ such that
    \begin{equation*}\label{eq:MCD_convergence}
        \norm{\hat{\bm{\mu}}_{\mcd}-\bm{\mu}} \overset{a.s.}{<} \delta,\quad \norm{\hat{\bm{\Sigma}}_{\mcd}-\bm{A}\otimes\bm{B}} \overset{a.s.}{<} \delta,
    \end{equation*}
    for some $\bm{A}\otimes\bm{B}\in \pds(p)\otimes\pds(q)$. 
    In the following, we will drop ${a.s.}$ superscript from (in)equality signs when it is clear from the context. For fixed weights $\bm{w}$, the log-likelihood function 
    $l_{\cdot,\bm{w}|(\bm{x}_1,\dots,\bm{x}_n)}:(\bm{\mu},\bm{\Sigma})\mapsto l(\bm{\mu},\bm{\Sigma}|\bm{w},(\bm{x}_1,\dots,\bm{x}_n))$ 
    is continuous in both $\bm{\mu}$, and $\bm{\Sigma}$, and its continuity implies
    \begin{equation*}
        \abs{l(\bm{w}_{\mcd},\hat{\bm{\mu}}_{\mcd},\hat{\bm{\Sigma}}_{\mcd}|(\bm{x}_1,\dots,\bm{x}_n))-l(\bm{w}_{\mcd},\bm{\mu},\bm{A}\otimes\bm{B}|(\bm{x}_1,\dots,\bm{x}_n))}<\varepsilon,
    \end{equation*}
    for $\varepsilon=\varepsilon(\delta)>0$. Moreover, the solution $(\bm{w}_{\mcd},\hat{\bm{\mu}}_{\mcd},\hat{\bm{\Sigma}}_{\mcd})$ is optimal for $l(\cdot|(\bm{x}_1,\dots,\bm{x}_n))$, implying that 
    \begin{equation}\label{eq:consistency_proof_mcd}
        0<l(\bm{w}_{\mcd},\hat{\bm{\mu}}_{\mcd},\hat{\bm{\Sigma}}_{\mcd}|(\bm{x}_1,\dots,\bm{x}_n))-l(\bm{w}_{\mcd},\bm{\mu},\bm{A}\otimes\bm{B}|(\bm{x}_1,\dots,\bm{x}_n))<\varepsilon.
    \end{equation}
    Similarly, $(\bm{w}_{\mmcd},\hat{\bm{\mu}}_{\mmcd},\hat{\bm{\Sigma}}_{\mmcd}^{\col} \otimes \hat{\bm{\Sigma}}_{\mmcd}^{\row})$ is a maximizer of $l(\cdot|(\bm{x}_1,\dots,\bm{x}_n))$ in the set of all feasible weights, means, and covariances with Kronecker product structure. As $(\bm{w}_{\mcd},\bm{\mu},\bm{A}\otimes\bm{B})$ belongs to the same set,   
    \begin{equation*}
        l(\bm{w}_{\mmcd}, \hat{\bm{\mu}}_{\mmcd}, \hat{\bm{\Sigma}}_{\mmcd}^{\col} \otimes \hat{\bm{\Sigma}}_{\mmcd}^{\row} | (\bm{x}_1,\dots,\bm{x}_n)) > l(\bm{w}_{\mcd},\bm{\mu},\bm{A}\otimes\bm{B}| (\bm{x}_1,\dots,\bm{x}_n)).
    \end{equation*}        
    Denote further $\hat{\bm{S}}_{\mmcd}=\frac{1}{h}\sum_{i=1}^nw_{{\mmcd},i}(\bm{x}_i-\hat{\bm{\mu}}_{\mathrm{MMCD}})(\bm{x}_i-\hat{\bm{\mu}}_{\mathrm{MMCD}})'$ to be the estimate of ${\bm{\Sigma}}^{\col} \otimes {\bm{\Sigma}}^{\row}$, based on the weights (subset) produced by the MMCD algorithm. As $\hat{\bm{S}}_{\mmcd}$ is optimal for $l$ given fixed weights $\bm{w}_\mathrm{MMCD}$,  
    \begin{align}\label{eq:consistency_proof_mmcd}            l(\bm{w}_{\mcd},\hat{\bm{\mu}}_{\mcd},\hat{\bm{\Sigma}}_{\mcd}|(\bm{x}_1,\dots,\bm{x}_n))
    &>l(\bm{w}_{\mmcd}, \hat{\bm{\mu}}_{\mmcd}, \hat{\bm{S}}_{\mmcd}|(\bm{x}_1,\dots,\bm{x}_n))\nonumber\\   
    &>l(\bm{w}_{\mmcd}, \hat{\bm{\mu}}_{\mmcd}, \hat{\bm{\Sigma}}_{\mmcd}^{\col} \otimes \hat{\bm{\Sigma}}_{\mmcd}^{\row} | (\bm{x}_1,\dots,\bm{x}_n))\nonumber\\    
    &>l(\bm{w}_{\mcd},\bm{\mu},\bm{A}\otimes\bm{B}|(\bm{x}_1,\dots,\bm{x}_n)).
    \end{align}  
    \eqref{eq:consistency_proof_mcd} and \eqref{eq:consistency_proof_mmcd} now give that
    \begin{equation*}
        0<l(\bm{w}_{\mcd},\hat{\bm{\mu}}_{\mcd},\hat{\bm{\Sigma}}_{\mcd}|(\bm{x}_1,\dots,\bm{x}_n))-l(\bm{w}_{\mmcd},\hat{\bm{\mu}}_{\mmcd},\hat{\bm{S}}_{\mmcd} |(\bm{x}_1,\dots,\bm{x}_n))<\varepsilon,
    \end{equation*}
    i.e., due to Proposition~\ref{proposition:MCD_likelihood},
    \begin{equation}\label{eq:consistency_proof_det}
        0<\det(\hat{\bm{S}}_{\mmcd})-\det(\hat{\bm{\Sigma}}_{\mcd})<\varepsilon,
    \end{equation}
    for $\varepsilon=\varepsilon(n)>0$, arbitrarily small $(\varepsilon(n)\to 0, \, n\to \infty)$. As both $\hat{\bm{\Sigma}}_{\mcd}$ and $\hat{\bm{S}}_{\mmcd}$ are weighted sample covariances for the random sample of vectorized observations calculated using the weights satisfying the same constraints, Corollary 4.1. in \cite{cator2012central} (taking $P_t$ to be the empirical measure based on the sample $(\bm{x}_1,\dots,\bm{x}_n)$) implies that 
    \begin{equation}\label{eq:44}
     \hat{\bm{\mu}}_{\mmcd} \xrightarrow{a.s.} \bm{\mu},\quad  \hat{\bm{S}}_{\mmcd} \xrightarrow{a.s.} c(\alpha)^{-1} \bm{\Sigma}^{\col} \otimes \bm{\Sigma}^{\row} \quad,  
    \end{equation} 
    where $c(\alpha)>0$ is a distribution-specific consistency factor of the MCD given in \cite{croux1999influence}.
    
    To complete the proof consider reparametrization of $l(\bm{w},\bm{a},\bm{A}|(\bm{x}_1,\dots,\bm{x}_n))$ for fixed weights $\bm{w} \in \R^{n}$, mean $\bm{a} \in \R^{pq}$, and covariance $\bm{A} \in \pds(pq)$ in terms of the precicion matrix $\bm{B}=\bm{A}^{-1}$. Denote this new parametrization as $g(\bm{B}|\bm{w},\bm{a},(\bm{x}_1,\dots,\bm{x}_n))=l(\bm{w},\bm{a},\bm{B}^{-1}|\bm{x}_1,\dots,\bm{x}_n)$, which is now concave in $\bm{B}$. 
    Especially, for $\bm{w}=\bm{w}_{\mmcd}$ and $\bm{a}=\hat{\bm{\mu}}_{\mmcd}$, the function $g(\bm{B}|\bm{w}_{\mmcd},\hat{\bm{\mu}}_{\mmcd},
    (\bm{x}_1,\dots,\bm{x}_n))$ is concave in $\bm{B}$ and achieves a unique global maximum at $\bm{B}=\hat{\bm{S}}_{\mmcd}$. 
    Equations~\eqref{eq:consistency_proof_mcd} and \eqref{eq:consistency_proof_mmcd} then give
    \begin{align*}
        0 < &l(\bm{w}_{\mmcd},\hat{\bm{\mu}}_{\mmcd},\hat{\bm{S}}_{\mmcd}|(\bm{x}_1,\dots,\bm{x}_n))\\
        &-l(\bm{w}_{\mmcd},\hat{\bm{\mu}}_{\mmcd}, \hat{\bm{\Sigma}}_{\mmcd}^{\col} \otimes \hat{\bm{\Sigma}}_{\mmcd}^{\row}|(\bm{x}_1,\dots,\bm{x}_n))<\varepsilon,
    \end{align*}
    further implying that 
    \begin{align*}
        0<&g(\hat{\bm{S}}_{\mmcd}^{-1}|\bm{w}_{\mmcd},\hat{\bm{\mu}}_{\mmcd},(\bm{x}_1,\dots,\bm{x}_n))\\
        &-g((\hat{\bm{\Sigma}}_{\mmcd}^{\col} \otimes \hat{\bm{\Sigma}}_{\mmcd}^{\row})^{-1}|\bm{w}_{\mmcd},\hat{\bm{\mu}}_{\mmcd},(\bm{x}_1,\dots,\bm{x}_n))<\varepsilon,
    \end{align*}
    as both $\hat{\bm{S}}_{\mmcd}$ and 
    $\hat{\bm{\Sigma}}_{\mmcd}^{\col} \otimes \hat{\bm{\Sigma}}_{\mmcd}^{\row}$ are a.s. positive definite for $n$ large enough. Concavity of $g$ and the fact that  $\hat{\bm{S}}_{\mmcd}^{-1}$ is its global maximum further imply that 
    \begin{equation*}
        \|\hat{\bm{S}}_{\mmcd}^{-1}-(\hat{\bm{\Sigma}}_{\mmcd}^{\col} \otimes \hat{\bm{\Sigma}}_{\mmcd}^{\row})^{-1}\|<\delta_1,
    \end{equation*}
    for $\delta_1=\delta_1(\varepsilon)\to 0$ as $n\to \infty$. Almost sure positive definiteness of $\hat{\bm{S}}_{\mmcd}$ and $\hat{\bm{\Sigma}}_{\mmcd}^{\col} \otimes \hat{\bm{\Sigma}}_{\mmcd}^{\row}$, and continuity of matrix inverse imply that 
    \begin{equation}\label{eq:45}
     \|\hat{\bm{S}}_{\mmcd} - \hat{\bm{\Sigma}}_{\mmcd}^{\col} \otimes \hat{\bm{\Sigma}}_{\mmcd}^{\row}\| < \delta,   
    \end{equation}
    for $\delta=\delta(\varepsilon)\to 0$ as $n\to \infty$. Equations~\eqref{eq:44} and \eqref{eq:45} now complete the proof. Observe that the proof indicates that the distribution-specific consistency factor is inherited from the MCD covariance estimator; see \cite{croux1999influence}. 
\end{proof}

\begin{proof}[\hypertarget{proof:MMCD_breakdown_point_reweighted}{\textbf{Proof of Theorem~\ref{theorem:MMCD_reweighted_breakdown_point}}}]
We show that the breakdown points of the reweighted MMCD estimators are at least as high as the breakdown points of the raw MMCD estimators. 
Let $\bm{\mathfrak{Y}}$ be the sample obtained by replacing at most $m -1$ matrices of $\bm{\mathfrak{X}}$ by arbitrary $p \times q$ matrices. Let $\hat{\bm{\M}}_{\bm{\mathfrak{Y}}}$, $\hat{\bm{\Sigma}}^{\row}_{\bm{\mathfrak{Y}}}$, and $\hat{\bm{\Sigma}}^{\col}_{\bm{\mathfrak{Y}}}$ 
    denote the \emph{raw} MMCD estimators and 
    $\tilde{\bm{\M}}_{\bm{\mathfrak{Y}}}$, $\tilde{\bm{\Sigma}}^{\row}_{\bm{\mathfrak{Y}}}$, and $\tilde{\bm{\Sigma}}^{\col}_{\bm{\mathfrak{Y}}}$ 
    denote the \emph{reweighted} MMCD estimators based on the corrupted sample $\bm{\mathfrak{Y}}$. 
    Further, $d(\bm{Y}_i) = \mmd(\bm{Y}_i;\hat{\bm{\M}}_{\bm{\mathfrak{Y}}},\hat{\bm{\Sigma}}^{\row}_{\bm{\mathfrak{Y}}},\hat{\bm{\Sigma}}^{\col}_{\bm{\mathfrak{Y}}})$, $i \in N = \{1,\dots,n\}$, denote the matrix Mahalanbois distances of the corrupted sample based on the \emph{raw} MMCD estimators. 
    Since $m \leq \varepsilon^{\ast}(\hat{\bm{\M}}_{\bm{\mathfrak{X}}}, \bm{\mathfrak{X}}) - 1 = \varepsilon^{\ast}(\hat{\bm{\Sigma}}^{\row}_{\bm{\mathfrak{X}}}, \hat{\bm{\Sigma}}^{\col}_{\bm{\mathfrak{X}}}, \bm{\mathfrak{X}}) -1$ it follows that there exist constants $k_0$, $k_1$, and $k_2$ that only depend on $\bm{\mathfrak{X}}$, such that 
    \begin{align} \label{eq:eigenvalue_bounds}
    \begin{split}
        &\norm{\hat{\bm{\M}}_{\bm{\mathfrak{Y}}}} \leq k_0 < \infty 
        \quad \text{and} \\
        &0 < k_1 < 
        \lambda_{p}(\hat{\bm{\Sigma}}^{\row}_{\bm{\mathfrak{Y}}}) \lambda_{q}(\hat{\bm{\Sigma}}^{\col}_{\bm{\mathfrak{Y}}}) \leq 
        \lambda_{1}(\hat{\bm{\Sigma}}^{\row}_{\bm{\mathfrak{Y}}}) \lambda_{1}(\hat{\bm{\Sigma}}^{\col}_{\bm{\mathfrak{Y}}}) \leq k_2 < \infty.
    \end{split}
    \end{align}
    Since at least $\lfloor\nicefrac{(n+d+2)}{2}\rfloor$ have a positive weight and at most $\lfloor \nicefrac{(n-d)}{2} \rfloor - 1$ observations are replaced, there are at least $d + 2$ observations of the orginal sample $\bm{\mathfrak{X}}$ contained in $\bm{\mathfrak{Y}}$ that have a positive weight. 
    Let $T \subseteq N$ denote the indices of those samples, then we have that
    \begin{align} \label{eq:denominator_bound}
        \sum_{i = 1}^n w(d(\bm{Y}_i)) = \sum_{i \in N \setminus T} w(d(\bm{Y}_i)) + \sum_{i \in T} w(d(\bm{X}_i)) \geq \sum_{i \in T} w(d(\bm{X}_i)) \geq (d+2) c_0 > 0,
    \end{align}
    with $c_0 := \min_{i \in T}w(d(\bm{X}_i)) > 0$. This implies that the denominators of $\tilde{\bm{\M}}_{\bm{\mathfrak{Y}}}$, $\tilde{\bm{\Sigma}}^{\row}_{\bm{\mathfrak{Y}}}$, and $\tilde{\bm{\Sigma}}^{\col}_{\bm{\mathfrak{Y}}}$ are always positive. 
    
    Let us now show that there exists a constant $\alpha_0 < \infty$ only dependent on $\bm{\mathfrak{X}}$ such that $\norm{\tilde{\bm{\M}}_{\bm{\mathfrak{Y}}}}_{F} < \alpha_0$. 
    From Equation~\eqref{eq:matrix_eigenvalue_inequality} we have that
    \begin{align*}
        \norm{\bm{Y}_i - \hat{\bm{\M}}_{\bm{\mathfrak{Y}}}}_F^2 &
        \leq \tr(\hat{\bm{\Omega}}^{\col}_{\bm{\mathfrak{Y}}} (\bm{Y}_i - \hat{\bm{\M}}_{\bm{\mathfrak{Y}}})' \hat{\bm{\Omega}}^{\row}_{\bm{\mathfrak{Y}}} (\bm{Y}_i - \hat{\bm{\M}}_{\bm{\mathfrak{Y}}})) \lambda_1(\hat{\bm{\Sigma}}^{\row}_{\bm{\mathfrak{Y}}}) \lambda_1(\hat{\bm{\Sigma}}^{\col}_{\bm{\mathfrak{Y}}})\\
        &= d(\bm{Y}_i) \lambda_1(\hat{\bm{\Sigma}}^{\row}_{\bm{\mathfrak{Y}}}) \lambda_1(\hat{\bm{\Sigma}}^{\col}_{\bm{\mathfrak{Y}}}).
    \end{align*}
    When computing $\tilde{\bm{\M}}_{\bm{\mathfrak{Y}}}$ we have that $w(d(\bm{Y}_i)) = 0$ if $d(\bm{Y}_i) > c_1$ and for all $\bm{Y}_i \in \bm{\mathfrak{Y}}$ that are assigned positive weights, Equation~\eqref{eq:eigenvalue_bounds} yields
    \begin{align}\label{eq:mmcd_reweighted_obs_bound}
        \norm{\bm{Y}_i}_F^2 \leq \norm{\bm{Y}_i - \hat{\bm{\M}}_{\bm{\mathfrak{Y}}}}_F^2 + \norm{\hat{\bm{\M}}_{\bm{\mathfrak{Y}}}}_F^2 \leq c_1 k_2 + k_0^2.
    \end{align}
    Since the denominator of is $\tilde{\bm{\M}}_{\bm{\mathfrak{Y}}}$ bounded according to Equation~\eqref{eq:denominator_bound}, $w$ is non-increasing and bounded, and $k_0$ and $k_2$ are only dependent on $\bm{\mathfrak{X}}$, there exsits a constant $\alpha_0$ only dependent on $\bm{\mathfrak{X}}$ such that 
    \begin{align}\label{eq:eq:mmcd_reweighted_mean_bound}
        \norm{\tilde{\bm{\M}}_{\bm{\mathfrak{Y}}}}_F \leq \alpha_0 < \infty.
    \end{align}
    To show that the covariance does not break down, we first consider the case of the weight function $w(d_i)=\mathds{1}(d_i\leq c_1)$, for $c_1>0$. 
    Let $S \subseteq \{1,\dots,N\}$ denote the subset of indices of the $s = \abs{S}$ samples of $\bm{\mathfrak{Y}}=\{\bm{Y}_1,\dots,\bm{Y}_n\}$ for which $d_i \leq c_1, i \in S$.    
    Observe that the $h$ samples $\bm{Y}_i$, $i\in H$ are those with the smallest MD, hence $T \subseteq H \subseteq S$. 
    Let $\hat{\bm{M}}_{\bm{\mathfrak{Y}}_T}, \hat{\bm{\Sigma}}_{\bm{\mathfrak{Y}}_T}, \hat{\bm{\Omega}}_{\bm{\mathfrak{Y}}_T}$ and $\hat{\bm{M}}_{\bm{\mathfrak{Y}}_S}, \hat{\bm{\Sigma}}_{\bm{\mathfrak{Y}}_S}, \hat{\bm{\Omega}}_{\bm{\mathfrak{Y}}_S}$ denote the MLE estimators of $\bm{\mathfrak{Y}}_T = (\bm{Y}_i)_{i \in T}$ and $\bm{\mathfrak{Y}}_S = (\bm{Y}_i)_{i \in S}$, respectively. 
    Consider the following three ellipsoids:
    \begin{itemize}
        \item Let $E_T = E(\hat{\bm{M}}_{\bm{\mathfrak{Y}}_T}, \hat{\bm{\Sigma}}_{\bm{\mathfrak{Y}}_T}, \hat{\bm{\Omega}}_{\bm{\mathfrak{Y}}_T})$ denote the ellipsoid corresponding to the MLEs of $\bm{\mathfrak{Y}}_T$, i.e., the smallest ellipsoid containing those at least $d+2$ samples.
        \item Let $E_S = E(\hat{\bm{M}}_{\bm{\mathfrak{Y}}_S}, \hat{\bm{\Sigma}}_{\bm{\mathfrak{Y}}_S}, \hat{\bm{\Omega}}_{\bm{\mathfrak{Y}}_S})$ denote the ellipsoid corresponding to the MLEs of $\bm{\mathfrak{Y}}_S$.
        \item Let $E_0=E(\bm{0},k\bm{I}_p,\bm{I}_q)$ denote the smallest sphere containing the samples $\bm{\mathfrak{Y}}_S$, where $k=c_1k_2+k_0^2$ is as in \eqref{eq:mmcd_reweighted_obs_bound}.
    \end{itemize}    
    Observe first that as $E_T$ is the smallest ellipsoid containing the samples $\bm{\mathfrak{Y}}_T = \bm{\mathfrak{X}}_T$ that are also in $E_S$, there exists a constant $a_1$ depending only on $\bm{\mathfrak{X}}_T$, such that $E_T\subseteq a_1 E_S: = E(\hat{\bm{M}}_{\bm{\mathfrak{Y}}_S}, a_1\hat{\bm{\Sigma}}_{\bm{\mathfrak{Y}}_S}, \hat{\bm{\Omega}}_{\bm{\mathfrak{Y}}_S})$. On the other hand, $E_S$ is the smallest ellipsoid containing $\bm{\mathfrak{Y}}_S$. As these points are also in $E_0$, then there exist $\alpha=\alpha(c_1)$ such that $\det(E_S)\leq \det(E_0)\leq \alpha$. Equivalent argumentation as in the proof of Theorem \ref{theorem:MMCD_breakdown_point} completes the first part of the proof. 

    Let now $w=w(d_i)$ be an arbitrarily, nondecreasing, bounded weight function, such that $w(d_i)=0$ if $d_i> c_1$, $i=1,\dots,n$. The weighted log-likelihood function for the sample $\bm{\mathfrak{Y}}$, with the weights satisfying $\sum_{i=1}^n w_i=s$ is given by
    \begin{align*}
        l(\bm{w},\bm{\M}, \bm{\Sigma}^{\row}, \bm{\Sigma}^{\col}|\bm{\mathfrak{Y}}) =& -\frac{1}{2} \sum_{i=1}^m w_i \Bigl(p\ln(\det(\bm{\Sigma}^{\col})) + q\ln(\det(\bm{\Sigma}^{\row}))\\
        & + \tr(\bm{\Omega}^{\col}(\bm{Y}_i -\bm{\M})' \bm{\Omega}^{\row} (\bm{Y}_i -\bm{\M})) + pq\ln(2\pi) \Bigr)\\  
        =&-\frac{1}{2} \Bigl(s\bigl(p\ln(\det(\bm{\Sigma}^{\col})) + q\ln(\det(\bm{\Sigma}^{\row}))\bigr)\\
        & +\sum_{i=1}^s\tr(\bm{\Omega}^{\col}(\bm{Z}_i -\bm{\M})' \bm{\Omega}^{\row} (\bm{Z}_i -\bm{\M})) + pq\ln(2\pi) \Bigr)\\ 
        =&\, l(\tilde{\bm{w}},\bm{M},\bm{\Sigma}^{\row}, \bm{\Sigma}^{\col}|\bm{\mathfrak{Z}}),
    \end{align*}
    where $\tilde{\bm{w}} = (\tilde{w}(d_1),\dots,\tilde{w}(d_n))$, the new weight function satisfies $\tilde{w}(d_i)=\mathds{1}(d_1\leq c_1)$, $\bm{\mathfrak{Z}}=\{\bm{Z}_1,\dots,\bm{Z}_n\}$, and $\bm{Z}_i=\sqrt{w_i}\bm{Y}_i$, $i=1,\dots,n$. To complete the proof it is sufficient to observe the following: $\bm{Z}_1,\dots,\bm{Z}_{h}$ contains at least $d+2$ points of the form $\sqrt{w_i}\bm{X}_i$ and are in a general position, as $w_i\geq a_2>0$, for some constant depending only on $\bm{\mathfrak{X}}$. Moreover, $\|\bm{Z}_i\|_F^2=w_i\|\bm{Y}_i\|_F^2\leq w_i(c_1k_2+k_0^2)\leq w(0)(c_1k_2+k_0^2)$, $i=1,\dots , s$. The statement now follows from the first part of the proof, observing that assumption $\sum_{i=1}^mw_i=s$ without loss of generality, since $0 < w(0) \leq \sum_{i=1}^n w_i\leq s w(0)<\infty$.  
\end{proof}

\newpage
\section{MMCD algorithm}
\label{supplement:mmcd_algorithm}
\begin{algorithm}
    \caption{Iterative C-step procedure for the MMCD estimators}
    \label{algorithm:MMCD_C_step}
    \begin{algorithmic}[1] 
        \Procedure{CSTEP}{$(\bm{X}_1,\ldots,\bm{X}_n), H_{\old}, \varepsilon > 0$} 
            \State $(\hat{\bm{\M}}_{H_{\new}}, \hat{\bm{\Sigma}}^{\row}_{H_{\new}}, \hat{\bm{\Sigma}}^{\col}_{H_{\new}}) = \text{MLE}((\bm{X}_i)_{i \in H_{\old}})$
            \State $h = \abs{H_{\old}}$
            \Repeat
                \State $(\hat{\bm{\M}}_{H_{\old}}, \hat{\bm{\Sigma}}^{\row}_{H_{\old}}, \hat{\bm{\Sigma}}^{\col}_{H_{\old}}) = (\hat{\bm{\M}}_{H_{\new}}, \hat{\bm{\Sigma}}^{\row}_{H_{\new}}, \hat{\bm{\Sigma}}^{\col}_{H_{\new}})$
                \State $\bm{d} = (\mmd^2(\bm{X}_1; \hat{\bm{\M}}_{H_{\old}}, \hat{\bm{\Sigma}}^{\row}_{H_{\old}}, \hat{\bm{\Sigma}}^{\col}_{H_{\old}}),\ldots,\mmd^2(\bm{X}_n; \hat{\bm{\M}}_{H_{\old}}, \hat{\bm{\Sigma}}^{\row}_{H_{\old}}, \hat{\bm{\Sigma}}^{\col}_{H_{\old}}))$
                \State $\pi_{1}(i) = \{\{1,\ldots,n\} \rightarrow \{1,\ldots,n\} : i \mapsto j: d_{\pi(1)} \leq \ldots \leq d_{\pi(n)}\}$ 
                \State $H_{\new} = \{\pi(1),\pi(2),\ldots,\pi(h)\}$
                \State $(\hat{\bm{\M}}_{H_{\new}}, \hat{\bm{\Sigma}}^{\row}_{H_{\new}}, \hat{\bm{\Sigma}}^{\col}_{H_{\new}}) = \text{MLE}((\bm{X}_i)_{i \in H_{\new}})$
            \Until{$\abs{p(\ln(\det(\hat{\bm{\Sigma}}^{\col}_{H_{\old}})) - \ln(\det(\hat{\bm{\Sigma}}^{\col}_{H_{\new}}))) + q(\ln(\det(\hat{\bm{\Sigma}}^{\row}_{H_{\old}}))-\ln(\det(\hat{\bm{\Sigma}}^{\row}_{H_{\new}})))}~<~\varepsilon$}
            \State \textbf{return} $\hat{\bm{\M}}_{H_{\new}}, \hat{\bm{\Sigma}}^{\row}_{H_{\new}}, \hat{\bm{\Sigma}}^{\col}_{H_{\new}}, \bm{d}, H_{\new}$
        \EndProcedure
    \end{algorithmic}
\end{algorithm}

\begin{algorithm}
    \caption{Fast \emph{reweighted} MMCD procedure}
    \label{algorithm:MMCD}
    \begin{algorithmic}[1] 
        \Procedure{MMCD}{$\bm{\mathfrak{X}} = (\bm{X}_1,\ldots,\bm{X}_n)$} 
            \State $h = \lfloor\nicefrac{(n+d+2)}{2}\rfloor$
            \State $\alpha = \nicefrac{h}{n}$
            \State $N = \{1,\dots,n\}$
            \For{$k = 1$ to $500$}
                \State $H_k$ = sample($N$, size = $d+2$)
                \State $(\hat{\bm{\M}}_k, \hat{\bm{\Sigma}}^{\row}_k, \hat{\bm{\Sigma}}^{\col}_k, \bm{d}_k, H_k)$ = CSTEP$_2$($\bm{\mathfrak{X}}, H_k$) 
                \Comment{\textcolor{cyan}{2 MLE and C-step iterations}}
                \State $\delta_k = p\ln(\det(\hat{\bm{\Sigma}}^{\col}_k)) + q\ln(\det(\hat{\bm{\Sigma}}^{\row}_k))$
            \EndFor  
            \State $\pi_{\delta}(i) = \{\{1,\ldots,500\} \rightarrow \{1,\ldots,500\} : i \mapsto j: \delta_{\pi_{\delta}(1)} \leq \ldots \leq \delta_{\pi_{\delta}(500)}\}$ 
            \For{$l \in \{\pi_{\delta}(1), \pi_{\delta}(2), \dots, \pi_{\delta}(10)\}$}            
                \State $(\hat{\bm{\M}}_l, \hat{\bm{\Sigma}}^{\row}_l, \hat{\bm{\Sigma}}^{\col}_l, \bm{d}_l, H_l)$ = CSTEP($\bm{\mathfrak{X}}, H_l$) 
                \Comment{\textcolor{cyan}{Iterating C-steps until convergence}}
                \State $\delta_l = p\ln(\det(\hat{\bm{\Sigma}}^{\col}_l)) + q\ln(\det(\hat{\bm{\Sigma}}^{\row}_l))$
            \EndFor 
            \State $j = \argmin_{k \in N}(\delta_k)$
            \State $(\hat{\bm{\M}}, \hat{\bm{\Sigma}}^{\row}, \hat{\bm{\Sigma}}^{\col}) = (\hat{\bm{\M}}_j, c(\alpha)\hat{\bm{\Sigma}}^{\row}_j, \hat{\bm{\Sigma}}^{\col}_j)$ 
            \Comment{\textcolor{cyan}{Consistency scaling for raw MMCD}}
            \State $\bm{d} = (\mmd^2(\bm{X}_1; \hat{\bm{\M}}, \hat{\bm{\Sigma}}^{\row}, \hat{\bm{\Sigma}}^{\col}),\ldots,\mmd^2(\bm{X}_n; \hat{\bm{\M}}, \hat{\bm{\Sigma}}^{\row}, \hat{\bm{\Sigma}}^{\col}))$
            \State $H = H_j \cup \{i \in N | d_i < \chi^2_{0.975; pq}\}$
            \State $(\hat{\bm{\M}}, \hat{\bm{\Sigma}}^{\row}, \hat{\bm{\Sigma}}^{\col}) = \mle(\bm{X}_{i \in H})$  
            \Comment{\textcolor{cyan}{Computation of reweighted MMCD}}
            \State $\tilde{\alpha} = \nicefrac{\abs{H}}{n}$
            \State $(\hat{\bm{\M}}_{\ast}, \hat{\bm{\Sigma}}^{\row}_{\ast}, \hat{\bm{\Sigma}}^{\col}_{\ast}) = (\hat{\bm{\M}}, c(\tilde{\alpha})\hat{\bm{\Sigma}}^{\row}, \hat{\bm{\Sigma}}^{\col})$ 
            \Comment{\textcolor{cyan}{Consistency scaling for reweighted MMCD}}
            \State \textbf{return} $\hat{\bm{\M}}_{\ast}, \hat{\bm{\Sigma}}^{\row}_{\ast}, \hat{\bm{\Sigma}}^{\col}_{\ast}$
        \EndProcedure
    \end{algorithmic}
\end{algorithm}

\newpage

\subsection{Elemental subsets}
\label{supplement:mmcd_subsets}
For large $n$, the probability of obtaining at least one clean subset with $d+2$ observations among $m$ random subsets tends to
\begin{equation*}
    1 - (1 - (1-\varepsilon)^{d +2})^m,
\end{equation*}
with $\varepsilon$ denoting the percentage of outliers, see also \cite{Rousseeuw1999}. Hence, the number of subsets we must investigate to obtain at least one clean subset with a probability of $\beta$ is 
\begin{equation} \label{eq:subset_prob}
    \lceil \log(1-\beta)/log(1-(1-\varepsilon)^{d +2}) \rceil.
\end{equation}
In Figure~\ref{fig:cstep_obervation_contamination}, we plot the number of necessary subsets according to Equation~\eqref{eq:subset_prob} for $\beta = 0.99$ for $d$ between $1$ and $50$ and $\varepsilon$ between $0$ and $0.5$. The different green-shaded areas starting from the bottom right indicate settings where up to $m = 500$ initial subsets of size $d+2$ are sufficient to obtain at least one clean subset with a probability of $\beta = 0.99$ and the various shades of orange indicate settings where we need more elemental subsets. 

\begin{figure}[!ht]
    \centering
    \includegraphics[width = 0.67\linewidth]{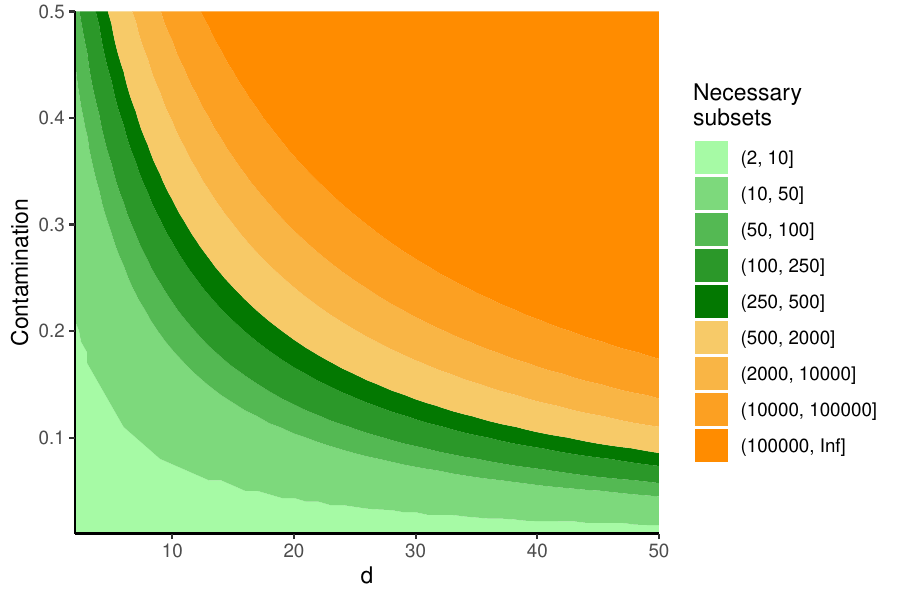}
    \caption{Number of subsets of size $d + 2$ we have to investigate for various levels of contamination, to obtain at least one clean subset with a probability of 99\%}
    \label{fig:cstep_obervation_contamination}
 \end{figure}

We assess the influence of using only 2 C-step and MLE iterations on the MMCD estimators' objective, the determinant of $\hat{\bm{\Sigma}}^{\col} \otimes \hat{\bm{\Sigma}}^{\row}$. We consider a setting with $n = 200$ observations with  $p = 2$ rows and $q = 8$ columns. The clean observations are generated by a centered matrix normal distribution with $\bm{\Sigma}^{\row} = \bm{\Sigma}^{\fix}(0.7)$ and $\bm{\Sigma}^{\col} = \bm{\Sigma}^{\mix}(0.7)$, with diagonal entries $\sigma_{jj}^{\fix} = \sigma_{jj}^{\mix} = 1$ and off-diagonal entries $\sigma_{jk}^{\fix}(0.7) = 0.7$ and $\sigma_{jk}^{\mix}(0.7) = 0.7^{\abs{j-k}}$, respectively. The outliers have a mean of $5$ and the same covariance as the regular observations. We use 100 random subsets and plot $\det(\hat{\bm{\Sigma}}^{\col} \otimes \hat{\bm{\Sigma}}^{\row})$ for subsequent C-step iterations with 40\% of contamination. We compare the setting when we limit the number of MLE iterations to 2 or iterate until convergence and/or use elemental subsets with $d + 2 = 6$ instead of $h$-subsets of size $\nicefrac{n}{2} = 100$. 
\begin{figure}[!ht]
    \centering
    \includegraphics[width = \linewidth]{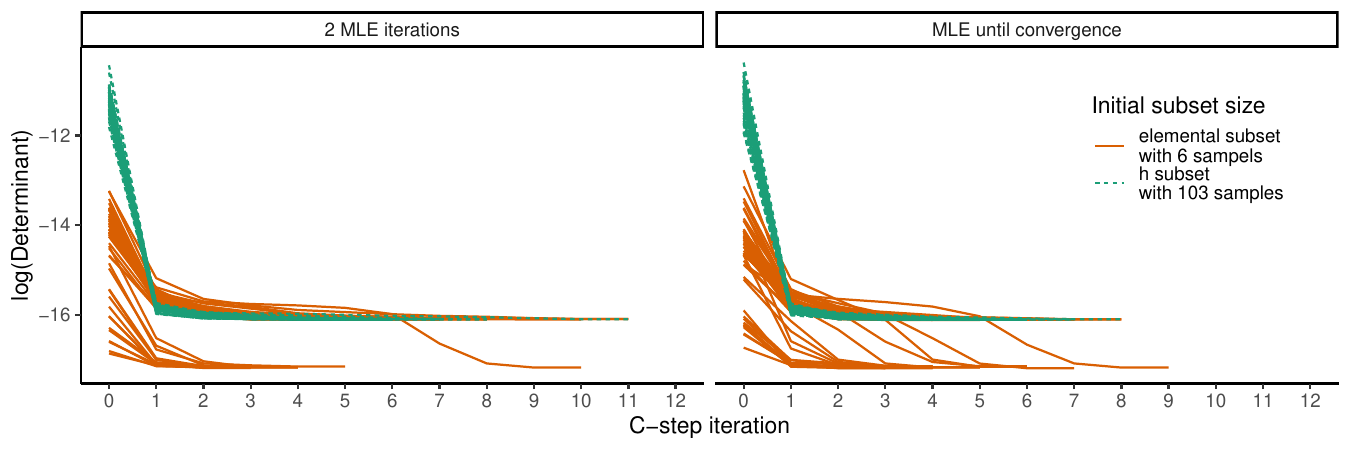}
    \caption{Logarithm of determinant for successive C-step iterations to analyze the effects of initial subset size and the number of ML iterations.}
    \label{fig:cstep_overview}
\end{figure}
Comparing the top and bottom row of Figure~\ref{fig:cstep_overview}, we see that there is virtually no difference in the objective function while limiting the ML iterations increases the computation speed. For the subset size, we see that several of the elemental subsets yield robust solutions with a lower covariance determinant than the larger $h$-subsets and that most of them are identified after 1 or 2 iterations. While 40\% contamination is not often encountered in practice, it shows that the algorithm can deal with settings with such a high level of contamination. We also analyzed settings with lower contamination, and using elemental subsets and fewer ML iterations had no negative effects in those settings, however, the larger $h$-subsets also led to robust solutions more frequently.

\begin{remark}
    Instead of using the consistency factor $c(\alpha)$ given in Equation~\eqref{eq:consistency_factor}, we could also scale the estimators to align the MMDs with a quantile of the chi-square distribution as in \cite{Rousseeuw1999}. Across the simulations and the examples considered in this paper, we have only seen very slight changes in the resulting estimators for both the raw and reweighted MMCD.
\end{remark}

\newpage
\section{Shapley proofs}
\label{supplement:shapley_proofs}

\begin{proof}[\hypertarget{proof:shapley_properties}{\textbf{Proof of Proposition~\ref{proposition:shapley_properties}}}]
    To show that cellwise Shapley values are not matrix affine equivariant, we consider a rowwise addition matrix $\bm{A}$ that adds the $w$-th row to the $v$-th row. For simplicity, let $\bm{B}$ be the identity matrix. Then Equation~\eqref{eq:shapley_lhs} yields
    \begin{align*}
        ((\bm{A}\bm{X}) \circ (\bm{C}\bm{Y}))_{jk}
        =& \begin{cases}
            (x_{jk} + x_{wk})(y_{jk} - y_{wk}) & {j = v}\\
            x_{jk}y_{jk} & {j \neq v}
        \end{cases}
    \end{align*}
    while 
    \begin{align*}
        (\bm{A}(\bm{X} \circ \bm{Y}))_{jk}
        &= \begin{cases}
                x_{jk}y_{jk} + x_{wk}y_{wk} & j = v\\
                x_{jk}y_{jk} & j \neq v
            \end{cases}.
    \end{align*} 
    Hence, we do not get invariance nor equivariance for rowwise or columnwise addition matrices. This also implies that the cellwise Shapley values are not, in general, matrix affine equivariant.
    
    Shift invariance follows from
    \begin{align*}
        \bm{\Phi}(\bm{X} + \bm{C}) &= ((\bm{X} + \bm{C})-(\bm{\M} + \bm{C})) \circ \bm{\Omega}^{\row} ((\bm{X} + \bm{C})-(\bm{\M} + \bm{C})) \bm{\Omega}^{\col} = \bm{\Phi}(\bm{X}),
    \end{align*}
    which means that we can assume that $\bm{X}$ has zero mean without loss of generality. 

    Let $\bm{Y} := \bm{\Omega}^{\row}\bm{X}\bm{\Omega}^{\col}$, $\bm{C} := (\bm{A}')^{-1}$ and $\bm{D} := (\bm{B}')^{-1}$, then we can write the cellwise Shapley values as $\bm{\Phi}(\bm{A}\bm{X}\bm{B}) = (\bm{A}\bm{X}\bm{B}) \circ (\bm{C}\bm{Y}\bm{D})$. The $jk$-th entry of this matrix can be written as
    \begin{align}
    \label{eq:shapley_lhs}
        \begin{split}     
            \phi_{jk}(\bm{A}\bm{X}\bm{B}) 
            =& ((\bm{A}\bm{X}\bm{B}) \circ (\bm{C}\bm{Y}\bm{D}))_{jk}
            = (\bm{A}\bm{X}\bm{B})_{jk} (\bm{C}\bm{Y}\bm{D})_{jk}\\
            =& \sum_{i = 1}^p \sum_{l = 1}^q a_{ji} x_{il} b_{lk} \sum_{m = 1}^p \sum_{n = 1}^q c_{jm} y_{mn} d_{nk}\\
            =& \sum_{i = 1}^p \sum_{l = 1}^q \sum_{m = 1, m \neq i}^p \sum_{n = 1, n \neq l}^q a_{ji}c_{jm} x_{il} y_{mn} b_{lk}d_{nk} \\
            &+ \sum_{i = 1}^p \sum_{l = 1}^q \sum_{n = 1, n \neq l}^q a_{ji}c_{ji} x_{il} y_{in} b_{lk}d_{nk} \\
            &+ \sum_{i = 1}^p \sum_{l = 1}^q \sum_{m = 1, m \neq i}^p a_{ji}c_{jm} x_{il} y_{ml} b_{lk}d_{lk} \\
            &+ \sum_{i = 1}^p \sum_{l = 1}^q a_{ji}c_{ji} x_{il} y_{il} b_{lk}d_{lk}.
        \end{split}
    \end{align}     
    
    If $\bm{A}$ is a scaling matrix, i.e., a diagonal matrix with non-zero entries, we have that 
    $$
        a_{ji}c_{jm} 
        = \begin{cases}
                1 & j = i = m\\
                0 & \text{otherwise}
        \end{cases},
    $$ 
    and similarly for $\bm{B}$. This implies that 
    $$
        \phi_{jk}(\bm{A}\bm{X}\bm{B}) = x_{jk} y_{jk} = (\bm{X} \circ \bm{Y})_{jk} = \phi_{jk}(\bm{X}) ,
    $$ showing the scale invariance.
    
    If $\bm{A}$ is a permutation matrix, i.e., a matrix consisting of any permutation of the canonical basis vectors, we have that $(\bm{A}')^{-1} = \bm{A}$ and 
    $$
        a_{ji}c_{jm} = a_{ji}a_{jm} 
        = \begin{cases}
            a_{ji} & i = m\\
            0 & i \neq m
        \end{cases},
    $$ 
    and similarly for $\bm{B}$. Hence Equation~\eqref{eq:shapley_lhs} becomes
    $$
        ((\bm{A}\bm{X}\bm{B}) \circ (\bm{C}\bm{Y}\bm{D}))_{jk} 
        = \sum_{i = 1}^p \sum_{l = 1}^q a_{ji}c_{ji} x_{il} y_{il} b_{lk}d_{lk} =
        (\bm{A}(\bm{X} \circ \bm{Y})\bm{B})_{jk},
    $$
    verifying the permutation equivariance. 
\end{proof}             

\begin{proof}[\hypertarget{proof:matrix_shapley}{\textbf{Proof of Theorem~\ref{proposition:matrix_shapley}}}]
To show that the computation of the rowwise Shapley value can be simplified, we start by rewriting the rowwise marginal contributions to the matrix Mahalanobis distance.

    \begin{align*}
        \Delta_a\mmd(\hat{\bm{X}}^{S}) :=& \mmd(\hat{\bm{X}}^{S \cup \{a\}}) - \mmd(\hat{\bm{X}}^{S})\\
        =& \sum_{i = 1}^p\sum_{j = 1}^p\sum_{k = 1}^q\sum_{l = 1}^q (\hat{x}^{S \cup \{a\}}_{ik} - m_{ik}) (\hat{x}^{S \cup \{a\}}_{jl} - m_{jl}) \omega_{lk}^{\col} \omega_{ij}^{\row} \\
        &- \sum_{i = 1}^p\sum_{j = 1}^p\sum_{k = 1}^q\sum_{l = 1}^q (\hat{x}^{S}_{ik} - m_{ik}) (\hat{x}^{S}_{jl} - m_{jl}) \omega_{lk}^{\col} \omega_{ij}^{\row}\\
        =& \sum_{i \in S \cup \{a\}}\sum_{j \in S \cup \{a\}}\sum_{k = 1}^q\sum_{l = 1}^q (x_{ik} - m_{ik}) (x_{jl} - m_{jl}) \omega_{lk}^{\col} \omega_{ij}^{\row} \\
        &- \sum_{i  \in S}\sum_{j \in S}\sum_{k = 1}^q\sum_{l = 1}^q (x_{ik} - m_{ik}) (x_{jl} - m_{jl}) \omega_{lk}^{\col} \omega_{ij}^{\row}\\
        =& \sum_{i \in S \cup \{a\}}\sum_{j \in S}\sum_{k = 1}^q\sum_{l = 1}^q (x_{ik} - m_{ik}) (x_{jl} - m_{jl}) \omega_{lk}^{\col} \omega_{ij}^{\row} \\
        &+ \sum_{i \in S \cup \{a\}}\sum_{k = 1}^q\sum_{l = 1}^q (x_{ik} - m_{ik}) (x_{al}-m_{al}) \omega_{lk}^{\col} \omega_{ia}^{\row} \\
        &- \sum_{i  \in S}\sum_{j \in S}\sum_{k = 1}^q\sum_{l = 1}^q (x_{ik} - m_{ik}) (x_{jl} - m_{jl}) \omega_{lk}^{\col} \omega_{ij}^{\row}\\
        =& \sum_{i \in S}\sum_{j \in S}\sum_{k = 1}^q\sum_{l = 1}^q (x_{ik} - m_{ik}) (x_{jl} - m_{jl}) \omega_{lk}^{\col} \omega_{ij}^{\row} \\
        &- \sum_{i  \in S}\sum_{j \in S}\sum_{k = 1}^q\sum_{l = 1}^q (x_{ik} - m_{ik}) (x_{jl} - m_{jl}) \omega_{lk}^{\col} \omega_{ij}^{\row}\\
        &+ \sum_{j \in S}\sum_{k = 1}^q\sum_{l = 1}^q (x_{ak} - m_{ak}) (x_{jl} - m_{jl}) \omega_{lk}^{\col} \omega_{aj}^{\row} \\
        &+ \sum_{i \in S }\sum_{k = 1}^q\sum_{l = 1}^q (x_{ik} - m_{ik}) (x_{al} - m_{al}) \omega_{lk}^{\col} \omega_{ia}^{\row} \\
        &+ \sum_{k = 1}^q\sum_{l = 1}^q (x_{ak} - m_{ak}) (x_{al} - m_{al}) \omega_{lk}^{\col} \omega_{aa}^{\row}\\
        =& 2\sum_{i \in S}\sum_{k = 1}^q\sum_{l = 1}^q (x_{al} - m_{al}) (x_{ik} - m_{ik}) \omega_{lk}^{\col} \omega_{ia}^{\row} \\
        &+ \sum_{k = 1}^q\sum_{l = 1}^q (x_{ak} - m_{ak}) (x_{al} - m_{al}) \omega_{lk}^{\col} \omega_{aa}^{\row}.
    \end{align*}
    Now the coordinates $\phi_a(\bm{X})$ of the Shapley value $\bm{\phi}(\bm{X})$ are given by $(w(\abs{S}) = \frac{\abs{S}!(p-\abs{S}-1)!}{p!})$
    \begin{align*}
        \phi_a(\bm{X}) =& \sum_{S \subseteq P \setminus \{a\}} w(\abs{S})  \Delta_a\mmd(\hat{\bm{X}}^{S})\\
        =& 2\sum_{S \subseteq P \setminus \{a\}} w(\abs{S})  \sum_{i \in S}\sum_{k = 1}^q\sum_{l = 1}^q (x_{al}-m_{al}) (x_{ik}-m_{ik}) \omega_{lk}^{\col} \omega_{ia}^{\row}\\
        &+ \sum_{S \subseteq P \setminus \{a\}} w(\abs{S})\sum_{k = 1}^q\sum_{l = 1}^q (x_{ak} - m_{ak}) (x_{al} - m_{al}) \omega_{lk}^{\col} \omega_{aa}^{\row}
    \end{align*}
    and we can simplify the first term of the sum as
    \begin{align*}
        & 2\sum_{S \subseteq P \setminus \{a\}} w(\abs{S})  \sum_{i \in S}\sum_{k = 1}^q\sum_{l = 1}^q (x_{al}-m_{al}) (x_{ik}-m_{ik}) \omega_{lk}^{\col} \omega_{ia}^{\row}\\
        &= 2 \sum_{s = 1}^{p-1} w(\abs{S})  \sum_{S \subseteq P \setminus \{a\}, \abs{S} = s} \sum_{i \in S}\sum_{k = 1}^q\sum_{l = 1}^q (x_{al}-m_{al}) (x_{ik}-m_{ik}) \omega_{lk}^{\col} \omega_{ia}^{\row}\\
        &= 2 \sum_{s = 1}^{p-1} \sum_{k = 1}^q\sum_{l = 1}^tw(\abs{S})  \sum_{S \subseteq P \setminus \{a\}, \abs{S} = s} \sum_{i \in S} (x_{al}-m_{al}) (x_{ik}-m_{ik}) \omega_{lk}^{\col} \omega_{ia}^{\row}\\
        &= 2 \sum_{s = 1}^{p-1} \sum_{k = 1}^q\sum_{l = 1}^q\frac{\abs{S}!(p-\abs{S}-1)!}{p!} \binom{p-2}{s-1}\sum_{i \in P \setminus \{a\}} (x_{al}-m_{al}) (x_{ik}-m_{ik}) \omega_{lk}^{\col} \omega_{ia}^{\row}\\
        &= 2 \frac{1}{p(p-1)}\sum_{s = 1}^{p-1}  s \sum_{k = 1}^q\sum_{l = 1}^q\sum_{i \in P \setminus \{a\}} (x_{al}-m_{al}) (x_{ik}-m_{ik}) \omega_{lk}^{\col} \omega_{ia}^{\row}\\
        &= 2   \frac{1}{p(p-1)}\frac{p(p-1)}{2} \sum_{k = 1}^q\sum_{l = 1}^q\sum_{i \in P \setminus \{a\}} (x_{al}-m_{al}) (x_{ik}-m_{ik}) \omega_{lk}^{\col} \omega_{ia}^{\row} \\
        &= \sum_{k = 1}^q\sum_{l = 1}^q\sum_{i \in P \setminus \{a\}} (x_{al}-m_{al}) (x_{ik}-m_{ik}) \omega_{lk}^{\col} \omega_{ia}^{\row}.
    \end{align*}
    Since the second term is independent of the subset $S$ and $\sum_{S \subseteq P \setminus \{a\}} w(\abs{S}) = 1$, we obtain 
    \begin{align*}
        \phi_a(\bm{X}) =& \sum_{k = 1}^q\sum_{l = 1}^q\sum_{i \in P \setminus \{a\}} (x_{al}-m_{al}) (x_{ik}-m_{ik}) \omega_{lk}^{\col} \omega_{ia}^{\row} \\
        &+ \sum_{k = 1}^q\sum_{l = 1}^q (x_{ak} - m_{ak}) (x_{al} - m_{al}) \omega_{lk}^{\col} \omega_{aa}^{\row} \\
        =& \sum_{i = 1}^p\sum_{k = 1}^q\sum_{l = 1}^q (x_{al}-m_{al}) (x_{ik}-m_{ik}) \omega_{lk}^{\col} \omega_{ia}^{\row},
    \end{align*}
    which completes the proof.
\end{proof}

\newpage
\section{Further simulation results}
\label{supplement:simulations}
    In order to select a simulation setting, one has to consider that the ML estimators for the parameters of the matrix-variate normal distribution employ an iterative algorithm, which is commonly initialized by setting either the rowwise or columnwise covariance matrix equal to the identity matrix~\citep{Dutilleul1999}. Therefore, identity covariance matrices will not be used for data generation as this could lead to an undesirable advantage for the estimation.

    To assess the quality of covariance estimation, we consider two additional measures to the KL divergence: the relative Frobenius error given as
    $$\frac{\norm{\hat{\bm{\Sigma}}^{\col} \otimes \hat{\bm{\Sigma}}^{\row}-\bm{\Sigma}^{\col} \otimes \bm{\Sigma}^{\row}}_F}{\norm{\bm{\Sigma}^{\col} \otimes \bm{\Sigma}^{\row}}_F},$$
    and angle error between eigenvalues given as 
    $$1-\frac{\hat{\bm{a}}^\top\bm{a}}{\sqrt{\hat{\bm{a}}^\top\hat{\bm{a}}}\sqrt{\bm{a}^\top\bm{a}}},$$
    where $\hat{\bm{a}}$ and ${\bm{a}}$ are the vectors of sorted eigenvalues of $\hat{\bm{\Sigma}}^{\col} \otimes \hat{\bm{\Sigma}}^{\row}$ and $\bm{\Sigma}^{\col} \otimes \bm{\Sigma}^{\row}$, respectively. Large values of the KL divergence and the relative Frobenius error indicate difficulties in the estimation of the covariances. 
    The angle error between the eigenvalues is in the interval $[0,1]$, and a large value means that the shape of the covariance matrix is not appropriately estimated.
    To assess the efficacy of outlier detection, we include the F-score in addition to precision and recall. The F-score is defined as the harmonic mean of precision and recall, where precision denotes the proportion of correctly identified outliers among all detected samples, while recall represents the proportion of correctly identified outliers among all contaminated samples.
    The \texttt{R} code of the simulations and all simulation results are available in the online supplement. 

    \subsection{Effects of dimensionality and computation time}  

    We start by considering additional metrics for the simulations discussed in Section~\ref{section:simulations}. Figure~\ref{fig:simulation_pq_ratio_line2} shows the F-score in addition to precision and recall. The F-score shows that for $n = 100$ and increasing dimensionality the robust MMCD estimators and the MLEs yield similar results. This is due to an increasing recall of the MLEs and a decreasing precision of the MMCD estimators. For $n = 1000$, the F-score of the MMCD estimators is close to the benchmark and for the MCD we see the advantage of using the deterministic MCD approach over the Fast-MCD method with increasing sample size. In Figure~\ref{fig:simulation_pq_ratio_line3} we see that the MCD performs best in all settings across all evaluation measures. For the MCD we do not see a difference in the KL divergence when swapping to the deterministic procedure. However, the angle error between eigenvalues shows clear improvements, indicating that the estimation of the shape of the covariance matrix improves. Both in terms of the angle and Frobenius error, the MCD estimator attains better scores than the MLEs even for higher $pq$, while the MLEs have better KL divergence.  

    We also analyze the computation times of the estimators in this setting. Figure~\ref{fig:simulation_pq_ratio_time} clearly shows that computation time depends on the dimensionality of the matrix-variate samples and the number of samples for all approaches. The relative increases of computation time of the matrix MLEs and the MMCD estimators are similar for $n \in \{20,100,300\}$ but for $n = 1000$ the relative increase in computation time for the matrix MLEs is larger than for the MMCD estimators, highlighting the effectiveness of the subsampling approach with increasing sample size. For the MCD, we observe a decrease in computation time when $pq > 300$ since the deterministic MCD is used instead of the Fast-MCD procedure. However, computing the MCD still takes longer than the MMCD approach. Hence, the matrix-variate approach does yield higher robustness and more accurate covariance estimation with shorter computation times. Although parallel processing is available for the MMCD procedure, it was not utilized in the simulations to ensure better comparability for the algorithms. Depending on the number of available threads, parallel processing yields substantial improvements in computation time.

    \subsection{Cellwise and block contamination}

    We also consider the additional metrics for the simulations comparing the three different contamination types in Figures~\ref{fig:simulation_contamination_type2} and \ref{fig:simulation_contamination_type3}. The robustness of the MMCD estimators is again confirmed using all three metrics assessing the quality of the covariance estimation. The angle error reveals that the cell contamination has less effect on the shape of the covariance matrix than the other two scenarios and that all three estimators seemingly do a good job of estimating the covariance shape. For block contamination, the MCD yields better results than the MLEs with increasing sample size and even gets close to the MMCD in terms of angle error. 

    \begin{remark}
        Our cell contamination setting does not correspond to the setting of cellwise outliers \citep{alqallaf2009propagation}. We first select a subset of outlying observations and permute the cells for this selection while \cite{alqallaf2009propagation} select a fraction of all cells from all samples. In our setting, we can guarantee that only 10 percent of the samples are contaminated while the cellwise contamination scheme of \cite{alqallaf2009propagation} would likely lead to more than half of the samples being contaminated. 
    \end{remark}
        
    In further simulations, we considered different fractions of contaminated samples as well as multiple rowwise and columnwise covariance matrices for cellwise and block contamination. Additionally, we analyzed the effect of the fraction of permuted cells per observation for cell contamination, and for block contamination, we considered different mean matrices. Those simulation results are not discussed here but are available in the online supplement.    
    
    \subsection{Shift outliers}
    \label{subsection:simulation_shift_outliers}

    For shift outliers, we include an in-depth analysis of the effect of the various simulation parameters. 
    The simulations involve generating regular and outlying samples from a matrix normal distribution. A fraction, $\varepsilon$, of the clean data is replaced by outliers. The clean observations are drawn from a centered distribution, while the mean of the outliers shifts based on the parameter $\gamma$, i.e., the mean of the outliers is set to a matrix with all entries equal to $\gamma$.
    Three types of covariance matrices are considered:
    The covariance matrix $\bm{\Sigma}^{\rnd}$, as proposed by \cite{agostinelli2015robust}, is randomly generated with low correlations. The covariance matrix $\bm{\Sigma}^{\fix}(0.7)$ induces a relatively collinear setting, with entries defined as:
    \begin{align*}
            \sigma_{jk}^{\fix}(0.7) =  \begin{cases}
                1 & \text{ if } j = k \\
                0.7 & \text{ if } j \neq k
            \end{cases}.
        \end{align*}
    The covariance matrix $\bm{\Sigma}^{\mix}(0.7)$ exhibits both large and small correlations, featuring entries as follows:
     \begin{align*}
            \sigma_{jk}^{\mix}(0.7) =  \begin{cases}
                1 & \text{ if } j = k \\
                0.7^{\abs{j-k}} & \text{ if } j \neq k
            \end{cases}.
        \end{align*}   
    While maintaining the same covariance structure for both outliers and clean samples, we explore the impact of increasing the outlier covariance by scaling the covariance of clean observations by the parameter $s$. Each simulation setting is replicated 100 times.
    Unless specified otherwise, we set $\bm{\Sigma}^{\row} = \bm{\Sigma}^{\rnd}$, $\bm{\Sigma}^{\col} = \bm{\Sigma}^{\mix}(0.7)$, and $s = 1$ as detailed in Section~\ref{section:simulations}. An overview of all parameters for the simulations is provided in Table~\ref{tab:sim_settings}. For $(p,q) = (5,20)$, all listed parameter combinations are considered, while for $(p,q) \in {(50,20),(100,50)}$, we only consider $s = 1$. 
    
    \begin{table}[htp]
        \centering
        \begin{tabular}{ll}
            \toprule
            Parameter & Parameter values\\ 
            \midrule
            Sample size $n$ & $20, 50, 100, 200, 300, 400, 500, 750, 1000$\\
            Contamination $\varepsilon$ & $0.1,0.2,0.3,0.4$\\
            Rowwise covariance $\bm{\Sigma}^{\row}$ & $\bm{\Sigma}^{\fix}(0.7),\bm{\Sigma}^{\rnd}$\\
            Columnwise covariance $\bm{\Sigma}^{\col}$ & $\bm{\Sigma}^{\mix}(0.7),\bm{\Sigma}^{\rnd}$\\
            Mean shift $\gamma$ & $1,2,3,4,5$\\
            Covariance multiplier $s$ & $1,2,3,4$\\
            \bottomrule
        \end{tabular}
        \caption{Parameters considered for the simulations with $p,q = (5,20)$.}
        \label{tab:sim_settings}
    \end{table}

    We analyze the effect of the mean shift in a setting with contamination of $\varepsilon = 0.2$ and compare $\gamma = 1$ and $\gamma = 3$. In the upper row of Figure~\ref{fig:simulation_line_plot1.1}, the boxplots depict F-scores across various parameter configurations. Notably, the MMCD estimators exhibit improved performance as sample sizes increase across all settings, consistently outperforming ML estimators. However, for $(p,q) = (5,20)$, in a scenario involving a minor mean shift, the F-scores derived from MMCD exhibit some volatility with larger sample sizes. This situation arises due to the proximity of outliers to regular observations, posing challenges in their identification. Notably, a more pronounced mean shift significantly simplifies outlier detection. 
    Moreover, we see that the recall of the MMCD estimators is close to one across all settings, except for $(p,q) = (5,20)$ and a small mean shift. The MLE estimators only detect the most severe outliers due to the masking effect, leading to a median recall below $0.25$ across all settings. With an increasing sample size, the precision of the MMCD is improving and has very low variability. On the other hand, the MLE shows very unstable results. 

    Figure~\ref{fig:simulation_line_plot1.2} presents the scores depicting covariance estimation. 
    For the MMCD estimators the covariance estimation performance is improving with the sample size across all settings. On the other hand, the sample size has a negligible effect on the quality of the MLE estimators in the presence of outliers and a larger mean shift decreases performance. For small sample sizes, MLE and MMCD estimators are close in terms of KL divergence, but the angle error and Frobenius error indicate worse performance of the MLE estimators also for small sample sizes. The relative Frobenius error of MMCD estimators is smaller than one and thus only plotted on $[0,1]$. For the MLE estimators, it is often above one and those settings are not visible in plots.     

    Figure~\ref{fig:simulation_lineplot_eps_low_vs_high} shows the difference between a contamination of $\varepsilon = 0.1$ and $\varepsilon = 0.4$ with mean shift $\gamma = 1$. The KL divergence reveals that the MMCD estimator yields more accurate results across all settings. However, for $\varepsilon = 0.1$, the F-scores of the MLE are increasing with the dimensionality and perform better than the MMCD for small sample sizes. For $\varepsilon = 0.4$, only the MMCD yields reliable results.

    For the setting with $(p,q) = (5,20)$ and $\varepsilon = 0.2$, we also computed the MCD on the vectorized samples in addition to the matrix MLE and MMCD and considered the true mean and covariance used to generate the data as a benchmark. Figures~\ref{fig:simulation_lineplot_low_dim_cov} and \ref{fig:simulation_lineplot_low_dim_outlier} summarize the results and reveal that the MCD on the vectorized observations does not lead to robust estimators. This issue arises because the robustness of the MCD and MMCD depends on the dimensionality of the data. For the MCD it depends on $p \cdot q$ and for the MMCD it depends on $\nicefrac{p}{q} + \nicefrac{q}{p}$.    
    To achieve a $99\%$ probability of obtaining at least one clean initial subset with $(p,q) = (5,20)$ and a contamination $\varepsilon = 0.2$, MCD requires approximately $2.8\cdot10^{10}$ initial subsets, while MMCD only needs 16. 
    For the setting with the smallest mean shift, the comparison between MMCD and the actual parameters in Figure~\ref{fig:simulation_lineplot_low_dim_outlier} highlights the difficulty of this setting since even using the actual parameters; the recall shows a lot of variability. 

    In addition to shifting the mean of the outliers by $\gamma \in \{1,\dots,5\}$, we now consider the effect of scaling the covariance by $s \in \{1,\dots,4\}$. The difference between $s \in \{2,3,4\}$ was negligible and Figures~\ref{fig:simulation_lineplot_low_dim_radial_cov} and \ref{fig:simulation_lineplot_low_dim_radial_outlier} summarize the results for $s = 2$. While the MLE performs quite well for outlier detection, especially compared to the setting with $s = 1$ (see Figure~\ref{fig:simulation_lineplot_eps_low_vs_high}), the estimated covariance matrices are not accurate. The overall performance of the MCD computed on the vectorized samples improves with increasing sample size $n$ but even more samples would be necessary to obtain similar results to the MMCD. 

    Finally, we compare the 4 different combinations of row- and columnwise covariance matrices with $\varepsilon = 0.2$. In Figure~\ref{fig:simulation_lineplot_compare_cov_small_shift} we use $\gamma = 1$ and in Figure~\ref{fig:simulation_lineplot_compare_cov_large_shift} we increase the mean shift to $\gamma = 5$. The F-score based on the true parameters is included as a reference. When $\bm{\Sigma}^{\row} = \bm{\Sigma}^{\fix}(0.7)$, $\bm{\Sigma}^{\col} = \bm{\Sigma}^{\mix}(0.7)$, and $\gamma = 1$, the mean shift is too small and the outliers cannot be separated from the regular observations. Increasing the mean shift to $\gamma = 5$, the separation becomes clearer and the MMCD yields robust results. 
    If $\gamma = 1$, we still see a lot of variability in the F-score if only the rowwise or columnwise covariance matrix is generated randomly. However, if both are generated randomly the distinction between outliers and regular observations is easier.

    \subsubsection{Effects of fine-grained mean shifts}    
    To get a more in-depth view of the effect of the mean shift we consider a finer grid for the parameter $\gamma \in \{0.1,0.2,\dots,2\}$ for $n \in \{20,100,1000\}$, $(p,q) = (5,20)$, $\varepsilon = 0.1$. 
    In Figure~\ref{fig:simulation_low_gamma_line2}, we see that for $n = 20$, the MMCD has a low precision but an even higher recall than we can achieve using the actual parameters used to generate the data to compute the Mahalanobis distances for outlier detection. For larger sample sizes, the precision of the MMCD increases while the recall remains high, resulting in an F-score close to the one achieved by the actual parameters. For $n = 1000$, we also computed the MCD on the vectorized observations, it attains a higher recall than the matrix MLEs but lower precision and performs worse than the MMCD in all settings.
    Likewise, to the results for outlier detection, Figure~\ref{fig:simulation_low_gamma_line3} shows similar results for covariance estimation. While the MMCD performs best in most settings it shows potential for improvement for small $n$ and $\gamma$. The simulations also show that at a level of 10 percent contamination, even a small shift $\gamma$ of the outliers negatively impacts covariance estimation and, consequently, outlier detection due to the masking effect.

    \subsection{Beyond normality, contaminated t-distribution}
    
    To analyze the effect of deviations from the matrix normal distribution we consider samples from a matrix t-distribution. Similar to the matrix normal distribution, the matrix t-distribution is parameterized by a mean matrix, rowwise and columnwise covariance matrices, and degrees of freedom as an additional parameter, see \cite{gupta1999} for more details. We also consider the ML estimators for the matrix t-distribution proposed by \cite{thompson2020classification}, which are implemented in the \texttt{R} package \texttt{MixMatrix}. We consider samples from a $p \times q = 5 \times 20$ centered matrix t-distribution with $\nu \in \{1,\dots,30\}$ degrees of freedom with $\bm{\Sigma}^{\row} = \bm{\Sigma}^{\rnd} \in \pds(p)$ and $\bm{\Sigma}^{\col} = \bm{\Sigma}^{\mix}(0.7) \in \pds(q)$ for $n \in \{20,100,1000\}$, $(p,q) = (5,20)$, $\varepsilon \in \{0.1,0.2\}$. The outliers are generated from a shifted distribution with a mean matrix of all ones with the same covariance structure and the same degrees of freedom. In Figure~\ref{fig:simulation_t_distribuiton}, we analyze the influence of the degrees of freedom on precision, recall, angle error between eigenvalues, and the logarithm of the relative Frobenius error for various estimators, number of samples, and levels of contamination. 
    The angle and Frobenius error clearly show the advantage of the MMCD estimators for covariance estimation. 
    If the distribution of the samples is known, the consistency correction outlined in Theorem~\ref{theorem:consistency} allows us to obtain consistency for any matrix elliptical distribution. Since we do not know the underlying distribution in practice, we use the consistency factor for the normal model given in Equation~\eqref{eq:consistency_factor} which does affect the scale of the covariance but not the shape. This is also reflected in the difference between the angle and Frobenius error of the MMCD estimators and MLEs for the matrix t-distribution since the scale of the covariance has a more profound impact on the Frobenius error. In terms of angle error, the MMCD estimators perform better than the MLEs for the matrix t-distribution for all degrees of freedom $\nu$ while the MMCD shows high Frobenius errors for $\nu \leq 4$. 

    While the MMCD estimators and MLEs for the matrix t-distribution have a recall close to one in all settings, we see a difference in precision depending on the fraction of contaminated samples and the number of samples. For $\varepsilon = 0.1$, the MLEs for the matrix t-distribution show a steep increase in precision with rising degrees of freedom for all the sample sizes. On the other hand, for $\varepsilon = 0.2$, the precision is constant and low for all $n$. Similarly to the simulations based on the normal model, the precision of the MMCD estimators is low for $n = 20$ and remains low for increasing degrees of freedom. While the precision increases alongside the degrees of freedom for larger sample sizes it is still low. However, this is what we would expect since the matrix t-distribution has heavier tails than the matrix normal distribution and the mean shift is rather small, such that we do not see the full potential of the MMCD estimators even under the normal model, see Section~\ref{subsection:simulation_shift_outliers}. 

    Both the normal MLEs and the MCD estimators computed on the vectorized samples perform poorly for covariance estimation and outlier detection when the samples are generated from a matrix t-distribution.

    \begin{figure}
        \centering
        \includegraphics[width = 1\linewidth]{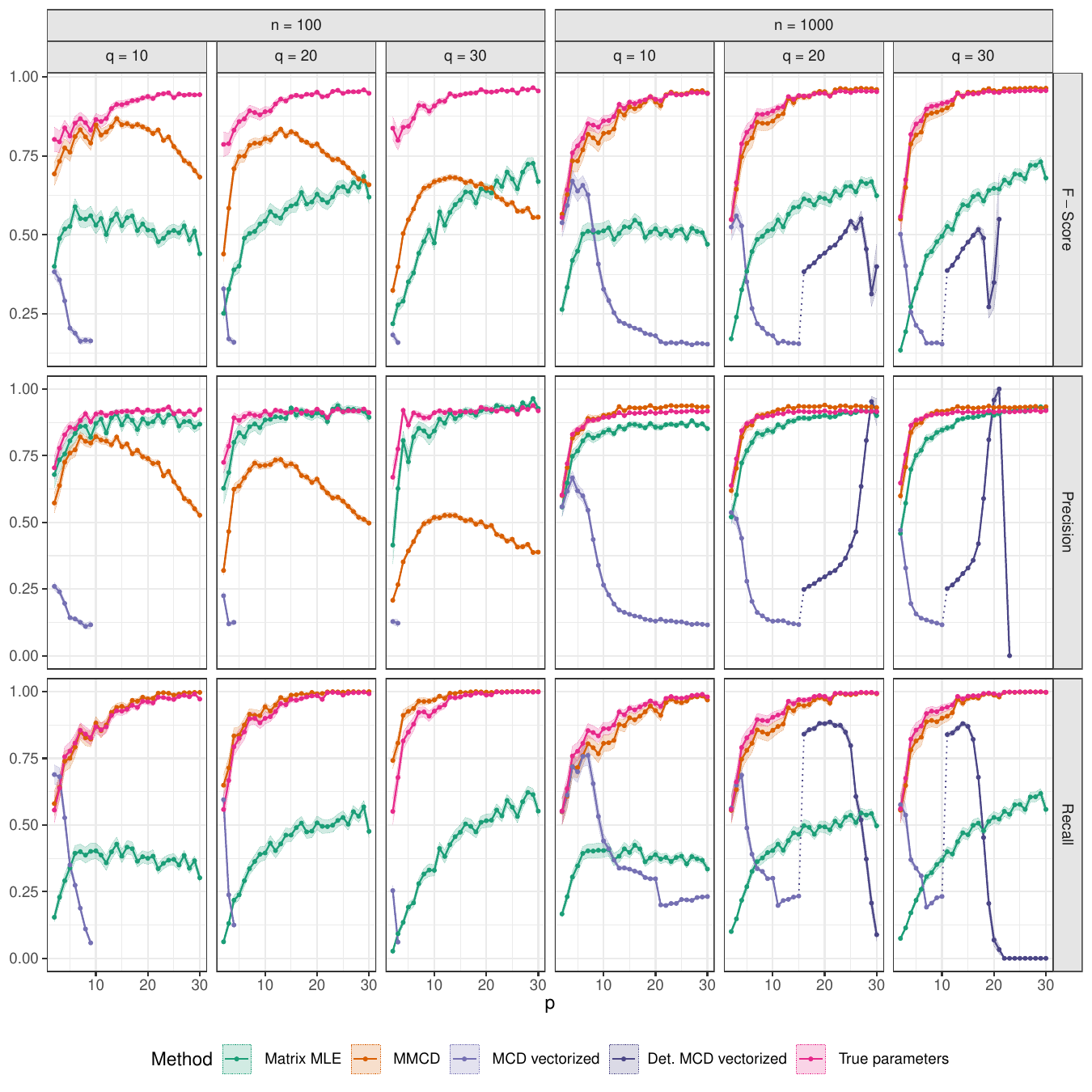}
        \caption{Outlier detection capabilities comparing multiple matrix sizes $p \in \{2,\dots,30\}$ and $q \in \{10,20,30\}$ for $n \in \{100,1000\}$, $\gamma = 1$, $\varepsilon = 0.1$.}
        \label{fig:simulation_pq_ratio_line2}
    \end{figure}
    
    \begin{figure}
        \centering
        \includegraphics[width = 1\linewidth]{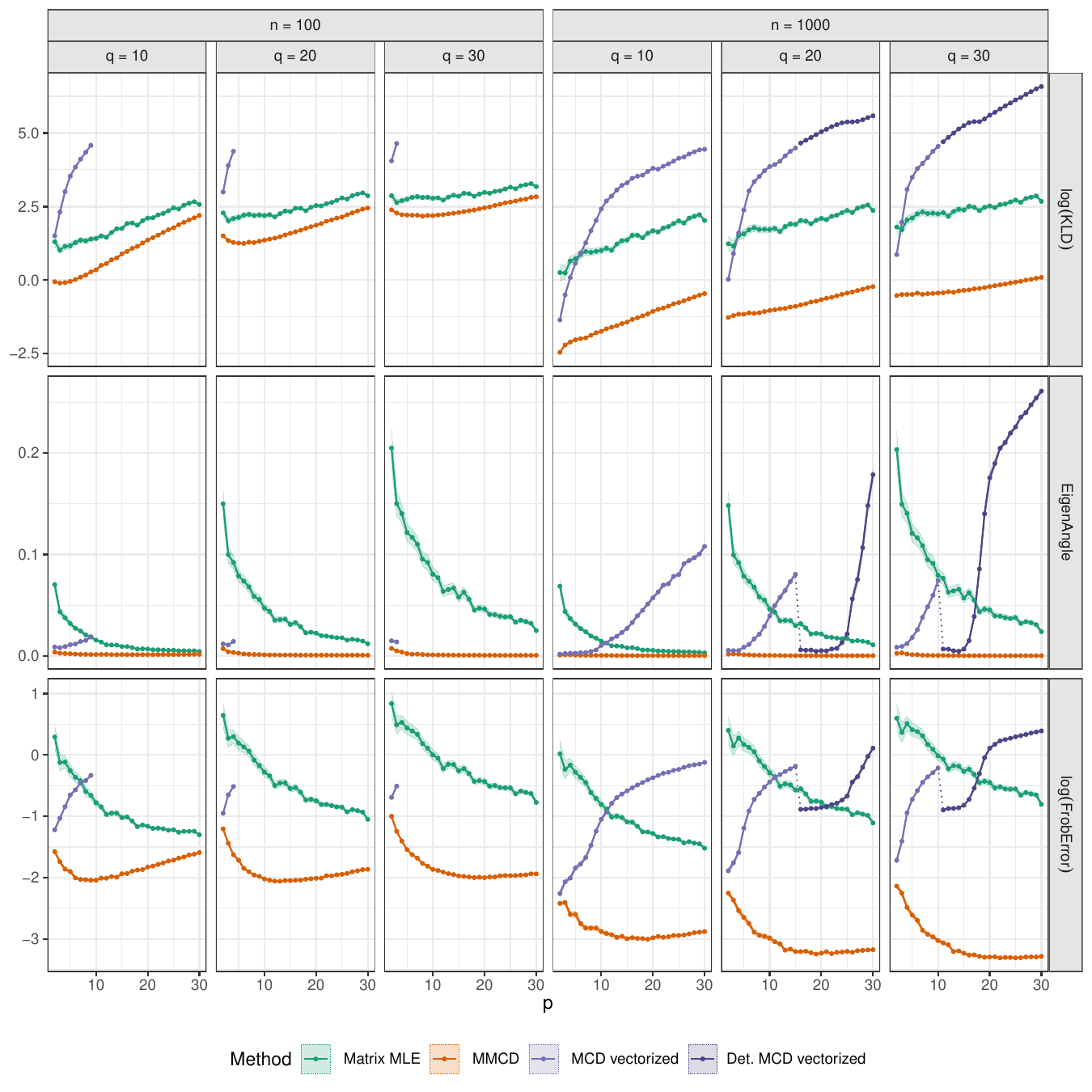}
        \caption{Quality of covariance estimation comparing multiple matrix sizes $p \in \{2,\dots,30\}$ and $q \in \{10,20,30\}$ for $n \in \{100,1000\}$, $\gamma = 1$, $\varepsilon = 0.1$.}
        \label{fig:simulation_pq_ratio_line3}
    \end{figure}

    \begin{figure}
        \centering
        \includegraphics[width = 1\linewidth]{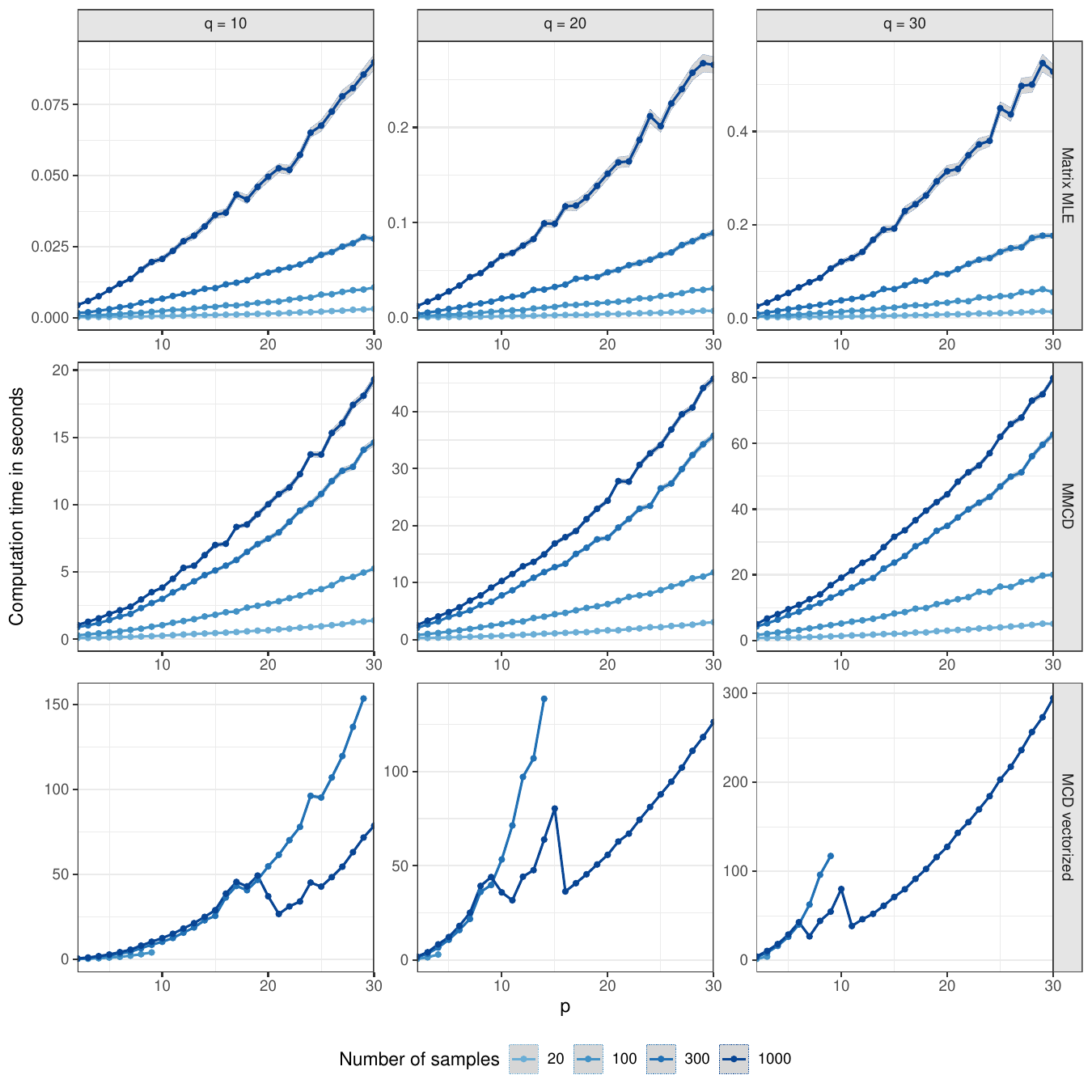}
        \caption{Comparison of computation time in seconds for multiple matrix sizes $p \in \{2,\dots,30\}$ and $q \in \{10,20,30\}$ for $n \in \{20,100,300,1000\}$.}
        \label{fig:simulation_pq_ratio_time}
    \end{figure}

    \begin{figure}
        \centering
        \includegraphics[width = 1\linewidth]{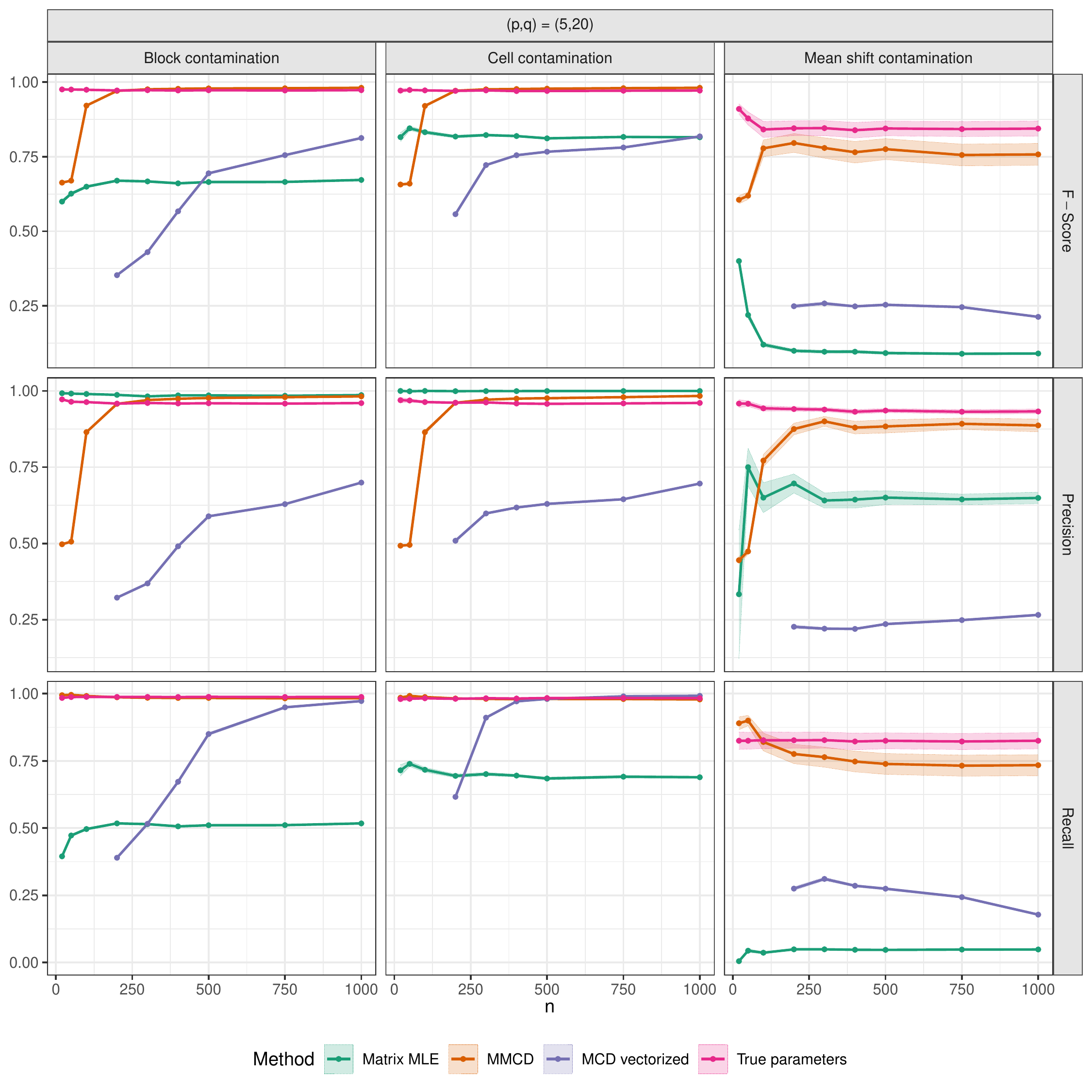}
        \caption{Quality of covariance estimation comparing block, cell, and sample contamination.}
        \label{fig:simulation_contamination_type2}
    \end{figure}

    \begin{figure}
        \centering
        \includegraphics[width = 1\linewidth]{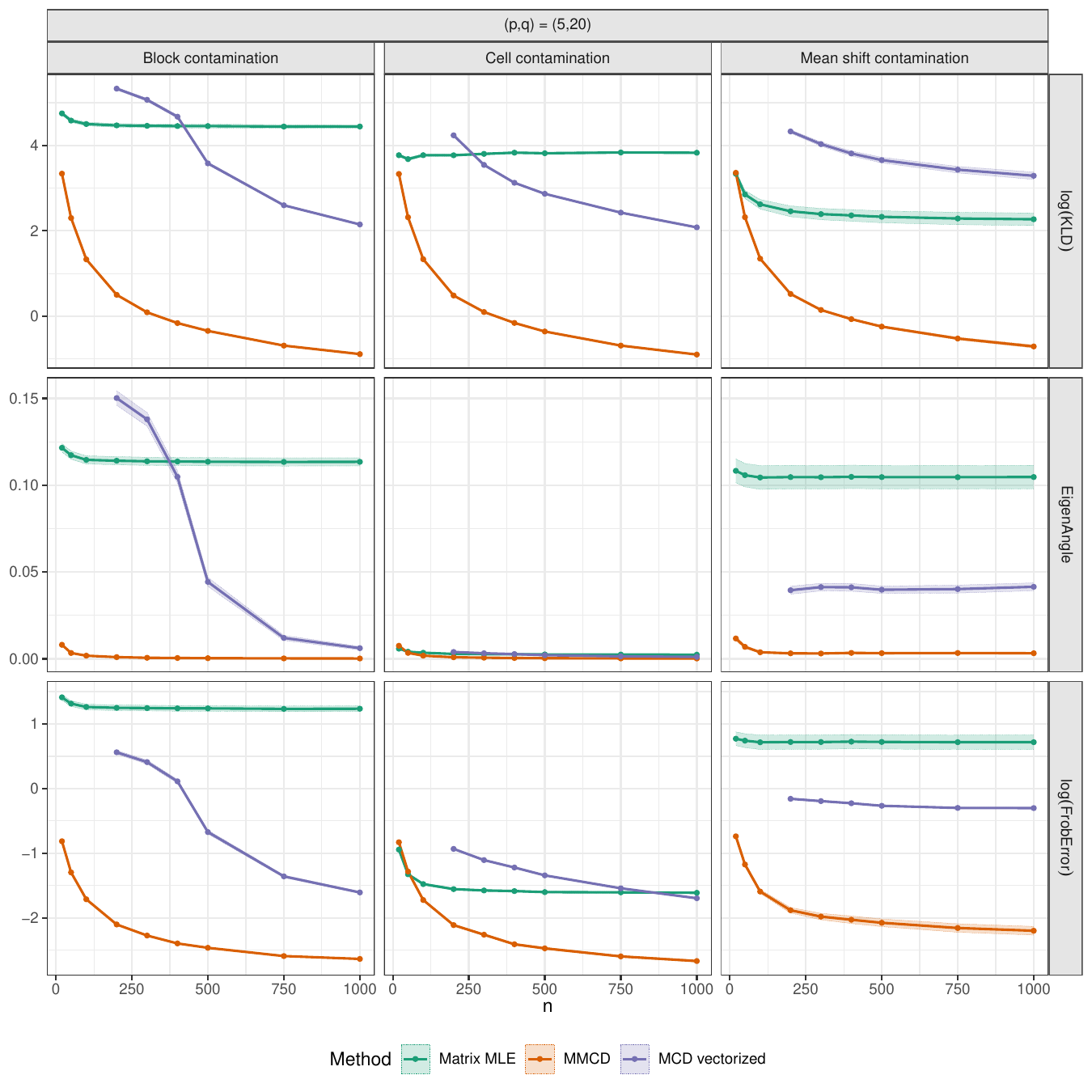}
        \caption{Outlier detection capabilities comparing block, cell, and sample contamination.}
        \label{fig:simulation_contamination_type3}
    \end{figure}
    
    \begin{figure}
        \centering
        \includegraphics[width = 1\linewidth]{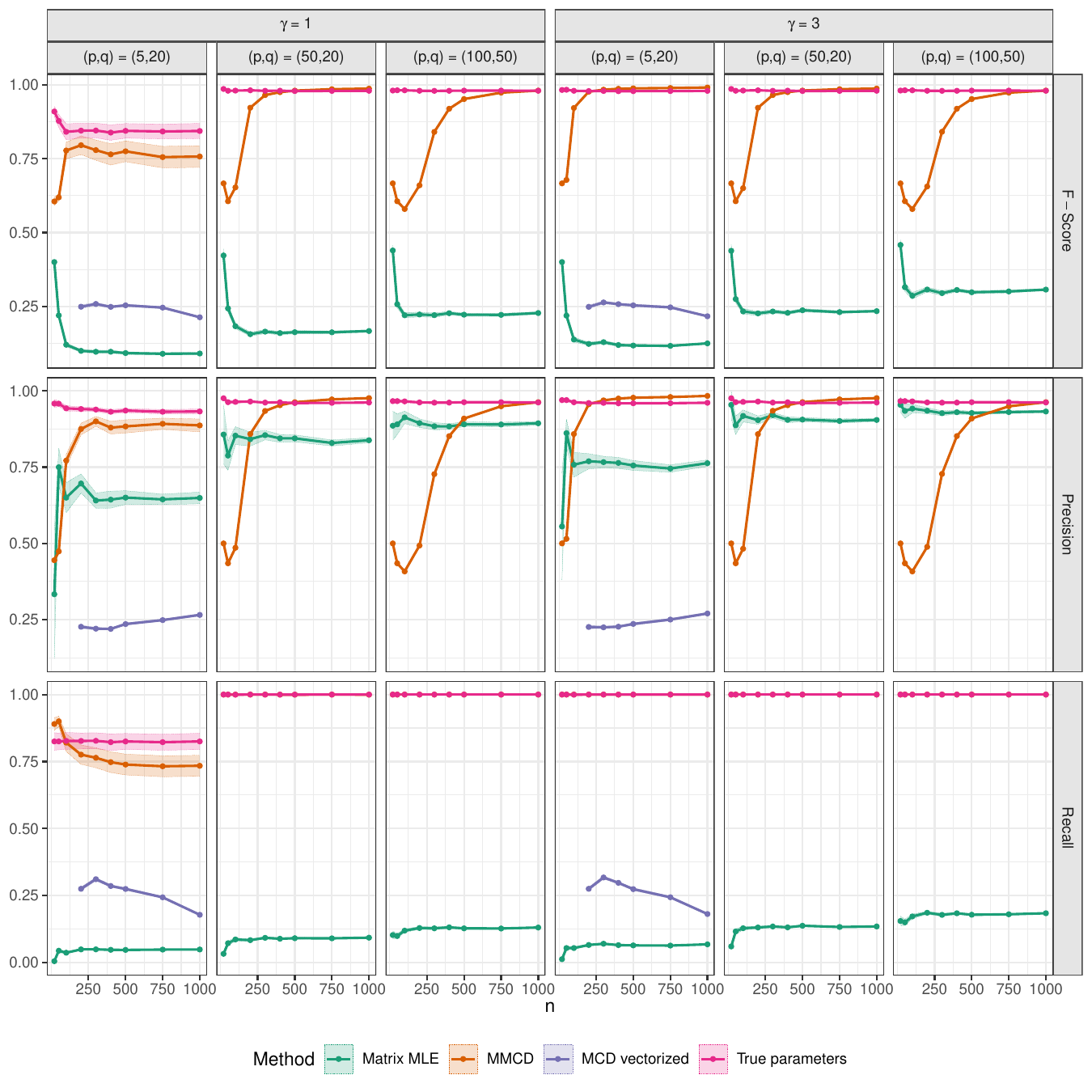}
        \caption{Overview of simulation results with a fraction $\varepsilon = 0.2$ of contaminated samples. The outlier detection capabilities are measured by F-score, precision, and recall.}
        \label{fig:simulation_line_plot1.1}
    \end{figure} 

    \begin{figure}
        \centering
        \includegraphics[width = 1\linewidth]{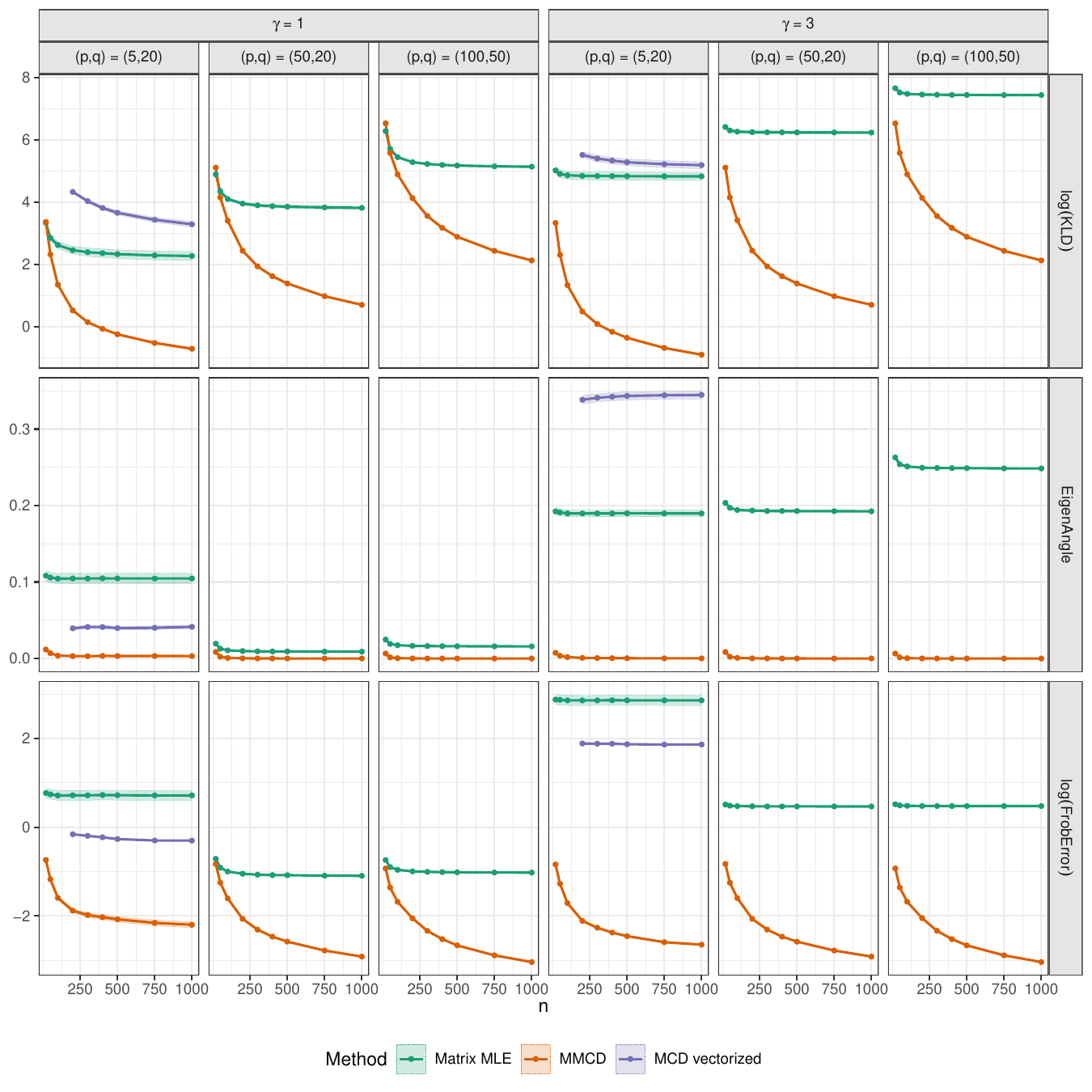}
        \caption{Overview of simulation results with a fraction $\varepsilon = 0.2$ of contaminated samples. The quality of covariance estimation is evaluated based on the logarithm of KL divergence, angle error between eigenvalues, and the logarithm of relative Frobenius error.}
        \label{fig:simulation_line_plot1.2}
    \end{figure}  
    
    \begin{figure}
        \centering
        \includegraphics[width = 1\linewidth]{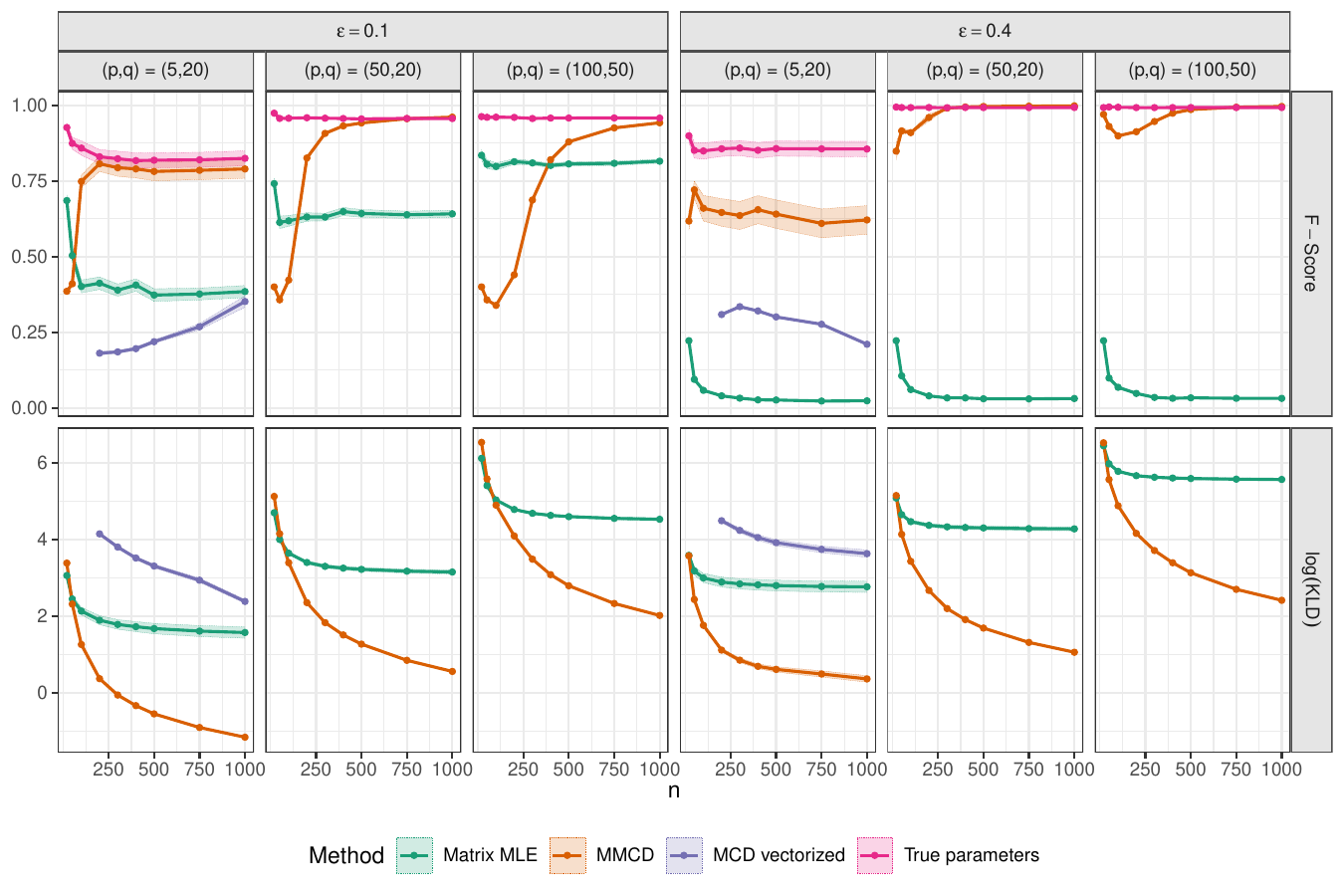}
        \caption{F-score and logarithm of KL divergence for simulations with mean shift $\gamma = 1$.}
        \label{fig:simulation_lineplot_eps_low_vs_high}
    \end{figure}   
    
    \begin{figure}
        \centering
        \includegraphics[width = 1\linewidth]{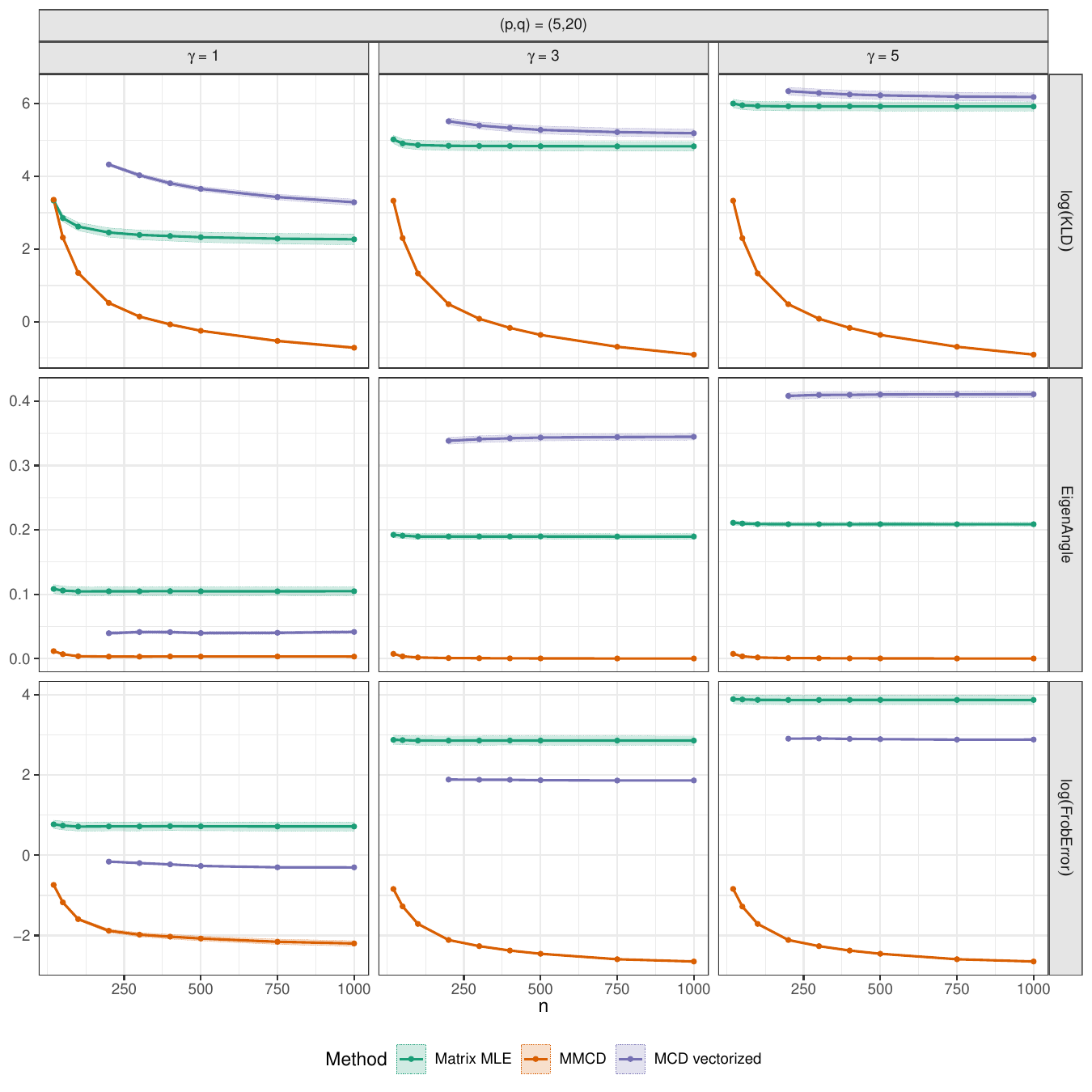}
        \caption{Quality of covariance estimation for simulations with $\varepsilon = 0.2$ and $(p,q) = (5,20)$.}
        \label{fig:simulation_lineplot_low_dim_cov}
    \end{figure}

    \begin{figure}
        \centering
        \includegraphics[width = 1\linewidth]{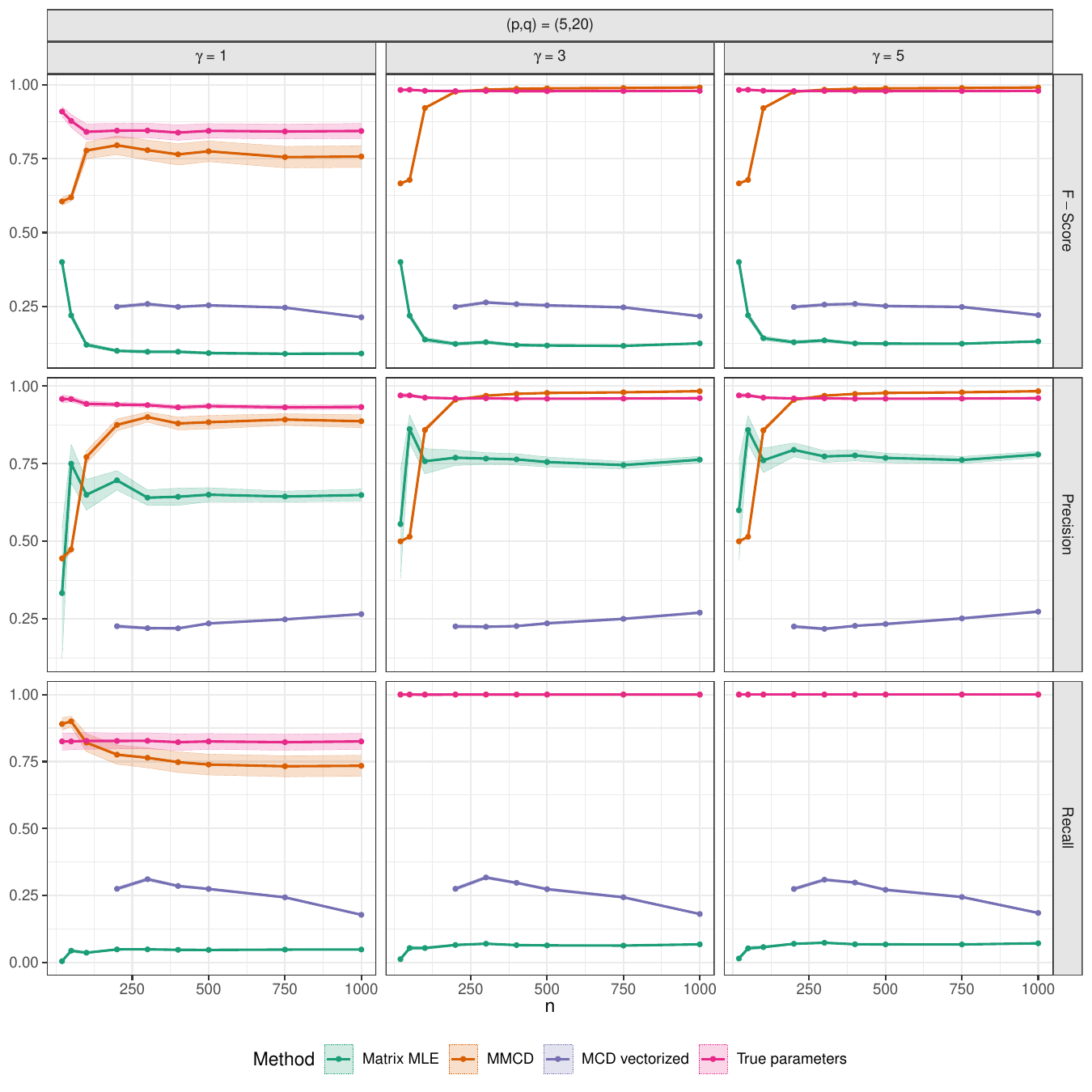}
        \caption{Outlier detection capabilities for simulations with $\varepsilon = 0.2$ and $(p,q) = (5,20)$.}
        \label{fig:simulation_lineplot_low_dim_outlier}
    \end{figure}    
    
    \begin{figure}
        \centering
        \includegraphics[width = 1\linewidth]{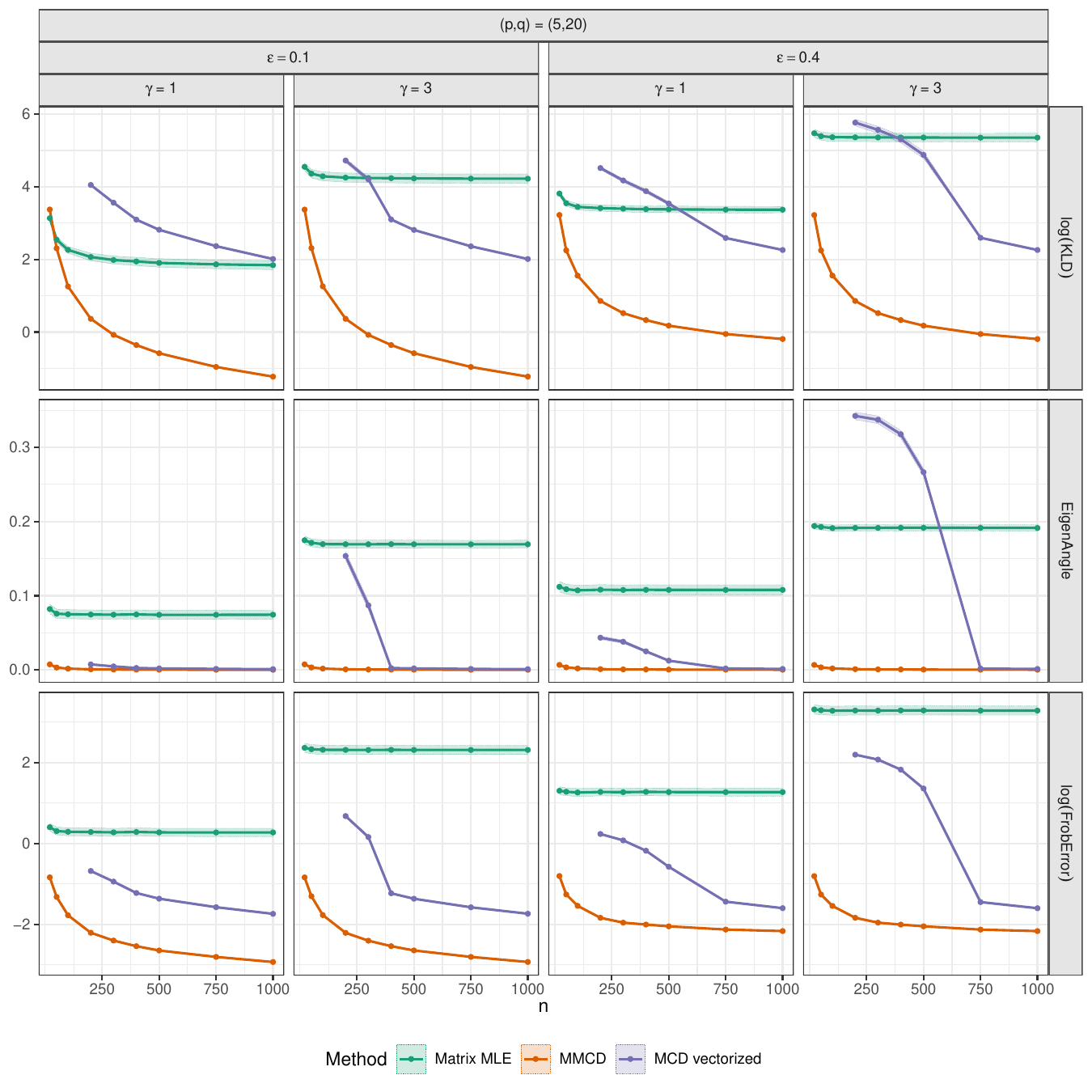}
        \caption{Quality of covariance estimation for simulations where the covariance of the outliers is scaled by $s = 2$.}
        \label{fig:simulation_lineplot_low_dim_radial_cov}
    \end{figure}

    \begin{figure}
        \centering
        \includegraphics[width = 1\linewidth]{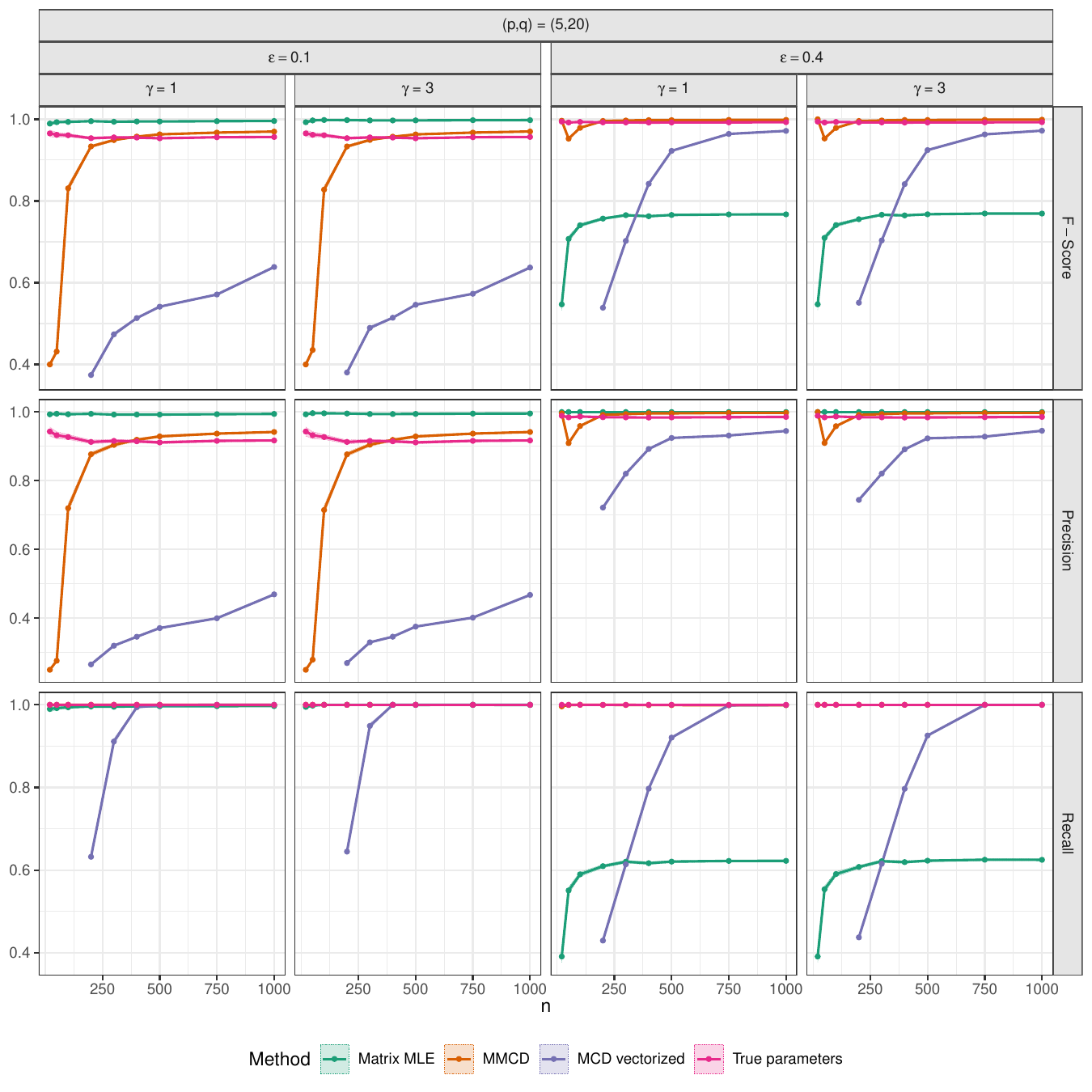}
        \caption{Outlier detection capabilities for simulations where the covariance of the outliers is scaled by $s = 2$.}
        \label{fig:simulation_lineplot_low_dim_radial_outlier}
    \end{figure}

    \begin{figure}
        \centering
        \includegraphics[width = 1\linewidth]{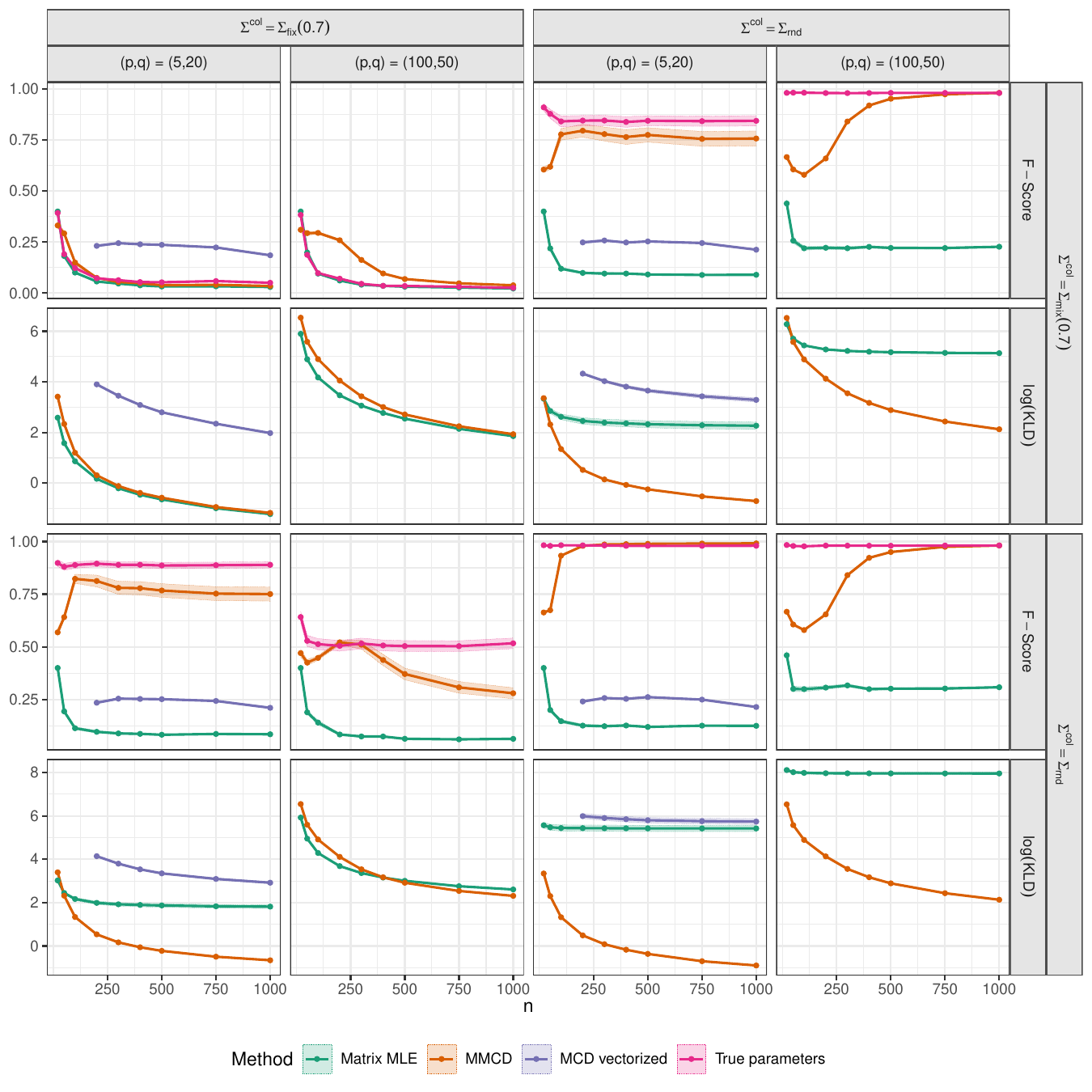}
        \caption{F-score and logarithm of KL divergence comparing 4 different combinations of row- and columnwise covariance matrices, $\gamma = 1$, and $\varepsilon = 0.2$.}
        \label{fig:simulation_lineplot_compare_cov_small_shift}
    \end{figure}

    \begin{figure}
        \centering
        \includegraphics[width = 1\linewidth]{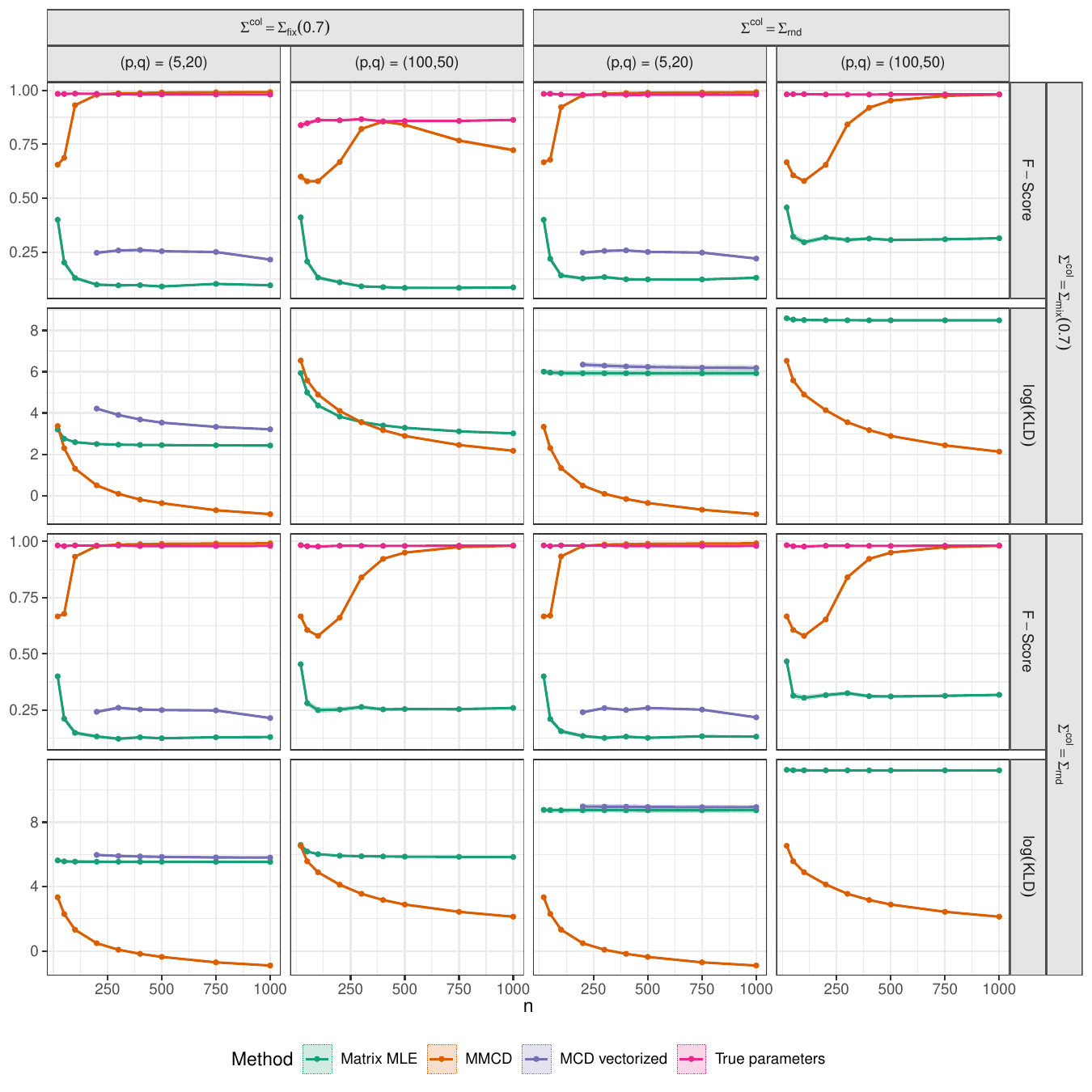}
        \caption{F-score and logarithm of KL divergence comparing 4 different combinations of row- and columnwise covariance matrices, $\gamma = 5$, and $\varepsilon = 0.2$.}
        \label{fig:simulation_lineplot_compare_cov_large_shift}
    \end{figure}

    \begin{figure}
        \centering
        \includegraphics[width = 1\linewidth]{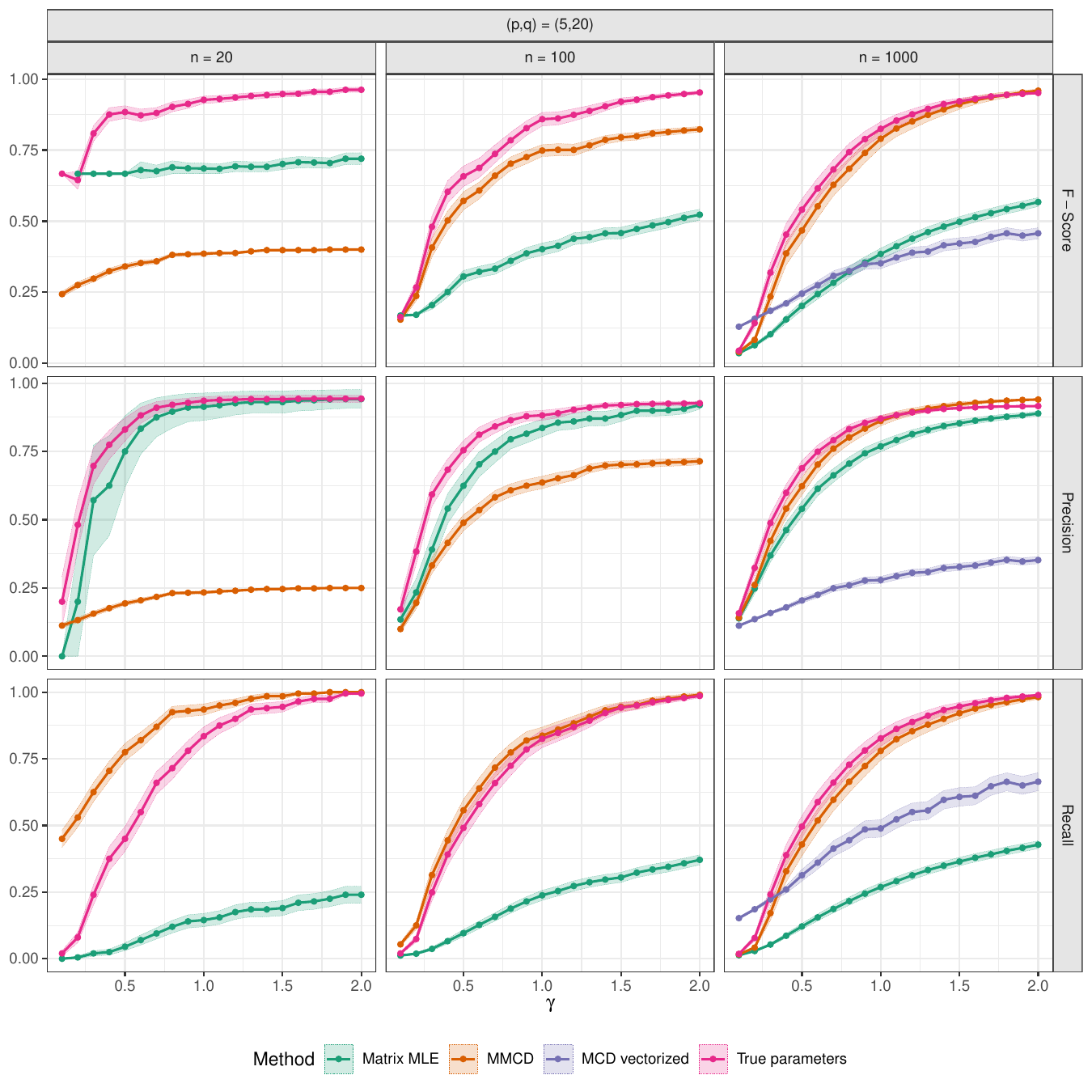}
        \caption{Outlier detection capabilities for simulations with mean shift $\gamma \in \{0.1,0.2,\dots, 2\}$ for $n \in \{20,100,1000\}$, $\varepsilon = 0.1$.}
        \label{fig:simulation_low_gamma_line2}
    \end{figure}

    \begin{figure}
        \centering
        \includegraphics[width = 1\linewidth]{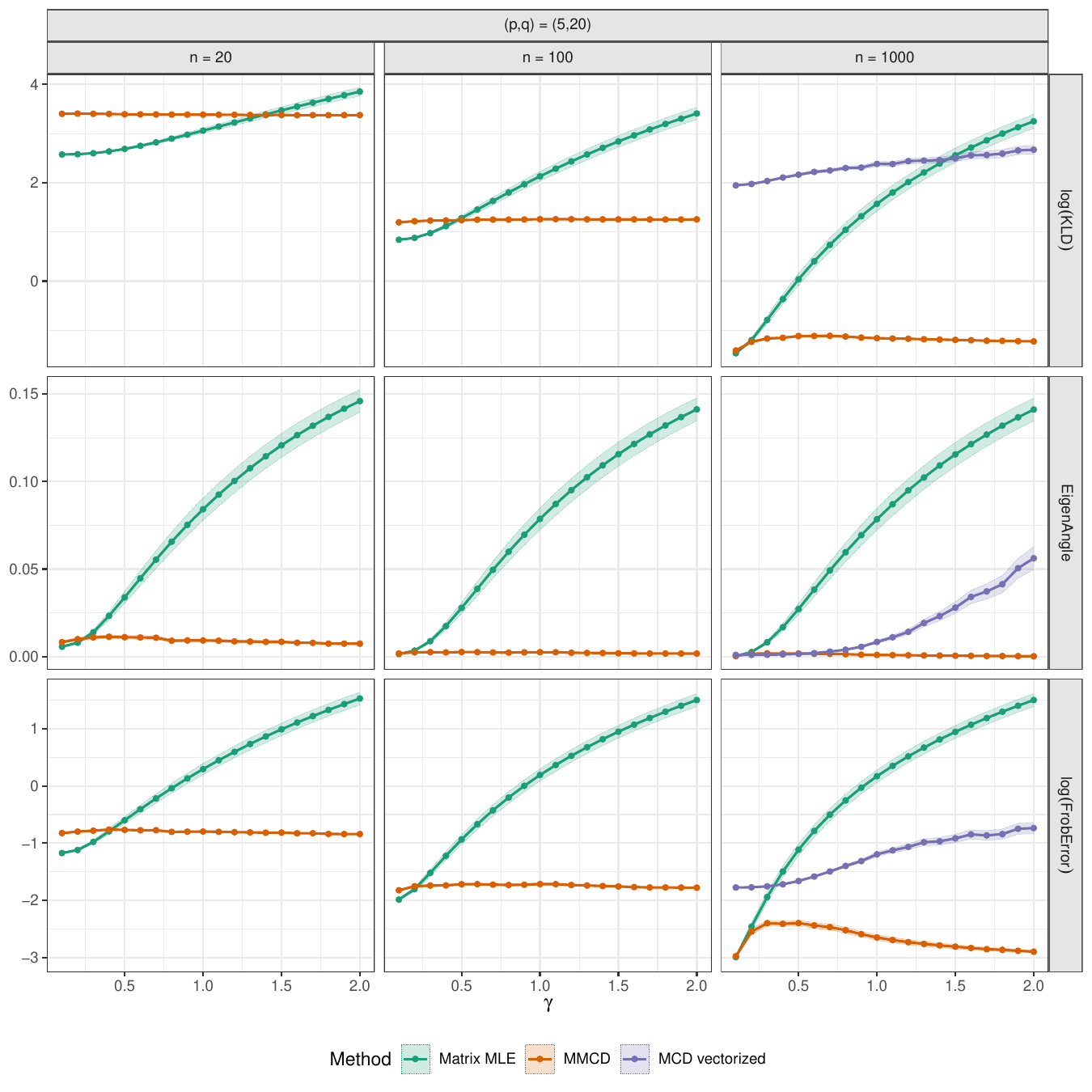}
        \caption{Quality of covariance estimation for simulations with mean shift $\gamma \in \{0.1,0.2,\dots, 2\}$ for $n \in \{20,100,1000\}$, $\varepsilon = 0.1$.}
        \label{fig:simulation_low_gamma_line3}
    \end{figure}

    \begin{figure}
        \centering
        \includegraphics[width = 1\linewidth]{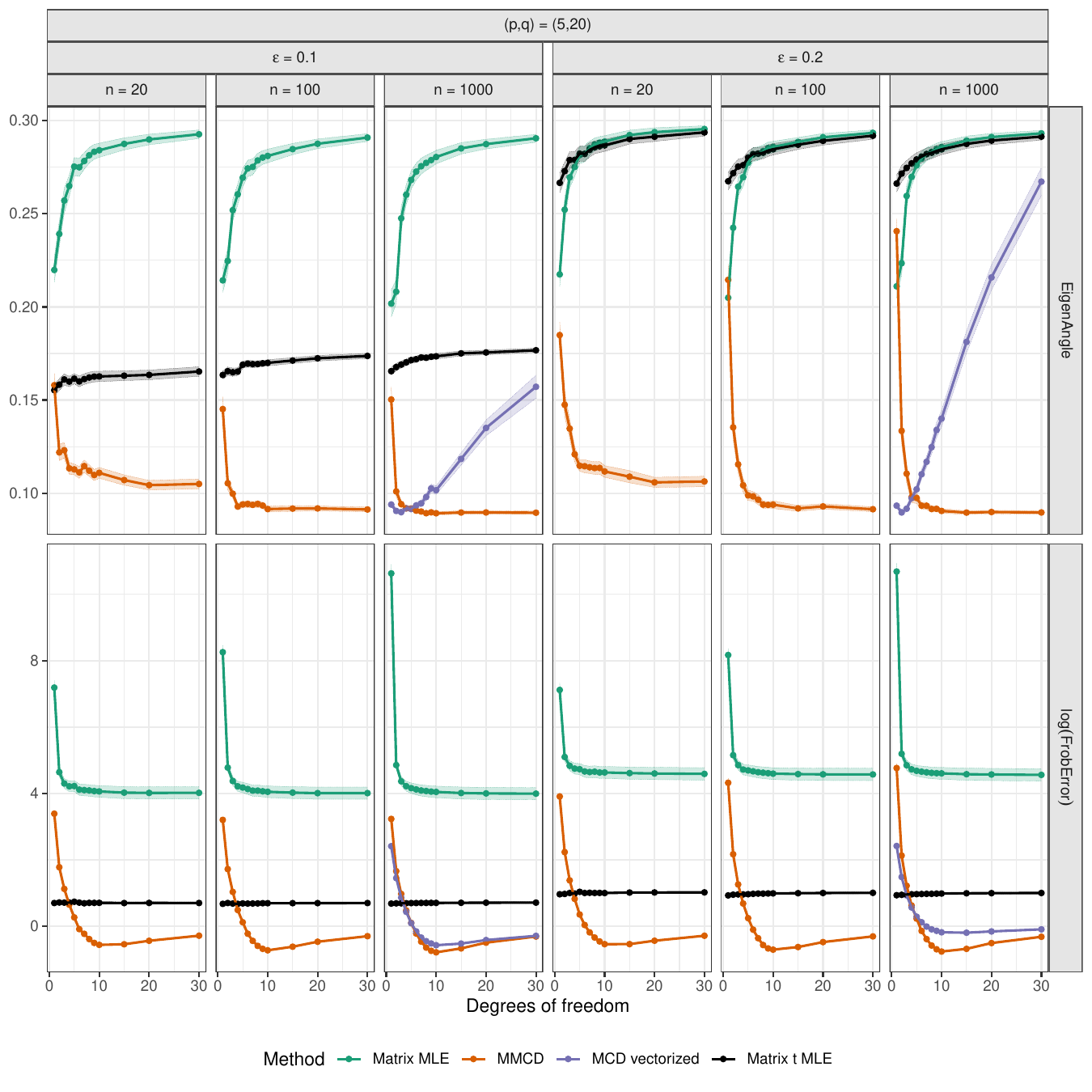}
        \caption{Precicion, recall, eigen angle and logarithm of relative Frobenius error of samples from a contaminated t-distribution with $\nu \in \{1,\dots,30\}$ degrees of freedom for $n \in \{20,100,1000\}$, $\gamma = 1$, $\varepsilon \in \{0.1,0.2\}$.}
        \label{fig:simulation_t_distribuiton}
    \end{figure}
\clearpage

\end{document}